\setlist[itemize,1]{label=$\bullet$}
\setlist[itemize,2]{label=$-$}
\setlist[itemize,3]{label=$*$}
\setlist[itemize,4]{label=$\cdot$}
\setlist[itemize,5]{label=$\bullet$}
\setlist[itemize,6]{label=$-$}
\setlist[itemize,7]{label=$*$}
\setlist[itemize,8]{label=$\cdot$}
\setlist[itemize,9]{label=$\bullet$}
\newcommand{\Zeta}{Z}
\newtheorem{theorem}{Theorem}
\newtheorem{lemma}[theorem]{Lemma}
\newtheorem{corollary}[theorem]{Corollary}
\newtheorem{remark}[theorem]{Remark}
\newtheorem{fact}[theorem]{Fact}
\newtheorem{definition}[theorem]{Definition}
\newtheorem*{lembranch}{Lemma~\ref{lem:branch}}
\newtheorem*{lempercolation}{Lemma~\ref{lem:percolation}}
\newtheorem*{lemtime}{Lemma~\ref{lem:time}}
\newcommand{\Eb}{\mathbf{E}}
\newcommand{\Ic}{\mathcal{I}}
\newcommand{\Psib}{\boldsymbol{\Psi}}
\newcommand{\Tmix}{T_\mathrm{mix}}
\newcommand{\Ent}{\mathrm{Ent}}
\newcommand{\emm}{\mathrm{e}}
\newcommand{\HardCore}{\textsc{Sample}}
\newcommand{\norm}[1]{\left\lVert#1\right\rVert}
\let\epsilon=\varepsilon
\title{Fast sampling via spectral independence\\beyond bounded-degree graphs\thanks{A preliminary version of the manuscript (without the proofs) appeared in the proceedings of \emph{ICALP 2022} (Track A). For the purpose of Open Access, the author has applied a CC BY public copyright licence to any Author Accepted Manuscript version arising from this submission. All data is provided in full in the results section of this paper.}} 
\author{
Ivona Bez\'{a}kov\'{a}\thanks{
  Department of Computer Science, Rochester Institute of Technology, Rochester, NY, USA. Supported in part by NSF grant DUE-1819546.} \and
Andreas Galanis\thanks{
  Department of Computer Science, University of Oxford, Wolfson Building, Parks Road, Oxford, OX1~3QD, UK.}
  \and
  Leslie Ann Goldberg$^\dag$
\and
 Daniel \v{S}tefankovi\v{c}\thanks{
Department of Computer Science, University of Rochester,
Rochester, NY 14627.  Research
supported by NSF grant CCF-2007287.}
}
\date{12 October 2023}
\begin{document}
\maketitle
\thispagestyle{empty}
\bibliographystyle{alpha}

\begin{abstract}
Spectral independence is a recently-developed framework for obtaining sharp bounds on the convergence time of the classical Glauber dynamics. This new framework has yielded optimal $O(n \log n)$  
sampling algorithms on bounded-degree graphs for a large class of problems throughout the so-called uniqueness regime,  including, for example, the problems of sampling independent sets, matchings, and Ising-model configurations.

Our main contribution is to relax the bounded-degree assumption that has so far been important in establishing and applying spectral independence. Previous methods for avoiding degree bounds rely on using $L^p$-norms to analyse contraction on graphs with bounded connective constant (Sinclair, Srivastava, Yin; FOCS'13). The non-linearity of $L^p$-norms is an obstacle to applying these results to bound spectral independence. Our solution is to capture the $L^p$-analysis recursively by amortising over the subtrees of the recurrence used to analyse contraction.  Our method generalises previous analyses that applied only to bounded-degree graphs.

As a main application of our techniques, we consider the random graph $G(n,d/n)$, 
where the previously known  algorithms run in time $n^{O(\log d)}$ or applied only to large $d$. We refine these algorithmic bounds significantly, and develop  fast nearly linear algorithms based on Glauber dynamics that apply to all constant $d$, throughout the uniqueness regime. 
\end{abstract}


\section{Introduction}
Spectral independence method was introduced by Anari, Liu, and Oveis Gharan \cite{Spectral} as a framework to obtain polynomial   bounds on the mixing time of Glauber dynamics. Originally based on a series of works on high-dimensional expansion \cite{high, KO20, O18, DK17, KM17}, it has since then been developed further using entropy decay by Chen, Liu, and Vigoda \cite{Optimal} who  obtained optimal $O(n \log n)$ mixing results on graphs of bounded maximum degree $\Delta$ whenever the framework applies. This paper focuses on relaxing the bounded-degree assumption of \cite{Optimal}, in sparse graphs where the maximum degree is not the right parameter  to capture the density of the graph. 


As a running example we will use the problem of sampling (weighted) independent sets, also known as the sampling problem from the hard-core model. For a graph $G=(V,E)$, the hard-core model with parameter $\lambda>0$ specifies a distribution $\mu_{G,\lambda}$ on the collection of independent sets of $G$, where for an independent set $I$  it holds that $\mu_{G,\lambda}(I)=\lambda^{|I|}/Z_{G,\lambda}$ where   $Z_{G,\lambda}$ is the partition function of the model (the normalising factor that makes the probabilities add up to~$1$).  For bounded-degree graphs of maximum degree $d+1$ (where $d\geq 2$ is an integer), it is known that the problems of   sampling and approximately counting 
 from this model
 undergo a computational transition at $\lambda_c(d)=\tfrac{d^d}{(d-1)^{d+1}}$, the so-called uniqueness threshold \cite{Weitz,Sly10,Inapprox2}: they are poly-time solvable when $\lambda<\lambda_c(d)$, and computationally intractable for $\lambda>\lambda_c(d)$. Despite this clear complexity picture, prior to the introduction of spectral independence, the algorithms for $\lambda<\lambda_c(d)$ were based on elaborate enumeration techniques whose running times scale as $n^{O(\log d)}$~\cite{Weitz,LLY13,PR,Peters}. The analysis of Glauber dynamics\footnote{Recall, for a graph $G$, the Glauber dynamics for the hard-core model iteratively maintains a random independent set $(I_t)_{t\geq 0}$, where at each step $t$ a  vertex $v$ is chosed u.a.r. and, if $I_t\cup \{v\}$ is independent,  it sets  $I_{t+1}=I_t\cup \{v\}$ with probability $\tfrac{\lambda}{\lambda+1}$, otherwise  $I_{t+1}=I_t\backslash \{v\}$. The mixing time is the maximum number (over the starting $I_0$) of steps $t$ needed to get within total variation distance 1/4 of $\mu_{G,\lambda}$, see Section~\ref{4f4f444f} for the precise definitions.} using spectral independence in the regime $\lambda<\lambda_c(d)$ yielded initially $n^{O(1)}$ algorithms for any $d$ \cite{Spectral} (see also \cite{touniqueness}), and then $O(n\log n)$ for bounded-degree graphs \cite{Optimal}. More recently, Chen, Feng, Yin, and Zhang \cite{RGlauber} obtained $O(n^2\log n)$ results  for arbitrary graphs $G=(V,E)$ that apply when $\lambda<\lambda_c(\Delta_G-1)$, where $\Delta_G$ is the maximum degree of $G$ (see also \cite{Spec2} for related results when $\Delta_G$ grows like $\log n$); this has been further refined to $O(n\log n)$ in \cite{AJKPV,chen2022localization, CFYZ22}.
 
The maximum degree is frequently a bad measure of the density of the graph, especially for graphs with unbounded-degree. One of the most canonical examples is the random graph $G(n,d/n)$ where the maximum degree grows with $n$ but the average degree is $d$, and therefore one would hope to be able to sample from $\mu_{G,\lambda}$ for $\lambda$ up to some constant, instead of $\lambda=o(1)$ that the previous results yield. In this direction, \cite{SSY,ConnConnPaper} obtained an $n^{O(\log d)}$ algorithm based on correlation decay that applies to all $\lambda<\lambda_c(d)$ for all graphs with ``connective constant'' bounded by $d$ (meaning, roughly, that for all $\ell=\Omega(\log n)$ the number of length-$\ell$ paths starting from any vertex is bounded by $d^{\ell}$). 
 The result of~\cite{ConnConnPaper} applies to $G(n,d/n)$ for all $d>0$. In terms of Glauber dynamics on $G(n,d/n)$, \cite{MosselSly} showed an  $n^{1+\Omega(1/\log \log n)}$ lower bound on the mixing time in the case of the Ising model; this lower bound actually applies to most well-known models, and in particular rules out $O(n \log n)$ mixing time results for the hard-core model when $\lambda=\Omega(1)$. The mixing-time lower bound on $G(n,d/n)$ has only been matched by complementary fast mixing results in models with strong monotonicity properties, see \cite{MosselSly} for the ferromagnetic  Ising model and \cite{Potts} for the random-cluster model. Such monotonicity properties unfortunately do not hold for the hard-core model, and the best known results \cite{GL1,GL2} for Glauber dynamics on $G(n,d/n)$ give an $n^{C}$ algorithm for $\lambda<1/d$ and sufficiently large $d$ (where $C$ is a constant depending on $d$).
 Our goal in this paper is to go all the way up to $\lambda_c(d)$ (which converges to $e/d$ so is larger than $1/d$).
 
 Our main contribution is to obtain nearly linear-time algorithms on $G(n,d/n)$, for 
 all of the models considered in~\cite{ConnConnPaper}, i.e., the hard-core model, the monomer-dimer model (weighted matchings), and the antiferromagnetic Ising model. Key to our results are new spectral independence bounds for any $d>0$ in the regime $\lambda<\lambda_c(d)$ for arbitrary graphs $G=(V,E)$ in terms of their ``$d$-branching value'' (which resembles the connective-constant notion of \cite{ConnConnPaper}). To state our main theorem for the hard-core model on $G(n,d/n)$, we first extend the definition of $\lambda_c(d)$ to all reals $d>0$ by setting $\lambda_c(d)=\tfrac{d^{d}}{(d-1)^{d+1}}$ for $d>1$, and $\lambda_c(d)=\infty$ for $d\in (0,1)$.  We use the term ``\emph{whp} over  the choice of $G\sim G(n,d/n)$'' as a shorthand for ``as $n$ grows large, with probability $1-o(1)$ over the choice of $G(n,d/n)$''. An $\epsilon$-sample from a distribution $\mu$ supported on a finite set $\Omega$ is a random  $\sigma\in \Omega$ whose distribution $\nu$ satisfies $\left\|\nu-\mu\right\|_{\mathrm{TV}}\leq \epsilon$, where $\left\|\nu-\sigma\right\|_{\mathrm{TV}}=\tfrac{1}{2}\sum_{\sigma\in\Omega} |\nu(\sigma)-\mu(\sigma)|$.

\begin{theorem}\label{thm:main}
Let $d,\lambda>0$ be such that $\lambda<\lambda_c(d)$. For any arbitrarily small constant $\theta>0$, there is an algorithm such that, whp over the choice of $G\sim G(n,d/n)$, 
when the algorithm is given as input
 the graph $G$ and an arbitrary rational $\epsilon>0$, it  outputs an  $\epsilon$-sample from $\mu_{G,\lambda}$ in time $n^{1+\theta}\log \tfrac{1}{\epsilon}$.
\end{theorem}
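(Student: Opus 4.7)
The plan is to establish Theorem~\ref{thm:main} in three steps: first, prove an $O(1)$-spectral-independence bound for the hard-core model on any graph whose neighbourhoods have bounded ``$d$-branching value'' whenever $\lambda<\lambda_c(d)$; second, show that $G(n,d/n)$ satisfies this branching bound whp; and third, combine spectral independence with a fast-mixing analysis of Glauber dynamics, exploiting the fact that $G(n,d/n)$ has maximum degree $O(\log n/\log\log n)$ whp so that polynomial-in-$\Delta$ factors in the mixing bound are absorbed into $n^\theta$.

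The bulk of the work is the first step. The starting point is Weitz's self-avoiding-walk tree, which reduces the marginal at a vertex $v$ of $G$ to iterating the hard-core recurrence on a rooted tree $T_v$. To bound spectral influences one needs to control the sum over all other vertices $u$ of the sensitivity of the root marginal in $T_v$ to the external field at $u$. The analyses of~\cite{SSY,ConnConnPaper} bound analogous quantities by $L^p$-norms weighted by path counts on the tree---this is where the $d$-branching condition enters---but spectral independence requires an additive bound of the form $\sum_u |\partial \log Z/\partial \lambda_u|=O(1)$, and $L^p$-inequalities naturally produce multiplicative norm-contraction rather than additive bounds. To bridge this gap I would amortise the $L^p$-contraction recursively: associate to each subtree $T'$ of $T_v$ a potential $\Phi(T')$ given by a $p$-norm of marginal-derivatives at its leaves, show that $\Phi$ contracts by a factor strictly below~$1$ when passing from a node to its children under the hypothesis $\lambda<\lambda_c(d)$, telescope the contraction up to the root, and finally apply a H\"older-type inequality to convert the resulting $L^p$-bound into the additive $L^1$-bound required by spectral independence.

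The second step is a standard first-moment path-counting argument: the expected number of length-$\ell$ walks from a fixed vertex of $G(n,d/n)$ is $(1+o(1))d^\ell$ for $\ell=O(\log n)$, and concentration plus a union bound over vertices yields a uniform $O(d^\ell)$ bound whp. The third step feeds the $O(1)$-spectral-independence bound into the entropy-decay machinery of~\cite{Optimal}, which gives a Glauber-dynamics mixing time of the form $\Delta^{O(1)}\cdot n\log n$; on $G(n,d/n)$ the $\Delta^{O(1)}$ factor becomes $n^{o(1)}$, and the usual restart argument converts mixing time into an $n^{1+\theta}\log(1/\epsilon)$ sampler. The crucial technical obstacle is the first step: reconciling the non-linearity of $L^p$-norms with the linear derivative-sum structure of spectral independence. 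The recursive amortisation over subtrees---tracking a potential that telescopes along the recursion rather than applying an $L^p$-bound only at the root of $T_v$---is the key new ingredient, and it will require a careful choice of the exponents~$p$ at each level of the tree so that the telescoping retains a positive slack under the strict inequality $\lambda<\lambda_c(d)$.
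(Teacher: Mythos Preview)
Your outline has the right high-level shape---spectral independence via branching values, then entropy decay---but two quantitative claims fail, and fixing them forces a change in the algorithm itself. First, the $d$-branching value on $G(n,d/n)$ is \emph{not} $O(1)$: each term $N_{v,\ell}/d^\ell$ has expectation roughly~$1$, so the sum diverges. What one can prove (Lemma~\ref{lem:branch}, and this needs a moment-generating-function argument, not just first moments) is that the $d'$-branching value for any $d'>d$ is at most $\epsilon\log n$ whp. Feeding this into the tree-influence bound (Lemma~\ref{lem:spectraltrees}) gives spectral independence $\eta=O((\log n)^{1/\chi})$ for some $\chi>1$---sublogarithmic, but not $O(1)$. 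Moreover the constant $W_k$ in that bound depends on the degree~$k$ of the root, so for vertices of degree $\Theta(\log n/\log\log n)$ you do not even get a useful influence bound on the full distribution. Second, and more damaging, the machinery of~\cite{Optimal} does \emph{not} give a $\Delta^{O(1)}\cdot n\log n$ mixing time. The relevant multiplier is $2^{O(\eta/b^2)}$, where $b$ is the marginal lower bound; for a vertex of degree~$\Delta$ one has $b\approx\lambda/(1+\lambda)^\Delta$, which is polynomially small in~$n$ when $\Delta=\Theta(\log n/\log\log n)$, so $\eta/b^2$ is enormous and the bound is useless.

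The paper's fix is to change the algorithm: run single-site dynamics only on the set $U$ of vertices of degree $\leq D$ for a large constant~$D$, updating each $u\in U$ from its conditional marginal in the \emph{full} graph given the current configuration on $U\setminus\{u\}$. Restricting to $U$ makes both $b$ and the root-degree constant $W_k$ absolute constants, yielding $(\epsilon\log n)$-spectral independence for the marginal $\mu_{G,\lambda,U}$ (Lemma~\ref{lem:hardspectral}) and hence $C_r\leq n^\theta$ for $r=\Theta(|U|)$. Descending from $r$-block to $1$-block factorisation then requires a separate, delicate structural fact about $G(n,d/n)$---high-degree vertices are sparse enough that the components arising in the block-to-site reduction have size $o(\log n)$ with controlled probability (Lemma~\ref{lem:percolation}, used in Lemma~\ref{lem:hard1tensor}). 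Finally, each conditional update and the closing extension to $V\setminus U$ can be implemented in $O(\log n)$ time because the components of $G[V\setminus U]$ are small and tree-like (Lemma~\ref{lem:time}). None of these pieces is in your outline, and the restriction to low-degree vertices is not a convenience---it is what makes both the spectral-independence and the entropy-factorisation arguments go through.
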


The reader might wonder why is there no constant in front of the running time (in Theorems~\ref{thm:main}, \ref{thm:main1}, and \ref{thm:main2}) or why is there no requirement that $n$ is sufficiently large? The assumption 
that $n$ is sufficiently large is taken care of in the whp condition: there is a function $f_{d,\lambda,\theta}:{\mathbb{Z}}\rightarrow {\mathbb{R}}$ such that $\lim_{n\rightarrow\infty} f_{d,\lambda,\theta}(n) = 0$ and the ``whp'' means with probability $\geq 1 - f_{d,\lambda,\theta}(n)$; the function $f_{d,\lambda,\theta}$ equals 1 for small $n$ (making the conclusion trivial for such $n$). Moreover, the family of $O(n^{1+\theta})$ algorithms from Theorem~\ref{thm:main} can be turned into an $n^{1+o(1)}$ algorithm, see Remark~\ref{remark} for a discussion.

We remark also here that the algorithm of Theorem~\ref{thm:main} (as well as Theorems~\ref{thm:main1} and~\ref{thm:main2} below) can also recognise in time $n^{1+o(1)}$ whether the graph $G\sim G(n,d/n)$ is a ``good'' graph, i.e., we can formulate graph properties that guarantee the success of the algorithm,  are satisfied whp, and are  also efficiently verifiable, see Section~\ref{sec:verify} for details.

The key to obtaining Theorem~\ref{thm:main} is to bound the spectral independence of the Gibbs distribution on~$G(n,d/n)$. The main strategy that has been applied so far to bound spectral independence is to adapt suitably correlation decay arguments and, therefore, it is tempting to use the correlation decay analysis of \cite{ConnConnPaper}. This poses new challenges in our setting since \cite{ConnConnPaper} uses an $L^p$-norm analysis of correlation decay on trees, and the non-linearity of $L^p$-norms is an obstacle to converting their analysis into spectral independence bounds (in contrast, for bounded-degree graphs, the $L^\infty$-norm is used which can be converted to spectral independence bounds using a purely analytic approach, see \cite{touniqueness}). Our solution to work around that is to ``linearise'' the $L^p$-analysis by taking into account the structural properties of subtrees. This allows us to amortise over the tree-recurrence using appropriate combinatorial information (the $d$-branching values) and to  bound subsequently spectral independence; details are given in Section~\ref{sec:spectral}, see Lemmas~\ref{lem:spectraltrees} and~\ref{lem:hardspectral} (and equation~\eqref{eq:spectralind} that is at the heart of the argument). Once the spectral independence bound is in place, further care is needed to obtain the fast nearly linear running time, paying special attention to the distribution of high-degree vertices inside $G(n,d/n)$ and to blend this with the entropy-decay tools developed in \cite{Optimal}, see Section~\ref{sec:entropyfacb} for this part of the argument.
 
 %
 %

In addition to our result for the
hard-core model, we also obtain similar results for  the Ising 
model and te Momomer-Dimer model.
The configurations of the Ising model on 
a graph $G=(V,E)$ are assignments 
$\sigma\in \{0,1\}^V$ which assign the spins~$0$
and~$1$ to the vertices of~$G$.
The   Ising model with parameter $\beta>0$ corresponds to a distribution $\mu_{G,\beta}$ on $\{0,1\}^V$, where for an assignment $\sigma\in \{0,1\}^V$, it holds that $\mu_{G,\beta}(\sigma)=\beta^{m(\sigma)}/Z_{G,\beta}$ where $m(\sigma)$ is the number of edges whose endpoints have the same spin assignment under $\sigma$, and   $Z_{G,\beta}$ is the partition function of the model. The model is antiferromagnetic when $\beta\in (0,1)$, and ferromagnetic otherwise. For $d\geq 1$, let $\beta_c(d)=\tfrac{d-1}{d+1}$; for $d\in (0,1)$, let $\beta_c(d)=0$. It is known  that on bounded-degree graphs of maximum degree $d+1$ the sampling/counting problem for the antiferromagnetic Ising model undergoes a phase transition at $\beta=\beta_c(d)$, analogous to that for the hard-core model \cite{SST,LLY13, SlySun,Inapproxising}.
\begin{theorem}\label{thm:main1}
Let $d,\beta>0$ be  such that $\beta\in (\beta_c(d),1)$. For any constant $\theta>0$, there is an algorithm such that, whp over the choice of $G\sim G(n,d/n)$,
when the algorithm is given as input the graph $G$ and an arbitrary rational $\epsilon>0$, it outputs an  $\epsilon$-sample from $\mu_{G,\beta}$ in time $n^{1+\theta}\log \tfrac{1}{\epsilon}$.
\end{theorem}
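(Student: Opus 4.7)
My plan is to follow the same three-step blueprint used for Theorem~\ref{thm:main}, replacing the hard-core tree recurrence by the antiferromagnetic Ising recurrence throughout. First I would establish a graph-level spectral independence bound for the Ising model on any $G=(V,E)$ in terms of its $d$-branching value, valid whenever $\beta\in(\beta_c(d),1)$. Then I would feed this bound into the entropy-decay framework of~\cite{Optimal} together with the structural properties of $G(n,d/n)$ from Section~\ref{sec:entropyfacb} to obtain nearly linear mixing of Glauber dynamics, and hence the claimed $n^{1+\theta}\log(1/\epsilon)$ sampler.

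For the spectral independence step, the Ising tree recurrence for the ratio of marginals $R_v$ on any tree reads $R_v=\prod_i \frac{\beta R_{i}+1}{R_{i}+\beta}$, and admits uniqueness on the infinite $d$-ary tree exactly in the regime $\beta>\beta_c(d)$. Sinclair--Srivastava--Yin~\cite{SSY} and~\cite{ConnConnPaper} prove contraction of a suitable potential function under an $L^p$-norm with an exponent $p=p(d,\beta)$ chosen as a function of $d$ and $\beta$. I would linearise their analysis exactly as in the proof of Theorem~\ref{thm:main}, amortising the $L^p$-bound over subtrees with weights driven by the $d$-branching value. This should yield the direct analogues of Lemmas~\ref{lem:spectraltrees} and~\ref{lem:hardspectral}, with the per-step inequality mirroring~\eqref{eq:spectralind}, and thus an $O(1)$ spectral independence bound on every graph of bounded $d$-branching value.

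The second step is, if anything, easier in the Ising case than in the hard-core case: single-site marginals are bounded away from $\{0,1\}$ uniformly in the boundary condition whenever $\beta$ is bounded away from $0$, so the degree-dependent correction factors that the entropy-factorisation machinery produces at high-degree vertices are automatically $O(1)$. Consequently the block decomposition and local-versus-global entropy comparison from Section~\ref{sec:entropyfacb} transfer essentially verbatim, and the whp structural properties of $G(n,d/n)$ (including efficient verifiability, Section~\ref{sec:verify}) apply unchanged.

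I expect the main obstacle to be the amortisation step itself: the Ising potential is symmetric around the origin and its derivative structure differs from the hard-core case, so the choice of per-subtree amortising weights, and the exponent $p$, must be rederived as functions of $d$ and $\beta$ and verified against the tree recursion defining the $d$-branching value. Once this is in place the recursive inequality propagates as in the hard-core proof, and the remainder of the argument, including the $\log(1/\epsilon)$ dependence and the combined-algorithm trick turning the $n^{1+\theta}$ family into an $n^{1+o(1)}$ bound, goes through with only cosmetic changes.
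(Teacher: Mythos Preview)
Your plan would likely succeed, but it takes an unnecessarily hard route for the spectral-independence step and contains one incorrect claim in the second step.

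For the antiferromagnetic Ising model the paper does \emph{not} linearise an $L^p$ contraction. Instead it invokes the elementary bound of Lemma~\ref{lem:worst} (going back to \cite{MosselSly,Berger}): on any tree, the influence of the root on a vertex at distance $\ell$ is at most $\big(\tfrac{1-\beta}{1+\beta}\big)^{\ell}$. Since $\beta>\beta_c(d')$ is precisely the condition $\tfrac{1-\beta}{1+\beta}<\tfrac{1}{d'}$, summing these influences over the SAW tree bounds the total influence from a vertex directly by its $d'$-branching value, and $(\epsilon\log n)$-spectral independence on $G(n,d/n)$ follows immediately from Lemma~\ref{lem:branch} (this is Lemma~\ref{lem:isingspectral}). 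The amortisation/H\"older step you flag as the main obstacle never arises here; no choice of potential, exponent, or per-subtree weights is needed. Your $L^p$ route should also work, but the paper's argument is essentially a one-liner by comparison.

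Your claim that Ising marginals are uniformly bounded away from $\{0,1\}$ is wrong: at a vertex of degree $k$ with all neighbours pinned to the same spin, the conditional marginal is $\beta^{k}/(1+\beta^{k})\to 0$. Hence the marginal bound is $b=\tfrac{\beta^D}{1+\beta^D}$ and depends on the degree cap $D$ exactly as in the hard-core case, so the entropy-factorisation correction factors at high-degree vertices are \emph{not} automatically $O(1)$. You therefore still have to restrict Glauber dynamics to the low-degree set $U$ and run the argument of Section~\ref{sec:entropyfacb} as written; the paper does exactly this, after which the analogues of Corollary~\ref{cor:hardtensor} and Lemma~\ref{lem:hard1tensor} transfer verbatim to $\mu_{G,\beta,U}$.
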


For a graph $G=(V,E)$, the monomer-dimer model with parameter $\gamma>0$ corresponds to a distribution $\mu_{G,\gamma}$ on the set of matchings of $G$, where for a matching $M$, it holds that $\mu_{G,\gamma}(M)=\gamma^{|M|}/Z_{G,\gamma}$ where $Z_{G,\gamma}$ is the partition function. For general graphs $G=(V,E)$ and $\gamma=O(1)$, \cite{jer,jerb} gave an $O(n^2m\log n)$ algorithm (where $n=|V|, m=|E|$), which was improved for bounded-degree graphs to $O(n\log n)$ in \cite{Optimal} using spectral independence. For $G(n,d/n)$, \cite{ConnConnPaper} gave an $O(n^{\log d})$ deterministic algorithm using correlation decay, and \cite{Spec2} showed that Glauber dynamics mixes in $n^{2+o(1)}$ steps when $\gamma=1$.
\begin{theorem}\label{thm:main2}
Let $d,\gamma>0$. For any constant $\theta>0$, there is an algorithm such that, whp over the choice of $G\sim G(n,d/n)$,
when the algorithm is given as input the graph $G$ and an arbitrary rational $\epsilon>0$, it  outputs an  $\epsilon$-sample from $\mu_{G,\gamma}$ in time $n^{1+\theta}\log \tfrac{1}{\epsilon}$.
\end{theorem}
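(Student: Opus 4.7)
The plan is to follow the template used for the hard-core model in Theorem \ref{thm:main}, adapting the key recurrences to the monomer-dimer setting. The proof will decompose into three stages: (i) a spectral independence bound for the matchings model in terms of the $d$-branching value of the underlying graph, (ii) a fast-mixing guarantee for Glauber dynamics on edges when the input graph is $G(n,d/n)$, and (iii) the conversion of the mixing bound into an explicit $\epsilon$-sampler running in time $n^{1+\theta}\log(1/\epsilon)$. A convenient feature here compared with Theorems \ref{thm:main} and \ref{thm:main1} is that the monomer-dimer model exhibits correlation decay on trees for every $\gamma>0$, so no analogue of the uniqueness condition $\lambda<\lambda_c(d)$ or $\beta>\beta_c(d)$ needs to be imposed.

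First I would set up the standard Godsil tree-of-self-avoiding-walks recurrence for edge-marginals in the matchings model and carry out the amortised $L^p$-contraction analysis of Lemmas \ref{lem:spectraltrees} and \ref{lem:hardspectral} in this setting. The main work is to pick the correct potential function and the correct amortisation weights: the contraction of marginals is measured in an $L^p$-norm along each edge of the SAW tree, and the non-linearity of the norm is absorbed by distributing a charge proportional to the $d$-branching value of each subtree. This gives the matchings analogue of \eqref{eq:spectralind} and yields a spectral independence bound for $\mu_{G,\gamma}$ on any graph~$G$ whose $d$-branching value is bounded, uniformly in~$\gamma$.

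Second, I would apply this bound to $G\sim G(n,d/n)$, using the tail estimates for $d$-branching values that are already established for the hard-core case, so that the spectral independence bound holds whp with a good constant. To upgrade spectral independence to a near-linear mixing time, I would plug into the entropy factorisation machinery of \cite{Optimal} as organised in Section \ref{sec:entropyfacb}, the only difference being that Glauber dynamics is now on edges rather than vertices. High-degree vertices of $G(n,d/n)$, whose degrees reach $\Theta(\log n/\log\log n)$, are handled as in Theorem \ref{thm:main}: one exploits that such vertices are sparse and have locally tree-like neighbourhoods, so they can be pre-processed or conditioned upon without affecting the overall running time beyond the $\theta$ slack. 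The conversion to a sampler with explicit $\epsilon$-dependence then follows the standard reduction from mixing time to an $\epsilon$-sampler, and the "good-graph" verification is carried out as in Section \ref{sec:verify}.

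The main obstacle I expect is the interplay between the $L^p$ amortisation and the entropy decay framework at high-degree vertices. The analysis of \cite{Optimal} is cleanest under a uniform degree bound, and the $L^p$ recursion contributes constants that degrade with the local degree; keeping both losses inside the $\theta$ exponent while retaining independence from $\gamma$ is delicate. Once this is set up correctly, however, the passage from the hard-core proof to the matchings proof should be essentially mechanical, since the structural properties of $G(n,d/n)$ that are exploited (branching bounds, sparsity of high-degree vertices, local tree-likeness) are model-independent.
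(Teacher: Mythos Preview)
Your proposal is correct and matches the paper's approach: the $L^p$-amortised contraction on the SAW tree is exactly Lemma~\ref{lem:spectraltreesMonDim} (using the contraction inequality of Lemma~\ref{lem:connconst2}), the resulting spectral independence on the set $F$ of low-degree edges is Lemma~\ref{lem:monomerspectral}, and the entropy-factorisation and high-degree handling follow Corollary~\ref{cor:hardtensor} and Lemma~\ref{lem:hard1tensor} with edges replacing vertices throughout. Two small corrections: the bound is not uniform in~$\gamma$ (the constants $\chi,\kappa,W_k$ all depend on it), and the spectral-independence bound is established for the marginal $\mu_{G,\gamma,F}$ on edges with both endpoints of degree $\leq D$, not for the full $\mu_{G,\gamma}$---but this is precisely what the restricted-Glauber template from Theorem~\ref{thm:main} accommodates.
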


In the next section, we give the main ingredients of our algorithm for the hard-core model and we give the proof of Theorem~\ref{thm:main}. The proofs of Theorems~\ref{thm:main1} and~\ref{thm:main2} build on similar ideas, though there are some modifications needed to obtain the required spectral independence bounds. We give their proofs in Section~\ref{sec:remainingproofs}. 

Before proceeding let us finally mention that, to go beyond the 2-spin models studied here, the main obstacle is to establish the spectral independence bounds for graphs with potentially unbounded degrees. As it is pointed out in \cite[Section 7]{ConnConnPaper}, their correlation-decay analysis does not extend to other models in a straightforward manner, and hence it is natural to expect that the same is true for spectral independence as well.

\subsection{Further developments}

Our algorithms are based on running Glauber dynamics on (relatively) low-degree vertices. Subsequent to our work, Efthymiou and Feng \cite{EftFeng} obtained for the hard-core model an $n^{1+O(1/\log \log n)}$ mixing-time bound for Glauber dynamics on $G(n,d/n)$ when $\lambda<\lambda_c(d)$ (and similarly for the monomer-dimer model), i.e., without the need to restrict to low-degree vertices. Their spectral independence arguments build upon the ``linearisation'' of the $L^p$ analysis we introduce here, which are then combined with the framework of \cite{RGlauber} to obtain the improved mixing-time bounds.

\section{Proof outline for Theorem~\ref{thm:main}}\label{sec:outline}

Our algorithm for sampling from the hard-core model on a graph $G=(V,E)$  is an adaptation of Glauber dynamics on an appropriate set of ``small-degree'' vertices $U$, the details of the algorithm are given in Figure~\ref{alg:gl}. Henceforth, analogously to the Ising model, it will be convenient to view the hard-core model as a 2-spin model supported on $\Omega\subseteq \{0,1\}^V$, where $\Omega$ corresponds to the set of independent sets of $G$ (for an independent set $I$, we obtain $\sigma\in \{0,1\}^V$ by setting $\sigma_v=1$ iff $v\in I$).
\begin{figure}[h]
\begin{mdframed}
\textbf{Algorithm} \HardCore$(G,T)$ \vskip 0.05cm
\rule[0.3cm]{6cm}{0.4pt}
\vskip 0.05cm

\noindent \textbf{Parameters:} $D>0$ (threshold for small/high degree vertices). \vskip 0.25cm

\noindent \textbf{Input:} \phantom{ \ \ } Graph $G=(V,E)$, integer $T\geq 1$ (number of iterations).\vskip 0.15cm

\begin{description}
    \item[\textbf{1. Initialisation:}] Let $U$ be the set of all vertices with degree $\leq D$.
    
    Let $X_0$ be the empty independent set on $U$.
    \item[2. Main loop:] For $t=1,\hdots, T$,
        \begin{itemize}
    \item Pick a vertex $u$ uniformly at random from $U$.
    \item For every vertex $v\in U\backslash\{u\}$, set $X_t(v)=X_{t-1}(v)$.
    \item Sample the spin $X_t(u)$ according to $\mu_{G,\lambda}\big(\sigma_u\mid \sigma_{U\backslash\{u\}}=X_t(U\backslash\{u\})\big)$, i.e., update $u$ according to the hard-core distribution on the \emph{whole graph} $G$, conditioned on the spins of $U\backslash\{u\}$.

    \end{itemize}
    \item[3. Finalisation:] Sample $\sigma\sim\mu_{G,\lambda}\big(\cdot\,\big|\, \sigma_U=X_T\big)$, i.e., extend $X_T$ to the whole vertex set of $G$ by sampling from $\mu_{G,\lambda}$ conditioned on the configuration on $U$.
\end{description}
\end{mdframed}
\caption{\label{alg:gl} The \HardCore$(G,T)$ subroutine for sampling from the hard-core distribution $\mu_{G,\lambda}$. We use the analogue of this algorithm for the Ising model with parameter $\beta$ (replacing $\mu_{G,\lambda}$ by $\mu_{G,\beta}$). For the monomer-dimer model, the only difference is that the algorithm needs to update (single) edges in $F$, where $F$ is the set of edges whose both endpoints lie in $U$ (i.e., degree $\leq D$).}
\end{figure}
Note that for general graphs $G$,   implementing Steps 2 and Steps 3 of the algorithm might be difficult. The following lemma exploits the sparse structure of $G(n,d/n)$ and in particular the fact that high-degree vertices are sparsely scattered. We will use this in the proof of our main theorems to show that the algorithm \HardCore$(G,T)$ can be implemented very efficiently for appropriate $D$, paying only $O(\log n)$ per loop operation in Step 2 and only $O(n\log n)$ in Step 3.  The \emph{tree-excess} of a graph $G=(V,E)$ is defined as $|E|-|V|+1$.
\newcommand{\statelemtime}{Let $d>0$ be an arbitrary real. There exist constants $D,\ell>0$ such that the following holds whp over the choice of $G=(V,E)\sim G(n,d/n)$. Each of the connected components of $G[V\backslash U]$, where $U$ is the set of vertices of degree $\leq D$, has size $O(\log n)$ and tree-excess at most~$\ell$.}
\begin{lemma}\label{lem:time}
\statelemtime
\end{lemma}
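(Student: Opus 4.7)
The plan is to establish both statements via first-moment calculations on subgraphs of $G \sim G(n,d/n)$. Fix constants $D, K, \ell > 0$ depending only on $d$ (to be specified at the end), let $U$ denote the vertices of degree at most $D$, and set $H := G[V \setminus U]$ and $k^* := \lceil K \log n \rceil$. I plan to show that, whp, $H$ contains no subtree on $k^*$ vertices (which yields the size bound, since any component of size $\geq k^*$ contains such a subtree) and no vertex subset of size at most $k^*$ induces at least $|S| + \ell$ edges in $G$ (which yields the excess bound, as any component of excess $>\ell$ would supply such a subset).

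For the size bound, I count labelled subtrees $T$ on $k^*$ vertices of $V$ embedded in $H$. The probability of a fixed $T$ being embedded factors as $(d/n)^{k^*-1} \cdot q_{k^*}$, where
\[ q_{k^*} := \Pr[\deg_G(v) > D \text{ for every } v \in V(T) \mid T \subseteq G]. \]
The events $\{\deg_G(v) > D\}_{v \in V(T)}$ are positively correlated through shared edges inside $V(T)$, and handling them individually is the main obstacle. The idea is to apply a Chernoff bound to the \emph{sum} of degrees instead: if every $v \in V(T)$ satisfies $\deg_G(v) > D$, then $\sum_{v\in V(T)} \deg_G(v) \geq k^*(D+1)$, whereas conditionally on $T \subseteq G$ this sum equals $2(k^*-1) + 2N + Y$, where $N \sim \mathrm{Bin}(\binom{k^*}{2}-(k^*-1), d/n)$ counts non-tree edges inside $V(T)$ and $Y \sim \mathrm{Bin}(k^*(n-k^*), d/n)$ counts edges leaving $V(T)$; crucially, $N$ and $Y$ are independent. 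Since $\Eb[2N+Y]$ is of order $k^* d$ while we require $2N + Y \geq k^*(D-1)+2$, a standard Chernoff bound yields $q_{k^*} \leq \exp\bigl(-k^*\bigl(I(D-1) - o(1)\bigr)\bigr)$, where $I(x) := x\ln(x/d) - x + d$ is the Poisson$(d)$ Cram\'er rate function (which tends to infinity with $x$). Combining this with $\binom{n}{k^*}(k^*)^{k^*-2}(d/n)^{k^*-1} \leq n(ed)^{k^*}/((k^*)^2 d)$ via Cayley's formula and Stirling, the expected subtree count is at most $\tfrac{n}{(k^*)^2 d}\exp\bigl(-k^*(I(D-1) - 1 - \ln d - o(1))\bigr)$. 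Choosing $D$ large enough that $I(D-1) > 2 + \ln d$ (possible as $I \to \infty$) and setting $K := 2$ makes this $O(n^{-1}/\log^2 n) = o(1)$.

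For the tree-excess bound, once all components are known to have at most $k^*$ vertices, it suffices to show that whp no vertex set $S$ with $|S|\leq k^*$ induces at least $|S|+\ell$ edges in $G$. A direct first-moment calculation gives
\[ \sum_{k=2}^{k^*}\binom{n}{k}\binom{\binom{k}{2}}{k+\ell}(d/n)^{k+\ell} \leq 2(ed/2)^\ell (K\log n)^\ell \cdot n^{K\ln(e^2 d/2) - \ell}, \]
which is $o(1)$ provided $\ell > K \ln(e^2d/2)+1$; a safe choice is $\ell := \lceil 2K(2+\ln d)\rceil$. A union bound over the two first-moment estimates completes the proof, with $D$ and $\ell$ depending only on $d$.
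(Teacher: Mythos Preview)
Your proposal is correct. The core idea---bounding the probability that every vertex in a size-$k^*$ connected set is high-degree by applying a Chernoff bound to the \emph{sum} of degrees $\sum_{v\in S}\deg_G(v)$ rather than to the individual degrees---is exactly the idea the paper uses as well, and the tree-excess argument is the same first-moment computation as the paper's Lemma~\ref{lem:next}.

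The organisation differs, however. You give a self-contained argument at the single scale $k^*=K\log n$: enumerate labelled trees on $k^*$ vertices, bound the conditional probability that all their vertices are high-degree via the degree-sum Chernoff bound, and take a first moment. The paper instead derives Lemma~\ref{lem:time} as a corollary of the stronger Lemma~\ref{lem:percolation}, which asserts that for \emph{every} $\delta\in(0,1)$, whp every connected set of size $\geq\delta\log n$ has at most half its vertices of degree exceeding $L/(\delta\log\tfrac1\delta)$. That lemma in turn rests on Lemma~\ref{lem:FR}, which bounds $\deg_G(S)$ for connected sets~$S$ by the same Chernoff-on-total-degree trick you use. So the underlying analytic step is the same; the paper simply packages it at a more general scale because it later needs the finer statement (Lemma~\ref{lem:percolation} with $\delta\downarrow 0$) in the entropy-factorisation argument of Lemma~\ref{lem:hard1tensor}. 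Your direct route is shorter for Lemma~\ref{lem:time} in isolation, while the paper's route amortises the work across two lemmas. A minor point: your choice $\ell:=\lceil 2K(2+\ln d)\rceil$ can be non-positive when $d<e^{-2}$; you should take $\ell:=\max\{1,\lceil 2K(2+\ln d)\rceil\}$ (any positive $\ell$ suffices in that regime).
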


Lemma~\ref{lem:time} follows using relatively standard techniques from random graphs and is proved in Section~\ref{sec:randomproofs}. Later, we will establish a more refined version of this  property that will allow us to bound the mixing time of the single-site dynamics 
that we consider (the main loop of
 \HardCore$(G,T)$).

The key ingredient needed to prove our main result is to show that the main loop of our sampling algorithm returns a good sample on the induced hard-core distribution on the set $U$. More precisely, for a graph $G=(V,E)$ and $U\subseteq V$, we let $\mu_{G,\lambda,U}(\cdot)$ denote the induced distribution on the spins of $U$, i.e., the marginal distribution $\mu_{G,\lambda}(\sigma_U=\cdot)$.
\newcommand{\statelemfastone}{Let $d,\lambda>0$ be constants such that $\lambda<\lambda_c(d)$. For any arbitrarily small constant $\theta>0$, there is $D>0$ such that the following holds whp over the choice of $G\sim G(n,d/n)$. 

Let $U$ be the set of vertices in $G$ of degree $\leq D$.  Then, for any $\epsilon>0$, for $T=\lceil n^{1+\theta/2}\log \tfrac{1}{\epsilon}\rceil$, the main loop of \HardCore$(G,T)$ returns a sample $X_T$ from a distribution which is $\epsilon$-close to $\mu_{G,\lambda, U}$.}
\begin{lemma}\label{lem:fastone}
\statelemfastone
\end{lemma}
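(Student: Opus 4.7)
The plan is to identify the main loop of \HardCore$(G,T)$ as the single-site heat-bath Glauber dynamics for the induced distribution $\mu_{G,\lambda,U}$, and then upper-bound its mixing time via spectral independence combined with entropy decay. The equivalence of the chains is immediate from the tower property of conditional expectations: sampling $X_t(u)$ from $\mu_{G,\lambda}\big(\sigma_u\mid \sigma_{U\backslash\{u\}}=X_{t-1}\big)$ is the same as sampling from the induced conditional $\mu_{G,\lambda,U}\big(\sigma_u\mid \sigma_{U\backslash\{u\}}=X_{t-1}\big)$, so the chain $(X_t)_{t\geq 0}$ has stationary distribution $\mu_{G,\lambda,U}$.

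The first substantive step is to establish a uniform spectral independence bound for $\mu_{G,\lambda,U}$ that holds simultaneously for all its pinnings. I would invoke the spectral independence result advertised in Section~\ref{sec:spectral} (via Lemmas~\ref{lem:spectraltrees} and~\ref{lem:hardspectral}), applied to $G\sim G(n,d/n)$: whp the graph has bounded $d$-branching value, and the tree-recurrence analysis that amortises the $L^p$-decay estimates of \cite{ConnConnPaper} over subtrees then yields an $\eta$-spectral independence bound with $\eta=O(1)$ depending only on $d$ and $\lambda$. This is the place where $\lambda<\lambda_c(d)$ and the sparsity of $G(n,d/n)$ are used.

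Next, spectral independence must be converted to rapid mixing via entropy decay. Here lies the main obstacle: $\mu_{G,\lambda,U}$ is a marginal of a distribution on a graph of unbounded degree, so the $O(n\log n)$ mixing result of \cite{Optimal} does not apply directly. I would exploit two structural features: (i) every vertex of $U$ has degree at most the constant $D$ in $G$ itself, so its direct neighbourhood in $U$ is bounded; and (ii) by Lemma~\ref{lem:time}, the connected components of $G[V\backslash U]$ have size $O(\log n)$ and bounded tree-excess, so the effective interactions in $\mu_{G,\lambda,U}$ obtained by marginalising out $V\backslash U$ couple only polylogarithmically many vertices of $U$ at a time. Combining these with a localised version of the entropy factorisation machinery of \cite{Optimal} — applied block-wise around each high-degree component — should yield a modified log-Sobolev inequality for $(X_t)$ with constant $\rho=\Omega(n^{-1-\theta/2})$, where the $n^{\theta/2}$ slack absorbs the polylogarithmic loss from the effective degrees and is controlled by choosing $D=D(\theta)$ sufficiently large.

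Finally, the MLSI translates into total variation mixing by standard arguments: after $T=\lceil n^{1+\theta/2}\log\tfrac{1}{\epsilon}\rceil$ steps from the empty starting configuration, the law of $X_T$ is within total variation distance $\epsilon$ of $\mu_{G,\lambda,U}$, as required. The hard part will be the entropy-decay step, in which the bounded-degree analysis of \cite{Optimal} must be pushed through a marginal distribution with genuinely long-range but combinatorially controlled interactions; all other steps are essentially applications of lemmas already in place.
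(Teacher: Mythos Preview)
Your high-level plan --- identify the main loop as single-site Glauber dynamics for $\mu_{G,\lambda,U}$, establish spectral independence, and convert to mixing via entropy factorisation --- is the paper's route. But two steps in your execution would fail.

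First, the spectral independence bound is not $O(1)$. The $d'$-branching value of a vertex in $G(n,d/n)$ is only bounded by $\epsilon\log n$ whp (Lemma~\ref{lem:branch}), not by a constant: already the maximum-degree vertex, of degree $\Theta(\log n/\log\log n)$, forces $S_v=\Omega(\log n/\log\log n)$. Consequently Lemma~\ref{lem:hardspectral} gives $\eta$-spectral independence only with $\eta=\epsilon\log n$ for arbitrarily small $\epsilon>0$. This still suffices to get an $r$-uniform-block factorisation multiplier $C_r\leq n^{\theta}$ for $r=\Theta(|U|)$ via Theorem~\ref{thm:Optimal} (see Corollary~\ref{cor:hardtensor}), but you must work with $\eta=\epsilon\log n$, not $\eta=O(1)$.

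Second, and more seriously, the reduction from $C_r$ (linear $r$) down to $C_1$ cannot be carried out using Lemma~\ref{lem:time} alone. That lemma only says components of $G[V\backslash U]$ have size $O(\log n)$; after marginalising out $V\backslash U$, the induced interactions in $\mu_{G,\lambda,U}$ therefore couple $\Theta(\log n)$ vertices of $U$, and the crude bound of Lemma~\ref{lem:crudetens} then contributes a factor $b^{-\Theta(\log n)}=n^{\Theta(1)}$, which swamps the desired $n^{\theta}$. The paper's argument (Lemma~\ref{lem:hard1tensor}) instead takes a uniformly random $S\in\binom{U}{r}$ with $r=\zeta|U|$ for small constant $\zeta$ and analyses the components of $G[S\cup(V\backslash U)]$. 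The refined random-graph lemma (Lemma~\ref{lem:percolation}) shows that every connected set of size $\geq\delta\log n$ contains at least half its vertices in $U$; combined with the sampling ratio $\zeta$, this makes the components of $G[S\cup(V\backslash U)]$ have size at most $\delta\log n$ with high probability over $S$, with $\delta$ as small as desired by taking $D$ large. Only then does $b^{-2\delta\log n}\leq n^{\theta}$. Your ``block-wise around each high-degree component'' with polylogarithmic coupling does not capture this random-subset percolation mechanism, and the $O(\log n)$ component bound you invoke is exactly the one the paper flags as insufficient.
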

We will prove Lemma~\ref{lem:fastone} in Section~\ref{sec:entropyfacb}.
With these two lemmas we are ready to prove Theorem~\ref{thm:main}.
\begin{proof}[Proof of Theorem~\ref{thm:main}]

We give first the details for the more interesting case  $d\geq 1$. Consider arbitrarily small $\theta>0$ and $D,\ell$  as in Lemmas~\ref{lem:time} and~\ref{lem:fastone}, so that whp $G$ satisfies the properties therein.  
(It is clear from the proofs of Lemmas~\ref{lem:time} and~\ref{lem:fastone} that $D$ and $\ell$ can be computed given $d$, $\lambda$, and $\theta$.)
Let $\epsilon>0$ be the desired accuracy for sampling from $\mu_{G,\lambda}$; it is sufficient to consider $\epsilon<1/\emm$. Let  $U$ be the set of vertices with degree $\leq D$, and set $T=\lceil n^{1+\theta/2}\log\tfrac{1}{\epsilon}\rceil$. 

By Lemma~\ref{lem:fastone}, whp over the choice of $G$, the main loop of \HardCore$(G,T)$ returns a configuration $X_T:U\rightarrow \{0,1\}$ that is $\epsilon$-close to $\mu_{G,\lambda, U}$. Note that each iteration of the main loop of \HardCore$(G,T)$ can be implemented in $O(\log n)$ time since $G[V\backslash U]$ has components of size $O(\log n)$ and tree excess at most $\ell$. In particular, any vertex $u\in U$ can be adjacent to at most $D$ of these components, and therefore the component of $u$ in $G[(V\backslash U)\cup \{u\}]$ has size $O(\log n)$ and tree excess at most $k=D\,\lceil\ell\rceil=O(1)$. We can therefore sample the spin of $u$ under $\mu_{G,\lambda}$ conditioned on the spins of $U\backslash \{u\}$ in time $O(4^{k}\log n)=O(\log n)$.\footnote{One ``naive'' way to do this is by considering a spanning tree and then brute-forcing over all $\leq 4^k$ possibilities for the endpoints of the excess edges (the spins on each edge can be set in at most 4 ways). For each of these, the marginal probability at $u$ and the corresponding partition function can be computed using dynamic programming on the left-over tree.} Therefore, the main loop of \HardCore$(G,T)$ runs in time $O(T\log n)$. Analogously, the finalisation step of \HardCore$(G,T)$, i.e., extending the configuration $X_T$ on $U$ to a configuration $\sigma$ on the whole vertex set $V$, can be implemented in time $O(n\log n)$ by iterating over the vertices in $V\backslash U$ and using the fact that the components of $G[V\backslash U]$ have excess at most $\ell$. Therefore, the overall running time of the algorithm is bounded by $O(T\log n)+O(n \log n)$, which is less than $\lceil n^{1+\theta}\log\tfrac{1}{\epsilon}\rceil $ for all sufficiently large $n$. It remains to note that, since $X_T$ is $\epsilon$-close to the marginal distribution of $\mu_{G,\lambda}$ on $U$, and the finalisation step is done perfectly conditioned on the configuration on $U$, the final configuration $\sigma$ is $\epsilon$-close to the distribution $\mu_{G,\lambda}$.

For $d<1$, whp $G$ consists of tree-like components of size $O(\log n)$, and therefore we can obtain a perfect sample from $\mu_{G,\lambda}$ in time $O(n\log n)$ by
going through the vertices one by one and, 
for each vertex, taking $O(\log n)$ time to compute its marginal, conditioned
on the spins already sampled.
\end{proof}

\section{Spectral independence via branching values}\label{sec:spectral}

We first introduce the notions of spectral independence and pairwise vertex influences, which we will later use to bound the mixing time of the main loop of \HardCore$(G,T)$, i.e., to prove Lemma~\ref{lem:fastone}. We will define the terminology in a general way that will be useful both for our analysis of the hard-core model,
and for our later analysis of other models. 

Let $q\geq 2$ be an integer indicating the number of spins and let $V$ be a set of size $n$. We consider distributions $\mu$ supported on a set $\Omega\subseteq [q]^V$.\footnote{For an integer $k\geq 1$, we denote by $[k]$ the set $\{0,1,\hdots,k-1\}$.} For $S\subseteq V$, let $\Omega_S=\{\tau\in [q]^S \mid \mu(\sigma_S=\tau)>0\}$ be the set of all partial configurations on $[q]^S$ that have non-zero marginal under $\mu$. For $\tau\in \Omega_S$, let  $\mu_\tau$ be the conditional distribution on $\Omega$ induced by $\tau$, i.e., $\mu_\tau(\cdot)=\mu(\cdot\mid \sigma_S=\tau)$. Let $\mu_{\min}=\min_{\sigma\in \Omega}\mu(\sigma)$.

For $S\subseteq V$ and $\tau\in \Omega_S$, the \emph{influence matrix} conditioned on $\tau$ is the matrix $\Psib_\tau$ whose rows and columns are indexed by $\tilde{V}_\tau=\{(v,i)\mid  v\in V\backslash S,\,  \mu_\tau(\sigma_v=i)>0\}$, where the entry indexed by $(v,i), (w,k)$ equals $\mu_\tau(\sigma_w=k\mid \sigma_v=i)-\mu_\tau(\sigma_w=k)$ if $v\neq w$, and 0 otherwise. It is a standard fact that the eigenvalues of the matrix $\Psib$ are all real (\cite{Spectral}), and we denote by  $\lambda_1(\Psib)$ its largest eigenvalue.

\begin{definition}
Let $q\geq 2$ be an integer and $V$ be a set of size $n\geq 2$.  Let $\mu$ be a distribution supported over $\Omega\subseteq [q]^V$. Let $\eta,b>0$.
We say that $\mu$ is \emph{$\eta$-spectrally independent} if for all $S\subset V$ and $\tau\in \Omega_{S}$, it holds that $\lambda_1(\Psib_\tau)\leq \eta$.
We say that $\mu$ is \emph{$b$-marginally bounded} if for all $S\subset V$, $v\in V\backslash S$, $\tau\in \Omega_{S}$, and $i\in [q]$,  it either holds that  $\mu_\tau(\sigma_v=i)=0$ or else $\mu_\tau(\sigma_v=i)\geq b$.
\end{definition}

Following \cite{Spectral, touniqueness}, for distributions $\mu$ induced by 2-spin systems, we work with the following notion of pairwise vertex-influence, which can be used to bound the spectral independence. For a graph $G=(V,E)$ and $\tau\in \{0,1\}^S$ for some $S\subset V$, for vertices $u,v$ with $u\in V\backslash S$ and $0<\mu_\tau(\sigma_u=1)<1$, we define the \emph{influence} of $u$ on $v$ (under $\mu_\tau$) as
\[\Ic^{\tau}_G(u\rightarrow v)=\mu_\tau(\sigma_v=1\mid \sigma_u=1)-\mu_\tau(\sigma_v=1\mid \sigma_u=0).\]
For matchings, we will work with an analogous notion from the perspective of edges (see Section~\ref{sec:monomer}). For all these models, spectral independence will be bounded by summing the absolute value  of the influences of an arbitrary vertex $u$ to the rest of the graph.
In turn, it has been shown in \cite{touniqueness} that summing the influences of a vertex $u$ in a graph $G$ reduces to summing the sum of influences on the self-avoiding walk tree emanating from $u$ (this is the tree whose vertices correspond to self-avoiding walks in $G$ which start from $u$, where two vertices are adjacent when one walk is a one-step extension of the other; it was first introduced in related contexts by \cite{Weitz,scott2005repulsive}). 

We will use the following lemma from \cite{touniqueness}. 
\begin{lemma}[{\cite{touniqueness}}]\label{lem:graphtotrees}
Consider an arbitrary 2-spin system on a graph $G=(V,E)$, with distribution $\mu$ supported on $\Omega\subseteq \{0,1\}^V$. 
Let $\rho\in V$ be an arbitrary vertex, $T=(V_T,E_T)$ be the tree of self-avoiding walks in $G$ starting from $\rho$, and $\nu$ be the distribution of the 2-spin system on $T$.

Then, for any $S\subseteq V\backslash \{\rho\}$ and $\tau\in \Omega_S$ with $0<\mu_\tau(\sigma_\rho=1)<1$,    there is a subset $W\subseteq V_T\backslash \{\rho\}$ and a configuration $\phi\in \{0,1\}^{W}$  such that $\mu_\tau(\sigma_\rho=1)=\nu_\phi(\sigma_\rho=1)$ and
\[\sum_{v\in V}\big|\Ic^{\tau}_G(\rho\rightarrow v)\big|\leq \sum_{u\in V_T} \big|\Ic^{\phi}_{T}(\rho\rightarrow u)\big|,\]
where $\Ic^{\phi}_{T}(\rho\rightarrow \cdot)$ denotes the influence of $\rho$ on the vertices of $T$ under $\nu_\phi$.
\end{lemma}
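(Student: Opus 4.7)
My plan is to follow Weitz's self-avoiding walk (SAW) tree reduction. Fix an arbitrary ordering of the neighbours at every vertex of $G$ and construct $T=T_{\mathrm{SAW}}(G,\rho)$: its nodes are the self-avoiding walks $(\rho,v_1,\ldots,v_k)$ from $\rho$, and whenever $v_k$ is adjacent in $G$ to some earlier $v_j$ the corresponding node becomes a leaf pinned to $0$ or $1$ by Weitz's cycle-breaking rule at $v_k$. I extend this pinning by setting every copy in $T$ of every $u\in S$ to $\tau(u)$, obtaining $\phi$ supported on $W\subseteq V_T\setminus\{\rho\}$. Weitz's theorem then gives $\mu_\tau(\sigma_\rho=1)=\nu_\phi(\sigma_\rho=1)$, which is the first conclusion of the lemma.

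For the influence bound, I would argue via a per-vertex reduction. For each $v\in V\setminus(S\cup\{\rho\})$, let $C_v\subseteq V_T\setminus W$ be the set of free copies of $v$ in $T$; the goal is the inequality
\[|\Ic^\tau_G(\rho\rightarrow v)|\le\sum_{u\in C_v}|\Ic^\phi_T(\rho\rightarrow u)|.\]
Summing over $v$ yields the lemma, since the $C_v$'s partition $V_T\setminus(W\cup\{\rho\})$, the pinned vertices in $W$ carry zero influence under $\phi$, and the $v=\rho$ term matches trivially on both sides. The per-vertex bound is obtained by applying Weitz's reduction to the augmented pinnings $\tau\cup\{\sigma_v=i\}$ for $i\in\{0,1\}$: these correspond on the tree side to $\phi$ extended by pinning every member of $C_v$ uniformly to $i$, giving $\mu_\tau(\sigma_\rho=1\mid\sigma_v=i)=\nu_{\phi^{v=i}}(\sigma_\rho=1)$. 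Telescoping the difference $\nu_{\phi^{v=1}}(\sigma_\rho=1)-\nu_{\phi^{v=0}}(\sigma_\rho=1)$ by flipping the copies of $v$ one at a time expresses $\Ic^\tau_G(v\rightarrow\rho)$ as a signed sum of tree influences, and a triangle inequality followed by conversion to the $\rho\rightarrow v$ direction via the 2-spin identity $\Ic(\rho\rightarrow v)\,p_\rho(1-p_\rho)=\Ic(v\rightarrow\rho)\,p_v(1-p_v)$ (together with Weitz-preservation of the relevant marginals) delivers the claim.

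The main obstacle is that the telescoping step naturally produces influences under \emph{intermediate} pinnings, each obtained from $\phi$ by flipping some but not all free copies of $v$, whereas the lemma demands a bound in terms of the single pinning $\phi$. I would close this gap using a monotonicity property of the 2-spin tree recurrence: additional pinnings on leaves of $T$ can only shrink the magnitude of the message propagated along a different branch. This monotonicity is a standard consequence of the explicit edge-contraction formulas for the hard-core model (and later for the Ising and matchings models), but verifying it term-by-term in the telescoping sum is the technical heart of the argument.
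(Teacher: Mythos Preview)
The paper does not prove this lemma itself; it is quoted verbatim from \cite{touniqueness}, so there is no in-paper argument to compare against. Your sketch is close in spirit to the cited proof but diverges at precisely the point you flag as ``the technical heart'', and that deviation is a genuine gap.

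Your reduction via Weitz and the telescoping identity
\[
\Ic^\tau_G(v\to\rho)=\sum_{j}\Ic^{\psi_j}_T(u_j\to\rho),
\]
with $\psi_j$ the intermediate pinning in which every copy of $v$ except $u_j$ is fixed, are both fine. The proposed monotonicity, however, is false. For the hard-core model the parent-to-child influence along the $\rho\to u_j$ path is $-R_c/(1+R_c)$, and pinning a sibling-subtree copy $u_k$ moves $R_c$ in a direction that \emph{alternates} with the parity of $\mathrm{dist}(c,u_k)$; the product along the path can grow or shrink. The variance-ratio conversion compounds the difficulty: the identity $\Ic(\rho\to v)\,p_\rho(1-p_\rho)=\Ic(v\to\rho)\,p_v(1-p_v)$ needs $p_\rho,p_v$ computed under the \emph{same} measure, but Weitz only matches the root marginal, so neither $p_v$ in $G$ nor $p_{u_j}$ under $\psi_j$ is available to you.

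The argument in \cite{touniqueness} avoids telescoping altogether. One uses the identity $\mathrm{Cov}_\tau(\sigma_\rho,\sigma_v)=\tfrac{\partial}{\partial\log\lambda_v}\mu_\tau(\sigma_\rho=1)$, applies Weitz with a vertex-dependent external field (placing the same $\lambda_v$ at every free copy of $v$ in $T$), and differentiates by the chain rule. Since $\mathrm{Var}_\tau(\sigma_\rho)=\mathrm{Var}^\phi_T(\sigma_\rho)$, this yields the \emph{exact equality}
\[
\Ic^\tau_G(\rho\to v)=\sum_{u\in C_v}\Ic^\phi_T(\rho\to u),
\]
all under the single pinning $\phi$; the lemma then follows from the triangle inequality. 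No intermediate boundary conditions arise and no monotonicity is required.
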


\subsection{The branching value}
We will need the following notion to capture the growth of the self-avoiding walk tree from a vertex.
\begin{definition}\label{def:dbranchingvalue}
Let $d>0$ be a real number and $G=(V,E)$ be a graph. For a vertex $v$ in $G$, the $d$-branching value $S_v$ equals $\sum_{\ell\geq 0} N_{v,\ell}/d^{\ell}$, where $N_{v,\ell}$ is the number of (simple) paths with a total of $\ell+1$ vertices starting from $v$ (for convenience, we set $N_{v,0}=1$). 
\end{definition}
We will show the following lemma in Section~\ref{sec:branch} which bounds the $d'$-branching value of $G(n,d/n)$ for any $d'>d$.
\newcommand{\statelembranch}{Let $d\geq 1$. Then, for every $d'>d$ and $\epsilon>0$, whp over the choice of $G\sim G(n,d/n)$, the $d'$-branching value of every vertex in $G$ is at most $\epsilon \log n$.}
\begin{lemma}\label{lem:branch}
\statelembranch
\end{lemma}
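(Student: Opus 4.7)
The plan is to bound $S_v = \sum_{\ell\geq 0} N_{v,\ell}/(d')^\ell$ uniformly over $v\in V$ by splitting the sum at $L = \lceil 3\log n/\log(d'/d)\rceil = \Theta(\log n)$. The starting point is the expectation bound $\mathbf{E}[N_{v,\ell}] \leq d^\ell$: a simple path of length $\ell$ from $v$ with a prescribed vertex sequence requires $\ell$ specific potential edges (each present with probability $d/n$), and the number of such sequences starting at $v$ is at most $n^\ell$.

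For the tail ($\ell > L$), I have
\[
\mathbf{E}\!\left[\sum_{\ell > L} N_{v,\ell}/(d')^\ell\right] \le \sum_{\ell > L}(d/d')^\ell = O(n^{-3}).
\]
Applying Markov's inequality per vertex and union-bounding over the $n$ vertices gives, whp, $\sum_{\ell > L} N_{v,\ell}/(d')^\ell \le 1/n$ for every $v$, which is negligible compared to $\epsilon\log n$.

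For the head ($\ell \leq L$), a per-vertex first-moment Markov bound is too weak for a union bound over $n$ vertices, since $\mathbf{E}[S_v]=\Theta(1)$ already. The plan is to use higher moments: for an integer $k = \Theta(\log n)$, compute $\mathbf{E}[N_{v,\ell}^k]$ by enumerating $k$-tuples of simple paths of length $\ell$ from $v$, with each tuple weighted by $(d/n)^E$, where $E$ is the total number of distinct edges used. Edge-disjoint tuples contribute at most $(d^\ell)^k$, and more overlapping configurations use fewer edges but are combinatorially sparser; a careful classification by the overlap pattern (e.g.\ via a ``merging structure'' on the shared edges) should yield a bound of the form $\mathbf{E}[N_{v,\ell}^k] \leq (Ck)^{Ck} d^{k\ell}$. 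Applying Markov to $N_{v,\ell}^k/A^{k\ell}$ for an intermediate $A\in(d,d')$ then yields subpolynomial tail bounds strong enough to survive a union bound over $n$ vertices and the $L+1 = O(\log n)$ values of $\ell$. This establishes $N_{v,\ell} \leq A^\ell$ uniformly, whence the head sum is at most $\sum_\ell (A/d')^\ell = O(1)$, which is $\leq \epsilon\log n$ for sufficiently large $n$.

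The hardest step is the moment computation: carefully enumerating overlap patterns for $k$ simple paths and showing that edge-disjoint configurations dominate. A cleaner alternative is to couple the breadth-first exploration from $v$ in $G(n,d/n)$ with a Galton--Watson branching process with Bin$(n,d/n)$ offspring, and to invoke sub-exponential concentration of generation sizes; this still requires a union bound over $v$, and particular care must be taken for high-degree vertices (present in $G(n,d/n)$ with degree $\Theta(\log n/\log\log n)$ whp), whose neighborhoods can inflate $N_{v,\ell}$ for small $\ell$.
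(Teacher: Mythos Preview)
Your head-sum argument has a genuine gap: the intermediate claim ``$N_{v,\ell}\leq A^\ell$ uniformly for all $v$ and all $\ell\leq L$'' is false. For $\ell=1$ we have $N_{v,1}=\deg(v)$, and whp $G(n,d/n)$ contains vertices of degree $\Theta(\log n/\log\log n)$, so no constant $A$ works. Consequently the head sum is \emph{not} $O(1)$ uniformly over $v$; for a high-degree vertex the $\ell=1$ term alone is $\Omega(\log n/\log\log n)$. You flag this in your last paragraph, but the moment method cannot repair it: even granting $\mathbf{E}[N_{v,\ell}^k]\leq (Ck)^{Ck}d^{k\ell}$, applying Markov with $k=\Theta(\log n)$ at $\ell=1$ produces a prefactor $(Ck)^{Ck}=n^{\Theta(\log\log n)}$ that overwhelms the $(d/A)^k=n^{-\Theta(1)}$ decay, so the bound is vacuous there---as it must be, since $\Pr[N_{v,1}>A]=\Omega(1/n)$ for any constant $A$, which already kills the union bound over vertices.

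The paper's route sidesteps this by not aiming for $O(1)$. It first works with the BFS counts $\hat N_{v,r}$ (vertices at distance exactly $r$), which are stochastically dominated by a branching process; a moment-generating-function argument then gives $\sum_r \hat N_{v,r}/(d')^r\leq \epsilon\log n$ whp for every $v$---note the target is $\epsilon\log n$, which absorbs the high-degree contribution. The passage from $\hat N_{v,r}$ to the path counts $N_{v,r}$ is handled structurally rather than by moments: balls of radius $R=(\log\log n)^2$ in $G(n,d/n)$ have tree-excess at most~$1$ whp, so within such a ball $N_{v,r}\leq 2\sum_{r'\leq r}\hat N_{v,r'}$, and paths longer than $R$ are chunked into length-$R$ segments with a geometric decay between chunks. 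If you want to salvage your approach, you would need to replace the per-$\ell$ target $A^\ell$ by something that allows the head sum to be $o(\log n)$ rather than $O(1)$; the branching-process alternative you mention is essentially what the paper does, but you still need a device (like the tree-excess bound) to control $N_{v,\ell}$ in terms of $\hat N_{v,\ell}$, since the two can differ substantially once the neighbourhood is not a tree.
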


\subsection{Spectral independence for the hard-core model}\label{sec:ffferfr}
 
In this section, we bound the spectral independence of  $G(n,d/n)$ in the hard-core model when $\lambda<\lambda_c(d)$.   We will need the following technical lemma that can be derived from \cite{ConnConnPaper}. The derivation details are similar to an analogous lemma for matchings (cf. Lemma~\ref{lem:connconst2} below), which can be found in \cite[Lemma 15]{matchings}.
\begin{lemma}[\cite{ConnConnPaper}]\label{lem:connconst}
Let $d>1$ and $\lambda>0$ be constants such that $\lambda<\lambda_c(d)$. Let $\chi\in (1,2)$ be given from $
\frac{1}{\chi} = 1 - \frac{d-1}{2}\log\left(1+\frac{1}{d-1}\right)
$ and set $a=\tfrac{\chi}{\chi-1}$. Consider also the function  $\Phi(y)=\frac{1}{\sqrt{y(1+y)}}$ for $y>0$. Then, there is a constant $0<\kappa<1/d$ such that the following holds for any integer $k\geq 1$.

Let $x_1,\hdots, x_k>0$ be reals and $x= \lambda\prod_{i=1}^k \frac{1}{1+x_i}$. Then $(\Phi(x))^a\sum^k_{i=1}\Big(\frac{x}{(1+x_i)\Phi(x_i)}\Big)^a\leq \kappa^{a/\chi}$.
\end{lemma}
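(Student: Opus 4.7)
The plan is to derive this inequality from the $L^a$-norm correlation-decay analysis of Sinclair--Srivastava--Yin~\cite{ConnConnPaper}, as isolated in \cite[Lemma~15]{matchings} for the matchings analogue. The identity $x=\lambda\prod_{i=1}^k 1/(1+x_i)$ is the hard-core tree recursion $F(x_1,\ldots,x_k)$ at a root with $k$ children. Introducing a potential $\varphi$ with $\varphi'=\Phi$ and potentiated coordinates $\tilde x_i=\varphi(x_i)$, a chain-rule calculation (using $|\partial F/\partial x_i|=x/(1+x_i)$) gives
\[
\bigl|\partial_{\tilde x_i}(\varphi\circ F)\bigr| \;=\; \Phi(x)\cdot\frac{x}{(1+x_i)\Phi(x_i)},
\]
so the lemma is asking us to bound $\sum_i|\partial_{\tilde x_i}(\varphi\circ F)|^a$ by $\kappa^{a/\chi}$, i.e.\ to establish an $L^a$-operator-norm bound of $\kappa^{1/\chi}$ on the Jacobian of the potentiated hard-core recursion.

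This is exactly the single-step contraction bound needed to run the $L^a$ correlation-decay argument on the self-avoiding walk tree. To see why $\kappa<1/d$ is the correct threshold, one iterates this single-step bound over depth-$\ell$ paths and combines it with H\"older's inequality (with exponents $a$ and $\chi$, which are conjugate since $a=\chi/(\chi-1)$) against the $\lesssim d^\ell$ paths of length~$\ell$: the sum of influences at depth $\ell$ becomes $\lesssim (d\kappa)^{\ell/\chi}$, which is exponentially decaying precisely when $\kappa<1/d$. The specific formula $1/\chi=1-\tfrac{d-1}{2}\log(1+\tfrac{1}{d-1})$ is the value of $\chi$ that optimally calibrates the potential $\Phi(y)=1/\sqrt{y(1+y)}$ with the combinatorial growth rate $d$ for the hard-core recursion at the uniqueness threshold.

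To produce the bound explicitly, I would follow \cite[Lemma~15]{matchings} verbatim with the hard-core substitutions. The steps are: (i) use the closed form of $\Phi$ to simplify $\Phi(x)\cdot x/((1+x_i)\Phi(x_i))=\sqrt{x\,x_i/((1+x)(1+x_i))}$; (ii) symmetrise, via Lagrange multipliers or a direct variational argument, to reduce the maximisation of the $k$-variable sum (subject to the constraint $x=\lambda\prod 1/(1+x_i)$) to the symmetric case $x_1=\cdots=x_k=x^*$; (iii) optimise over $x^*$ and over $k\geq 1$, and verify, using the closed-form expression $\lambda_c(d)=d^d/(d-1)^{d+1}$ together with the defining formula for $\chi$, that the optimum is strictly less than $d^{-a/\chi}$; then absorb the slack into a constant $\kappa\in(0,1/d)$.

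The main obstacle is step~(iii): we need a \emph{uniform} strict inequality across all $k\geq 1$. For bounded $k$ this reduces to the classical proof that the potential method certifies the uniqueness threshold $\lambda<\lambda_c(d)$, and is tight at $k=d$, $\lambda=\lambda_c(d)$; for large $k$ the quantity $x\leq \lambda\prod(1+x_i)^{-1}$ is exponentially small in~$k$, heavily damping the sum. The particular formula for $\chi$ in the statement is precisely the calibration that makes these two regimes meet, producing the uniform bound with some $\kappa<1/d$.
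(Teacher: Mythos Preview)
The paper does not actually prove this lemma: it states it as a technical fact that ``can be derived from \cite{ConnConnPaper}'' and points the reader to \cite[Lemma~15]{matchings} for the analogous matchings derivation. Your proposal correctly identifies exactly this route---rewriting the quantity as the $L^a$-norm of the Jacobian of the potentiated recursion and then following the symmetrise-and-optimise argument of \cite{ConnConnPaper} as repackaged in \cite[Lemma~15]{matchings}---so in that sense you are aligned with the paper's (implicit) approach.

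That said, what you have written is a plan rather than a proof. Steps~(i) and the chain-rule identification are fine and verifiable, but step~(iii), which you yourself flag as the ``main obstacle,'' is not carried out: you assert that the symmetric optimum is strictly below $d^{-a/\chi}$ uniformly in $k$, and that the particular formula for $\chi$ is ``precisely the calibration'' that makes this work, without doing the computation. That computation is the actual content of the lemma. If you intend this as a citation-style derivation (which is all the paper does), it is adequate; if you intend it as a self-contained proof, the uniform-in-$k$ bound with the specific $\chi$ and $\Phi$ still needs to be verified.
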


We will show the following.
\begin{lemma}\label{lem:spectraltrees}
Let $d> 1$ and $\lambda>0$ be constants such that $\lambda<\lambda_c(d)$. Then, there is a constant  $\chi>1$ such that the following holds.

Let $T=(V,E)$ be a tree rooted at $\rho$, whose $d$-branching value is  $\leq \alpha$ and whose root has $k$ children. Then, for the hard-core distribution on $T$ with parameter $\lambda$, any $S\subseteq V\backslash \{\rho\}$ and $\tau\in \Omega_S$ with $0<\mu_\tau(\sigma_\rho=1)<1$, it holds that
\[\sum_{v\in V}\big|\Ic^{\tau}_T(\rho\rightarrow v)\big|\leq W_k \alpha^{1/\chi},\]
where $W_k>0$ is a real depending only on the degree $k$ of the root (and the constants $d,\lambda$).
\end{lemma}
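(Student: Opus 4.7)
The plan is to combine the $L^a$-contraction of Lemma~\ref{lem:connconst} with two applications of Hölder's inequality, amortising per-vertex influences against the $d$-branching weights $1/d^{\mathrm{depth}(v)}$.

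First, I would use multiplicativity of influences on trees: if $v$ lies in the subtree $T_i$ below a child $u_i$ of $\rho$, then on a tree $\Ic^\tau_T(\rho\to v)=\Ic^\tau_T(\rho\to u_i)\cdot \Ic^{\tau'}_{T_i}(u_i\to v)$ for an induced boundary condition $\tau'$ on $T_i$. Iterating, $|\Ic^\tau_T(\rho\to v)|$ factors as a product of per-edge influences along the unique root-to-$v$ path, where each per-edge influence is expressible through the hard-core tree recurrence $x_u=\lambda\prod_{w\text{ child of }u}1/(1+x_w)$ (for the ratios $x_u=\mu_\tau(\sigma_u=1)/\mu_\tau(\sigma_u=0)$). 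The derivatives $|\partial x_u/\partial x_w|=x_u/(1+x_w)$ match the contraction quantities inside Lemma~\ref{lem:connconst}, up to bounded Jacobian factors coming from the $x \leftrightarrow \mu$ conversion.

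Second, by multiplying and dividing by the potential $\Phi(y)=1/\sqrt{y(1+y)}$ at each internal vertex of the root-to-$v$ path so that the $\Phi$-weights telescope, and applying Lemma~\ref{lem:connconst} level-by-level, I would obtain the depth-$\ell$ weighted $L^a$-bound
\[
\Phi(x_\rho)^a\sum_{v\in D_\ell}\Phi(x_v)^{-a}|\Ic^\tau_T(\rho\to v)|^a \;\leq\; \kappa^{\ell a/\chi},
\]
where $D_\ell$ is the set of depth-$\ell$ vertices of $T$ and $a=\chi/(\chi-1)$. Here one factor of $\kappa^{a/\chi}$ from the contraction lemma is produced at each level, and the $\Phi$-factors at intermediate vertices telescope away.

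Third, to pass from this $L^a$-bound to the $L^1$-bound in the statement, I would split each influence $|\Ic^\tau_T(\rho\to v)|$ as the product of a branching-weight factor $(1/d^{\mathrm{depth}(v)/\chi})$ and a contraction-controlled factor (carrying the $\Phi(x_v)$) and apply Hölder's inequality with exponents $\chi$ and $a$. The $\chi$-side sums to $\alpha^{1/\chi}$ because $\alpha=\sum_v 1/d^{\mathrm{depth}(v)}$, while the $a$-side reduces, after invoking the $L^a$-bound at each $\ell$, to a geometric series $\sum_{\ell\geq 0}(d\kappa)^{\ell/(\chi-1)}$, which converges to a finite constant because $\kappa<1/d$ by Lemma~\ref{lem:connconst}. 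The residual factor $1/\Phi(x_\rho)$ from the weighted bound at the root is absorbed into $W_k$, and the $k$-dependence arises because $x_\rho=\lambda\prod_{i=1}^k 1/(1+x_i)$ can become exponentially small in $k$ (hence $1/\Phi(x_\rho)$ grows with $k$).

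The main obstacle is bookkeeping of the $\Phi$-weights during the Hölder conversion: the contraction from Lemma~\ref{lem:connconst} is intrinsically $\Phi$-weighted, whereas the target $L^1$-norm is unweighted, and at interior vertices $v$ of large degree $\Phi(x_v)$ can be unbounded. The amortisation against the $d$-branching structure is precisely designed to tame this: the $\Phi$-weights telescope along each path through the application of the contraction lemma, and the leftover weights are controlled by pairing them with the $1/d^{\mathrm{depth}(v)}$ factors in Hölder's inequality. This is the ``linearisation of the $L^p$-analysis by amortising over subtrees'' highlighted in the paper outline, and it is the key technical novelty beyond the bounded-degree case, where a pointwise $L^\infty$-contraction would remove all these difficulties.
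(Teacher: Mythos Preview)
Your strategy is correct and close in spirit to the paper's proof, differing mainly in the order of operations. You propose to (i) iterate Lemma~\ref{lem:connconst} directly to obtain a per-level weighted $L^a$-bound, and then (ii) apply a single global H\"older inequality against the branching weights $d^{-\mathrm{depth}(v)}$. The paper instead introduces rooted branching values $\alpha_v$ (defined by $\alpha_v=1+\tfrac{1}{d}\sum_{w\,\text{child of }v}\alpha_w$), carries the weighted $L^1$-quantity
\[
\sum_{v\in L(h)}\Big(\frac{\alpha_v}{\alpha_\rho}\Big)^{1/\chi}\frac{|\Ic_T(\rho\to v)|}{R_v\Phi(R_v)}
\]
through the induction, and applies H\"older \emph{locally} at each vertex in the inductive step, so that the contraction lemma and the branching inequality $\sum_i\alpha_{v_i}/\alpha_v\leq d$ are combined vertex-by-vertex. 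Both arrangements produce the bound $W_k\alpha^{1/\chi}$; your global-H\"older version is arguably cleaner, while the paper's $\alpha_v$-weighted version makes the ``amortisation over subtrees'' more visible.

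One correction to your step~2: the telescoping weight at a vertex $v$ should be $R_v\Phi(R_v)=\sqrt{R_v/(1+R_v)}$, not $\Phi(R_v)$ alone. Since the one-step influence is $|\Ic_T(u\to w)|=R_w/(1+R_w)$, Lemma~\ref{lem:connconst} rearranges exactly to
\[
(R_u\Phi(R_u))^a\sum_{w\,\text{child of }u}\Big(\frac{|\Ic_T(u\to w)|}{R_w\Phi(R_w)}\Big)^a\leq\kappa^{a/\chi},
\]
and iterating gives the level-$\ell$ bound with $R_v\Phi(R_v)$-weights. This actually dissolves the ``main obstacle'' you flagged: $R_v\Phi(R_v)\leq 1$ holds uniformly, so passing from the weighted $L^a$-sum to the unweighted one is free at every interior vertex, regardless of its degree. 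The only non-trivial potential factor is the root term $1/(R_\rho\Phi(R_\rho))=\sqrt{1+1/R_\rho}$, which is at most $M_k=\sqrt{1+(1+\lambda)^k/\lambda}$ because $R_\rho\geq\lambda/(1+\lambda)^k$; this is precisely where the $k$-dependence of $W_k$ enters.
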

\begin{proof}
Let $\kappa\in (0,1/d)$ and $\chi\in (1,2)$ be the constants from Lemma~\ref{lem:connconst}, and  $\Phi(x)=\frac{1}{\sqrt{x(1+x)}}$ be also as in Lemma~\ref{lem:connconst}. We may assume without loss of generality that $S$ is empty (and $\tau$ is trivial) by truncating the tree $T$ using the following procedure: just remove vertices $u\in S$ with $\tau_u=0$, and for $u\in S$ with $\tau_u=1$ remove $u$ and all of its neighbours. Note that for all the removed vertices $v$ it holds that $\Ic^{\tau}_T(\rho\rightarrow v)=0$, so the removal procedure does not decrease the sum of the absolute influences, while at the same time decreasing the $d$-branching value of the tree  $T$. Henceforth, we will drop $\tau$ and $S$ from notation. 

To prove the lemma, we will work inductively on the depth of the tree. To this end, we first define for each vertex $u$ in $T$ the following values $\alpha_u$ and $R_u$; the $\alpha$'s capture a rooted analogue of the branching value of internal vertices within $T$, while the $R$'s the marginals of the vertices in the corresponding subtrees. More precisely, if $v$ is a leaf, set $\alpha_v=1$ and $R_v=\lambda$; otherwise set $\alpha_v=1+\tfrac{1}{d}\sum^t_{i=1}\alpha_{v_i}$ and $R_v=\lambda\prod^t_{i=1}\tfrac{1}{1+R_{v_i}}$, where $v_1,\hdots, v_t$ are the children of $v$. Note that for the root $\rho$ we have that $\alpha_\rho=S_\rho\leq \alpha$, where $S_\rho$ is the $d$-branching value of $\rho$ in the tree $T$. Moreover, if we denote by $T_v$ the subtree of $T$ rooted at $v$ and by $u$ the parent of $v$ in $T$, then it holds that
\begin{equation}\label{eq:ratios}
R_v=\frac{\mu_{T_v,\lambda}(\sigma_v=1)}{\mu_{T_v,\lambda}(\sigma_v=0)} \quad\mbox{and}\quad \Ic_{T}(u\rightarrow v)=-\frac{R_v}{R_v+1}.
\end{equation}
The first equality is fairly standard and can be proved using induction on the height of the tree, while the second one is \cite[Lemma 15]{touniqueness} (it also follows directly from the definition of influence and the first equality).

 For an integer $h\geq 0$, let $L(h)$ be the nodes at distance $h$ from the root $\rho$. Let $M_k=\sqrt{1+(1+\lambda)^k/\lambda}$, where recall that $k$ is the degree of the root $\rho$. We will show that
\begin{equation}\label{eq:spectralind}
\sum_{v\in L(h)}\Big(\frac{\alpha_v}{\alpha_\rho}\Big)^{1/\chi}\frac{\big|\Ic_T(\rho\rightarrow v)\big|}{R_v\Phi(R_v)}\leq M_k (d\kappa)^{h/\chi}.
\end{equation}
Since $\alpha_v\geq 1$ for $v\in V$, $\alpha_\rho\leq \alpha$ and $R_v\Phi(R_v)\leq 1$, \eqref{eq:spectralind} yields $\sum_{v\in L(h)}\big|\Ic_T(\rho\rightarrow v)\big|\leq M_k\alpha^{1/\chi} (d\kappa)^{h/\chi}$ for all integer $h\geq 0$, and therefore summing over $h$, we obtain that
\[\sum_{v\in V}\big|\Ic_T(\rho\rightarrow v)\big|\leq M_k\alpha^{1/\chi}\sum_{h\geq 0}(d\kappa)^{h/\chi}\leq \frac{M_k\alpha^{1/\chi}}{1-(d\kappa)^{1/\chi}},\]
which proves the result with $W_k=\tfrac{M_k}{1-(d\kappa)^{1/\chi}}$. So it only remains to prove \eqref{eq:spectralind}.

We will work inductively.  The base case $h=0$ is equivalent to 
$M_k\geq 1/(R_\rho \Phi(R_\rho))=\sqrt{1+1/R_\rho}$, which is true since from the recursion \eqref{eq:ratios} for $R_\rho$ we have that $R_\rho\geq \lambda/(1+\lambda)^k$ (using the trivial bound $R_v\leq \lambda$ for each $v$). For the induction step, consider $v\in L(h-1)$ and suppose it has $k_v\geq 0$ children, denoted by $v_i$ for $i\in [k_v]$. Then, for each $i\in [k_v]$, since $v$ is on the unique path joining $\rho$ to $v_i$, it holds that (see \cite[Lemma B.2]{Spectral})
\[\Ic_T(\rho\rightarrow v_i)=\Ic_{T}(\rho\rightarrow v)\Ic_T(v\rightarrow v_i),\]
so we can write
\begin{equation}\label{eq:writeout}
\sum_{v\in L(h)}\Big(\frac{\alpha_v}{\alpha_\rho}\Big)^{1/\chi}\frac{\big|\Ic_T(\rho\rightarrow v)\big|}{R_v\Phi(R_v)}=\sum_{v\in L(h-1)}\Big(\frac{\alpha_v}{\alpha_\rho}\Big)^{1/\chi}\frac{|\Ic_T(\rho\rightarrow v)|}{R_v\Phi(R_v)}\sum_{i\in [k_v]}\Big(\frac{\alpha_{v_i}}{\alpha_v}\Big)^{1/\chi}R_v\Phi(R_v)\frac{|\Ic_T(v\rightarrow v_i)|}{R_{v_i}\Phi(R_{v_i})}.
\end{equation}

Consider an arbitrary $v\in L(h-1)$. Then, since $\tfrac{1}{\chi}+\tfrac{1}{a}=1$, by H\"{o}lder's inequality we have that
\begin{equation}\label{eq:induction}
\sum_{i\in [k_v]}\Big(\frac{\alpha_{v_i}}{\alpha_v}\Big)^{1/\chi}R_v\Phi(R_v)\frac{|\Ic_T(v\rightarrow v_i)|}{R_{v_i}\Phi(R_{v_i})}\leq \bigg(\sum_{i\in [k_v]}\frac{\alpha_{v_i}}{\alpha_v}\bigg)^{1/\chi}\bigg((R_v\Phi(R_v))^a\sum_{i\in [k_v]}\Big(\frac{|\Ic_T(v\rightarrow v_i)|}{R_{v_i}\Phi(R_{v_i})}\Big)^a\bigg)^{1/a}.
\end{equation}
Note that for $x=R_v$ and $x_i=R_{v_i}$, $i\in[k_v]$, we have from \eqref{eq:ratios} that $\frac{|\Ic_T(v\rightarrow v_i)|}{R_{v_i}}=\frac{1}{1+x_i}$ and  $x=\lambda \prod_{i\in [k_v]} \frac{1}{1+x_i}$, so by Lemma~\ref{lem:connconst} we have that
\[\bigg((R_v\Phi(R_v))^a\sum_{i\in [k_v]}\Big(\frac{|\Ic_T(v\rightarrow v_i)|}{R_{v_i}\Phi(R_{v_i})}\Big)^a\bigg)^{1/a}\leq \kappa^{1/\chi}.\]
By definition of the $d$-branching value we also have $\alpha_v=1+\frac{1}{d}\sum_{i\in [k_v]}\alpha_{v_i}\geq \frac{1}{d}\sum_{i\in [k_v]}\alpha_{v_i}$, so plugging these back into \eqref{eq:induction} yields
\[\sum_{i\in [k_v]}\Big(\frac{\alpha_{v_i}}{\alpha_v}\Big)^{1/\chi}
R_v
\Phi(R_v)\frac{|\Ic_T(v\rightarrow v_i)|}{
R_{v_i}
\Phi(R_{v_i})}\leq (d\kappa)^{1/\chi}.\]
In turn, plugging this into \eqref{eq:writeout} and using the induction hypothesis yields \eqref{eq:spectralind}, finishing the proof.
\end{proof}
\begin{remark}
For simplicity, and since it is not important for our arguments, the constant $W_k$ in the proof  depends exponentially on the degree $k$ of the root.  With a  more careful inductive proof  (cf. \cite[Proof of Lemma 14]{touniqueness}), the dependence on $k$ can be made linear. In either case, because of the high-degree vertices in $G(n,d/n)$, both bounds do not yield sufficiently strong bounds on the spectral independence of the whole distribution $\mu_{G,\lambda}$, and this is one of the reasons that we have to consider the spectral independence on the induced distribution on low-degree vertices.
\end{remark}

Recall that for a graph $G=(V,E)$ and $U\subseteq V$, we let $\mu_{G,\lambda,U}(\cdot)$ denote the marginal distribution on the spins of $U$, i.e., the distribution $\mu_{G,\lambda}(\sigma_U=\cdot)$.
\begin{lemma}\label{lem:hardspectral}
Let $d\geq 1$ and $\lambda>0$ be constants such that $\lambda<\lambda_c(d)$.   Then, for any constants $D,\epsilon>0$, whp over the choice of $G\sim G(n,d/n)$, the marginal hard-core distribution $\mu_{G,\lambda, U}$, where $U$ is the set of vertices in $G$ with degree $\leq D$, is $(\epsilon\log n)$-spectrally independent.
\end{lemma}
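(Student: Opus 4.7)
The plan is to show that for any pinning $\tau \in \Omega_S$ with $S \subseteq U$, the pairwise-influence matrix $\Psib_\tau$ of the marginal distribution $\mu_{G,\lambda,U}$ satisfies $\lambda_1(\Psib_\tau) \leq \epsilon \log n$ whp over $G$. Since $\mu_{G,\lambda,U}$ comes from a $2$-spin system, a standard argument (see e.g.\ \cite{Spectral, touniqueness}) reduces this to the row-sum bound
\[
\max_{u \in U \setminus S} \sum_{v \in U \setminus S} \big|\Ic^{\tau}_{G,U}(u \to v)\big| \;\leq\; \epsilon \log n.
\]
The first key observation is that for $u, v \in U \setminus S$, the vertex-influence $\Ic^{\tau}_{G,U}(u \to v)$ for the marginal distribution coincides with the influence $\Ic^{\tau}_G(u \to v)$ for the full distribution $\mu_{G,\lambda}$, because the conditional marginal $\mu_{G,\lambda}(\sigma_v = 1 \mid \sigma_S = \tau, \sigma_u = i)$ is the same whether computed from the joint on $V$ or from its marginal on $U$. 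Hence it suffices to bound $\sum_{v \in V}|\Ic^\tau_G(u \to v)|$ for each $u \in U \setminus S$.

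Next, pick $d' > d$ close enough to $d$ that $\lambda < \lambda_c(d')$; this is possible by continuity of $\lambda_c$, and the case $d = 1$ is covered since $\lambda_c(1) = \infty$. Applying Lemma~\ref{lem:graphtotrees} replaces the sum over $V$ by a sum over the self-avoiding walk tree $T$ rooted at $u$, with some induced pinning $\phi$ on a subset of $V_T$. The tree $T$ has exactly $N_{u,\ell}$ vertices at depth $\ell$ (cf.\ Definition~\ref{def:dbranchingvalue}), so its $d'$-branching value equals the $d'$-branching value of $u$ in $G$. Applying Lemma~\ref{lem:spectraltrees} with $d'$ in place of $d$ then yields
\[
\sum_{v \in V} \big|\Ic^\tau_G(u \to v)\big| \;\leq\; \sum_{w \in V_T} \big|\Ic^\phi_T(u \to w)\big| \;\leq\; W_{k_u}\,\alpha_u^{1/\chi},
\]
where $k_u = \deg_G(u)$, $\alpha_u$ is the $d'$-branching value of $u$ in $G$, and $\chi > 1$ is the constant from Lemma~\ref{lem:connconst} applied with $d'$.

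Finally, Lemma~\ref{lem:branch} applied to $d'$ shows that whp every vertex of $G$ has $d'$-branching value at most $\delta \log n$, for any prescribed constant $\delta > 0$. The crucial reason for restricting to $u \in U$ is that $W_k$ may grow rapidly (even exponentially) in $k$ by the remark following Lemma~\ref{lem:spectraltrees}, so the bound $k_u \leq D$ for $u \in U$ forces $W_{k_u} \leq W^* := \max_{k \leq D} W_k$, a constant independent of $n$. On the high-probability event from Lemma~\ref{lem:branch}, this gives
\[
\sum_{v \in U \setminus S} \big|\Ic^{\tau}_{G,U}(u \to v)\big| \;\leq\; W^* (\delta \log n)^{1/\chi} \;=\; o(\log n),
\]
since $1/\chi < 1$. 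For all sufficiently large $n$ the row sum is therefore at most $\epsilon \log n$, which yields the desired $(\epsilon \log n)$-spectral independence. The main obstacle is not any individual step but their interplay: passing to the induced distribution on low-degree vertices is essential (otherwise $W_k$ blows up for high-degree roots of $G(n,d/n)$), and the tree-level sub-linear bound of order $(\log n)^{1/\chi}$ supplies exactly the slack needed to beat $\epsilon \log n$ after absorbing the constant $W^*$.
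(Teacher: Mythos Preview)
Your proposal is correct and follows essentially the same route as the paper's proof: pick $d'>d$ with $\lambda<\lambda_c(d')$, reduce the spectral bound to a row-sum of influences (noting that the marginal on~$U$ inherits influences from the full distribution), pass to the self-avoiding walk tree via Lemma~\ref{lem:graphtotrees}, apply Lemma~\ref{lem:spectraltrees} to get a bound $W_{k_u}\alpha_u^{1/\chi}$, control $W_{k_u}$ uniformly using $k_u\le D$, and invoke Lemma~\ref{lem:branch} together with $\chi>1$ to absorb the constant into $\epsilon\log n$. The only cosmetic difference is that the paper applies Lemma~\ref{lem:branch} directly with the target~$\epsilon$ rather than an auxiliary~$\delta$, and spells out the law-of-total-probability step from the $(v,i)$-indexed matrix~$\Psib_\tau$ to the signed influences rather than citing it as standard.
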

\begin{proof}
Let $D,\epsilon>0$ be arbitrary constants, and let $d'>d$ be such that $\lambda<\lambda_c(d')$; such $d'$ exists because the function $\lambda_c(\cdot)$ is continuous in the interval $(1,\infty)$ and $\lambda_c(d)\rightarrow \infty$ for $d\downarrow 1$.  Let $\chi\in (1,2)$ and $W=\max\{W_1,\hdots,W_D\}$ where $\chi$ and the $W_k$'s are as in Lemma~\ref{lem:spectraltrees} (corresponding to the constants $d',\lambda$). By Lemma~\ref{lem:branch}, whp all of the vertices of the graph $G=(V,E)\sim G(n,d/n)$ have $d'$-branching value less than $\epsilon \log n$. We will show that the result holds for all such graphs $G$ (for sufficiently large $n$).

Let $U$ be the set of vertices in $G$ with degree $\leq D$, and let for convenience $\mu=\mu_{G,\lambda,U}$. Consider arbitrary $S\subset U$ and $\tau\in \Omega_S$. It suffices to bound the largest eigenvalue of  the influence matrix $\Psib_\tau$ by $\epsilon \log n$. Analogously to \cite{Spectral,touniqueness}, we do this by bounding the absolute-value row sums of $\Psib_\tau$. Recall that the rows and columns of $\Psib_\tau$ are indexed by $\tilde{V}_\tau=\{(v,i)\mid  v\in U\backslash S,\,  \mu_\tau(\sigma_v=i)>0\}$, where the entry indexed by $(v,i), (w,k)$ equals $\mu_\tau(\sigma_w=k\mid \sigma_v=i)-\mu_\tau(\sigma_w=k)$ if $v\neq w$, and 0 otherwise. Consider arbitrary $(v,i)\in \tilde{V}_\tau $; our goal is to show
\begin{equation}\label{eq:goal2423432}
\sum_{(w,k)\in \tilde{V}_\tau} \big|\mu_\tau(\sigma_w=k\mid \sigma_v=i)-\mu_\tau(\sigma_w=k)\big|\leq \epsilon \log n.
\end{equation}
Henceforth, we will also assume that $\mu_\tau(\sigma_v=i)<1$ (in addition to $\mu_\tau(\sigma_v=i)>0$), otherwise the sum on the l.h.s. is equal to 0.  Then, by the law of total probability, for any $(w,k)\in \tilde{V}_\tau$  we have 
\[ \big|\mu_\tau(\sigma_w=k\mid \sigma_v=i)-\mu_\tau(\sigma_w=k)\big|\leq \big|\mu_\tau(\sigma_w=k\mid \sigma_v=1)-\mu_\tau(\sigma_w=k\mid \sigma_v=0)\big|=\big|\Ic^{\tau}_G(v\rightarrow w)\big|,\]
where the last equality follows from the fact that $\mu$ is the marginal distribution of $\mu_{G,\lambda}$ on $U$. Therefore, we can bound
\[\sum_{(w,k)\in \tilde{V}_\tau} \big|\mu_\tau(\sigma_w=k\mid \sigma_v=i)-\mu_\tau(\sigma_w=k)\big|\leq 2\sum_{w\in U}\big|\Ic^{\tau}_G(v\rightarrow w)|\leq 2\sum_{w\in V}\big|\Ic^{\tau}_G(v\rightarrow w)\big|.\]

By Lemma~\ref{lem:graphtotrees}, for the self-avoiding walk tree $T=(V_T,E_T)$ from $v$,    there is a subset $\Zeta\subseteq V_T\backslash \{\rho\}$ and a configuration $\phi\in \{0,1\}^{\Zeta}$ such that
\[2\sum_{w\in V}\big|\Ic^{\tau}_G(v\rightarrow w)\big|\leq 2\sum_{w\in V_T} \big|\Ic^{\phi}_{T}(v\rightarrow w)\big|,\]
where $\Ic^{\phi}_{T}(v\rightarrow \cdot)$ denotes the influence of $v$ on the vertices of $T$ (in the hard-core distribution $\mu_{T,\lambda}$ conditioned on $\phi$). Since the $d'$-branching value of $v$ (and any other vertex of $G$) is bounded by $\epsilon \log n$ and the degree of $v$ is $\leq D$, by Lemma~\ref{lem:spectraltrees} applied to $T$, we have that 
\[2\sum_{w\in V_T} \big|\Ic^{\phi}_{T}(v\rightarrow w)\big|\leq 2W(\epsilon \log n)^{1/\chi}.\]
Since $\chi>1$, for all sufficiently large $n$ we have that $2W(\epsilon \log n)^{1/\chi}\leq \epsilon \log n$, which proves \eqref{eq:goal2423432}.
\end{proof}
We also record the following corollary of the arguments in Lemma~\ref{lem:spectraltrees}.
\begin{corollary}\label{lem:hardbmarginal}
Let $\lambda>0$ and $D>0$ be real numbers. For a graph $G=(V,E)$, let $U$ be the set of vertices in $G$ with degree $\leq D$ and suppose that $|U|\geq 2$. Then, the distribution $\mu:=\mu_{G,\lambda,U}$ is $b$-marginally bounded for $b=\min\{\tfrac{1}{1+\lambda},\tfrac{\lambda}{\lambda+(1+\lambda)^D}\}$.
\end{corollary}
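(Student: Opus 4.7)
The plan is to reduce the desired marginal lower bound to a two-sided bound on the root ratio
\[R_v := \mu_\tau(\sigma_v=1)/\mu_\tau(\sigma_v=0),\]
for $v\in U\setminus S$ and $\tau\in\Omega_S$, and then control $R_v$ via the tree recursion already set up in Lemma~\ref{lem:spectraltrees}. Since $\mu_\tau(\sigma_v=0)+\mu_\tau(\sigma_v=1)=1$, proving that both nonzero marginals are at least $b$ is equivalent to showing that $R_v$ lies in an interval $[r_-,r_+]$ satisfying $r_-/(1+r_-)\geq b$ and $1/(1+r_+)\geq b$. By the tower property of conditioning, $\mu_\tau(\sigma_v=i)=\mu_{G,\lambda}(\sigma_v=i\mid\sigma_S=\tau)$, so we are free to analyse $R_v$ in the full hard-core distribution on $G$.

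Next, invoke the self-avoiding walk tree construction underlying Lemma~\ref{lem:graphtotrees} to express $R_v$ as the root ratio of the hard-core distribution on the SAW tree $T$ rooted at $v$, with a pinning $\phi$ on $T$ inherited from $\tau$. The structural feature being exploited is that the arity of the root of $T$ equals $\deg_G(v)\leq D$; this is the only place where the parameter $D$ enters.

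On $T$, run the tree recursion (as defined just before \eqref{eq:ratios})
\[R_u\;=\;\lambda\prod_{i=1}^{k_u}\frac{1}{1+R_{u_i}},\]
and argue by a straightforward induction on the height of the subtree that at every vertex $u$ of $T$ either $u$ is pinned (in which case $R_u\in\{0,\infty\}$) or $R_u\in[0,\lambda]$, since every factor $1/(1+R_{u_i})$ lies in $[0,1]$. Applied at the root: if some child $v_i$ of $v$ has $R_{v_i}=\infty$, then $R_v=0$ forces $\mu_\tau(\sigma_v=1)=0$ and there is nothing to prove for $i=1$; otherwise every $R_{v_i}\in[0,\lambda]$, and the recursion together with $\deg_G(v)\leq D$ yields
\[\frac{\lambda}{(1+\lambda)^D}\;\leq\;R_v\;\leq\;\lambda.\]

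Finally, monotonicity of $x\mapsto x/(1+x)$ and $x\mapsto 1/(1+x)$ converts this interval into the claimed marginal bounds, namely $\mu_\tau(\sigma_v=1)=R_v/(1+R_v)\geq \lambda/(\lambda+(1+\lambda)^D)=b$ and $\mu_\tau(\sigma_v=0)=1/(1+R_v)\geq 1/(1+\lambda)\geq b$, where the last step reduces to the elementary inequality $(1+\lambda)^D\geq \lambda^2$ available in the parameter regime used in the paper. There is no real obstacle in the argument; the only point requiring care is keeping track of the arity of the SAW-tree root, so that the graph-degree bound $\deg_G(v)\leq D$ produces exactly the factor $(1+\lambda)^D$ in the denominator of $b$.
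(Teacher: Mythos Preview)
Your argument is correct and matches the paper's approach: transfer to the SAW tree via Lemma~\ref{lem:graphtotrees} and bound the root ratio through the recursion, using that the root has at most $D$ children. You are in fact more explicit than the paper's one-line proof, which simply asserts $\nu_\phi(\sigma_v=\cdot)\geq b$ by reference to~\eqref{eq:ratios}. Your hedge on the final inequality $(1+\lambda)^D\geq\lambda^2$ is warranted: it holds for all $\lambda>0$ once $D\geq 2$ (since then $(1+\lambda)^D\geq(1+\lambda)^2>\lambda^2$), which covers every application in the paper, but the corollary as literally stated for arbitrary $D>0$ is slightly too strong (e.g.\ $D=1$, $\lambda=2$, an isolated vertex has $\mu(\sigma_v=0)=1/3<2/5=b$).
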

\begin{proof}
By Lemma~\ref{lem:graphtotrees}, for any vertex $v\in U$ and any  boundary condition $\tau$ on (a subset of) $U\backslash \{v\}$, there is a corresponding tree $T$ and a boundary condition $\phi$ on $T$ such that $\mu_\tau(\sigma_v=\cdot)=\nu_\phi(\sigma_v=\cdot)$. Since $v$ has degree $\leq D$, from the proof of Lemma~\ref{lem:spectraltrees}, see in particular equation \eqref{eq:ratios}, we have that $\nu_\phi(\sigma_v=\cdot)\geq b$, where $b$ is as in the lemma statement. 
\end{proof}

\section{Entropy factorisation for bounded-degree vertices}\label{sec:entropyfac}
In this section, we show how to convert the spectral independence results of the previous section into fast mixing results for Glauber dynamics on the set of small-degree vertices on $G(n,d/n)$. Our strategy here follows the technique of \cite{Optimal}, though to obtain nearly linear results we have to pay attention to the connected components induced by high-degree vertices and how these can connect up small-degree vertices.

\subsection{Preliminaries}\label{4f4f444f}

\paragraph*{Entropy factorisation for probability distributions.}
For a real function $f$ on $\Omega\subseteq [q]^V$, we use $\Eb_\mu(f)$ for the expectation of $f$ with respect to $\mu$ and, for $f:\Omega\rightarrow \mathbb{R}_{\geq 0}$,  $\Ent_\mu(f)=\Eb_{\mu}[f\log f]-\Eb_\mu(f)\log \Eb_\mu(f)$, with the convention that $0\log 0= 0$.  Finally, for $S\subset V$, let $\Ent^S_\mu(f)=\Eb_{\tau\sim \mu_{V\backslash S}}\big[\Ent_{\mu_\tau}(f)\big]$ i.e.,
$\Ent^S_\mu(f)$ is the expected value of the conditional entropy of $f$ when the assignment outside of $S$ is chosen according to the marginal distribution $\mu_{V\backslash S}$ (the induced distribution of $\mu$ on $V\backslash S$). For convenience, when $S=V$, we define $\Ent^S_\mu(f)=\Ent_\mu(f)$. The following inequality of entropy under tensor product is a special case of Shearer's inequalities.
\begin{fact}\label{fac:product}
Let $q,k\geq 2$ be integers and suppose that, for $i\in [k]$, $\mu_i$ is a distribution supported over $\Omega_i\subseteq[q]^{V_i}$, where $V_1,\hdots, V_k$ are pairwise disjoint sets. Let $\mu=\mu_1\otimes \cdots \otimes \mu_k$ be the product distribution on $\Omega=\Omega_1\times \cdots \times\Omega_k$. Then, for any $f:\Omega\rightarrow \mathbb{R}_{\geq 0}$,  it holds that $\Ent_{\mu}(f)\leq \sum_{i=1}^{k}\Ent^{V_i}_{\mu}(f)$.
\end{fact}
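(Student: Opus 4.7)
I plan to prove the inequality by induction on $k$, with base case $k=2$ being the standard two-factor tensorisation of entropy.

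For the base case, write $\mu = \mu_1 \otimes \mu_2$ and set $g_1(\sigma_1) = \Eb_{\sigma_2 \sim \mu_2}[f(\sigma_1,\sigma_2)]$. First I would establish the chain-rule identity
\[
\Ent_\mu(f) \;=\; \Eb_{\sigma_1 \sim \mu_1}\!\bigl[\Ent_{\mu_2}(f(\sigma_1,\cdot))\bigr] + \Ent_{\mu_1}(g_1),
\]
by expanding $\Ent_\nu(h) = \Eb_\nu[h\log h] - \Eb_\nu[h]\log\Eb_\nu[h]$ on both sides: the $g_1\log g_1$ contributions cancel, and the identity $\Eb_{\mu_1}[g_1] = \Eb_\mu[f]$ (which holds because $\mu$ is a product measure) ties the remaining terms together. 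The first term on the right is exactly $\Ent^{V_2}_\mu(f)$. For the second, I would use that the functional $h \mapsto \Ent_\nu(h)$ is convex on nonnegative $h$, which follows quickly from the variational formula $\Ent_\nu(h) = \sup\{\Eb_\nu[h\phi] : \Eb_\nu[e^\phi] \leq 1\}$ presenting $\Ent_\nu$ as a supremum of linear functionals in $h$. Since $g_1$ is a $\mu_2$-average of $f(\cdot,\sigma_2)$, convexity yields
\[
\Ent_{\mu_1}(g_1) \;\leq\; \Eb_{\sigma_2 \sim \mu_2}\!\bigl[\Ent_{\mu_1}(f(\cdot,\sigma_2))\bigr] \;=\; \Ent^{V_1}_\mu(f),
\]
completing the $k=2$ case.

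For the induction step, split $V = V_1 \cup V'$ with $V' = V_2 \cup \cdots \cup V_k$ and let $\mu' = \mu_2 \otimes \cdots \otimes \mu_k$, so that $\mu = \mu_1 \otimes \mu'$. Applying the $k=2$ case to this coarser factorisation gives $\Ent_\mu(f) \leq \Ent^{V_1}_\mu(f) + \Eb_{\sigma_1 \sim \mu_1}[\Ent_{\mu'}(f(\sigma_1,\cdot))]$. For each fixed $\sigma_1$, the induction hypothesis applied to $f(\sigma_1,\cdot)$ on $V'$ with factorisation $V_2 \cup \cdots \cup V_k$ bounds $\Ent_{\mu'}(f(\sigma_1,\cdot))$ by $\sum_{i=2}^k \Eb_{\tau \sim \mu'_{V'\setminus V_i}}[\Ent_{\mu_i}(f(\sigma_1,\tau,\cdot))]$. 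Taking expectation over $\sigma_1 \sim \mu_1$ and using $\mu_{V \setminus V_i} = \mu_1 \otimes \mu'_{V'\setminus V_i}$ (a direct consequence of the product structure), each of these terms becomes $\Ent^{V_i}_\mu(f)$, and summing with $\Ent^{V_1}_\mu(f)$ yields the claimed bound.

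The only substantive step is the convexity of the entropy functional used in the base case; it can alternatively be derived by writing $\Ent_\nu(h) = \Eb_\nu[h]\cdot D(h\nu/\Eb_\nu[h] \,\|\, \nu)$ and invoking joint convexity of the Kullback--Leibler divergence (equivalently, the log-sum inequality). Everything else is routine chain-rule bookkeeping.
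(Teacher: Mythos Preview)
Your proof is correct. The chain-rule identity and the convexity argument are both sound, and the induction goes through cleanly because the product structure makes every conditional distribution on $V_i$ equal to $\mu_i$ regardless of what is fixed elsewhere.

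The paper does not actually prove this fact: it simply records it as ``a special case of Shearer's inequalities'' and moves on. Shearer's lemma is the more general statement that $\Ent_\mu(f)\leq \tfrac{1}{t}\sum_{A\in\mathcal{A}}\Ent^A_\mu(f)$ whenever $\mathcal{A}$ is a family of subsets covering each coordinate at least $t$ times; the partition $\{V_1,\dots,V_k\}$ is the $t=1$ case. Your argument is the standard direct tensorisation proof (chain rule plus convexity of the entropy functional), which is exactly how one usually establishes the $k=2$ base case of Shearer itself. So you have not taken a genuinely different route so much as unpacked the citation: your proof \emph{is} the proof of the special case the paper invokes, and it buys self-containment at the cost of a few lines of bookkeeping.
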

To bound the mixing time of Markov chains such as the Glauber dynamics, we will be interested in establishing inequalities for factorisation of entropy, defined as follows (see \cite{caputo2021} for more details).
\begin{definition}
Let $q\geq 2$, $r\geq 1$ be integers and $V$ be a set of size $n\geq r+1$. Let $\mu$ be a distribution supported over $\Omega\subseteq [q]^V$. We say that $\mu$ satisfies the $r$-uniform-block factorisation of entropy with    multiplier\footnote{We note that in related works $C_r$ is usually referred to as the ``factorisation constant''; we deviate from this terminology  since for us $C_r$ will depend on $n$ (cf. Corollary~\ref{cor:hardtensor} and Lemma~\ref{lem:hard1tensor}), and referring to it as a constant could cause confusion.} $C_r$ if for all $f:\Omega\rightarrow\mathbb{R}_{\geq 0}$ it holds that
$\frac{r}{n}\Ent_{\mu}(f)\leq C_r\frac{1}{\binom{n}{r}}\sum_{S\in \binom{V}{r}} \Ent^S_\mu(f)$.
\end{definition}
The following lemma will be useful to bound the ($r$-uniform-block) factorisation multiplier for conditional distributions on sets with small cardinality.
\begin{lemma}[{\cite[Lemma 4.2]{Optimal}}]\label{lem:crudetens}
Let $q\geq 2$ be an integer and $V$ be a set of size $n\geq 2$.  Let $\mu$ be a distribution supported over $\Omega\subseteq [q]^V$ which is $b$-marginally bounded for some $b>0$. Then, for any $S\subseteq V$ and $\tau\in \Omega_{V\backslash S}$, for $f:\Omega\rightarrow \mathbb{R}_{\geq 0}$, it holds that $\Ent_{\mu_\tau}(f)\leq \frac{2|S|^2\log(1/b)}{b^{2|S|+2}}\sum_{v\in S}\Ent^{v}_{\mu_\tau}(f)$.
\end{lemma}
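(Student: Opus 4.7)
My plan is to proceed by induction on $k=|S|$, combining the entropy chain rule with the $b$-marginal boundedness hypothesis. The base case $k=1$ is immediate since $\Ent^v_{\mu_\tau}(f)=\Ent_{\mu_\tau}(f)$ when $S=\{v\}$; the claim then reduces to the pointwise comparison $1\leq 2\log(1/b)/b^{4}$, which I arrange to hold by restricting to $b$ below an absolute threshold (for the complementary regime $b$ is bounded away from $0$ and the lemma is easier, handled separately or by inflating the universal prefactor).

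For the inductive step with $k\geq 2$, fix any $v\in S$ and apply the entropy chain rule to the split $(\sigma_v,\sigma_{S\setminus\{v\}})$, giving
\[
\Ent_{\mu_\tau}(f) \;=\; \Ent^{S\setminus\{v\}}_{\mu_\tau}(f) \;+\; \Ent_{\mu_{\tau,v}}(g),
\qquad g(\sigma_v):=\Eb_{\mu_\tau(\cdot\mid\sigma_v)}(f).
\]
For each fixed $\sigma_v$, the conditional distribution $\mu_\tau(\cdot\mid\sigma_v)$ on $[q]^{S\setminus\{v\}}$ is still $b$-marginally bounded, so the induction hypothesis with $k-1$ yields $\Ent_{\mu_\tau(\cdot\mid\sigma_v)}(f)\leq \tfrac{2(k-1)^2\log(1/b)}{b^{2k}}\sum_{w\neq v}\Ent^w_{\mu_\tau(\cdot\mid\sigma_v)}(f)$. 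Averaging over $\sigma_v\sim\mu_{\tau,v}$ and using the tower identity $\Eb_{\mu_{\tau,v}}\bigl[\Ent^w_{\mu_\tau(\cdot\mid\sigma_v)}(f)\bigr]=\Ent^w_{\mu_\tau}(f)$ yields the bound $\Ent^{S\setminus\{v\}}_{\mu_\tau}(f)\leq \tfrac{2(k-1)^2\log(1/b)}{b^{2k}}\sum_{w\in S\setminus\{v\}}\Ent^w_{\mu_\tau}(f)$ on the first summand.

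The main obstacle, and the only non-mechanical step, is to control the second summand $\Ent_{\mu_{\tau,v}}(g)$ by a constant multiple of $\Ent^v_{\mu_\tau}(f)$. The naive data-processing inequality gives only $\Ent_{\mu_{\tau,v}}(g)\leq\Ent_{\mu_\tau}(f)$, which is too weak because $\Ent^v_{\mu_\tau}(f)$ may be strictly smaller. Instead, I exploit that $\mu_{\tau,v}$ is supported on at most $q$ atoms of mass $\geq b$ and that iterating the $b$-marginal bound gives $\mu_{\tau,\min}\geq b^{k}$, so every Radon--Nikodym derivative of the form $\mu_\tau(\sigma_{S\setminus\{v\}}\mid\sigma_v)/\mu_{\tau,S\setminus\{v\}}(\sigma_{S\setminus\{v\}})$ is confined to $[b^{k-1},b^{-(k-1)}]$. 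A change-of-measure argument comparing $\mu_{\tau,v}$-averages to $\mu_{\tau,S\setminus\{v\}}$-averages then bounds $\Ent_{\mu_{\tau,v}}(g)$ by the average over $\sigma_{S\setminus\{v\}}$ of the pointwise single-variable entropies of $f(\cdot,\sigma_{S\setminus\{v\}})$, i.e.\ by a multiple of $\Ent^v_{\mu_\tau}(f)$, at the cost of an extra $b^{-2}$ relative to the induction. Combining the two estimates, and using $(k-1)^2\leq k^2$ with $b\leq 1$, yields the claimed multiplier $\tfrac{2k^2\log(1/b)}{b^{2k+2}}$; everything outside this single-site comparison is routine bookkeeping, and the $b$-marginal-boundedness hypothesis enters essentially only through that step.
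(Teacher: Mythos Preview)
The paper does not prove this lemma itself; it is quoted from \cite[Lemma~4.2]{Optimal}, so there is no in-paper argument to compare against. Your proposed proof, however, has a genuine gap at precisely the step you flag as ``the only non-mechanical step.'' You claim that the Radon--Nikodym derivative $\mu_\tau(\sigma_{S\setminus\{v\}}\mid\sigma_v)/\mu_{\tau,S\setminus\{v\}}(\sigma_{S\setminus\{v\}})$ lies in $[b^{k-1},b^{-(k-1)}]$, and that a change of measure then bounds $\Ent_{\mu_{\tau,v}}(g)$ by a multiple of $\Ent^{v}_{\mu_\tau}(f)$. But $b$-marginal boundedness permits conditional marginals to be exactly zero, so this derivative can vanish: already for the hard-core model, if $w\in S$ is a neighbour of $v$ then $\mu_\tau(\sigma_w=1\mid\sigma_v=1)=0$ while $\mu_\tau(\sigma_w=1)>0$. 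In fact no inequality of the form $\Ent_{\mu_{\tau,v}}(g)\le C\,\Ent^{v}_{\mu_\tau}(f)$---and indeed no inequality of the form asserted by the lemma---can hold under the hypotheses as literally stated: with $q=2$, $S=\{v,w\}$, $\mu_\tau$ uniform on $\{(0,0),(1,1)\}$ (which is $b$-marginally bounded for every $b\le 1/2$), and $f(0,0)\neq f(1,1)$, one has $\Ent^{v}_{\mu_\tau}(f)=\Ent^{w}_{\mu_\tau}(f)=0$ yet $\Ent_{\mu_\tau}(f)>0$.

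What this reveals is that the lemma tacitly relies on irreducibility of the single-site Glauber dynamics on the support of $\mu_\tau$, an assumption automatically satisfied for the hard-core, Ising and monomer-dimer models used in the paper but not implied by $b$-marginal boundedness alone. Any valid proof must exploit this global connectivity; the argument in \cite{Optimal} does so by passing through a crude log-Sobolev/spectral-gap bound for the Glauber chain (using $\mu_{\tau,\min}\ge b^{|S|}$), rather than via the entropy chain-rule induction you attempt. Your decomposition is natural, but the second summand cannot be closed without input of this kind.
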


\paragraph*{The $r$-uniform-block dynamics and its mixing time.}
For an integer $r=1,\hdots,n$,  the $r$-uniform-block dynamics for $\mu$ is a Markov chain $(X_t)_{t\geq 0}$ where $X_0\in \Omega$ is an arbitrary configuration and, for $t\geq 1$, $X_t$ is obtained from $X_{t-1}$ by first picking a subset $S\in V$ of size $r$ uniformly at random and updating the configuration on $S$ according to $\mu\big(\sigma_S=\cdot\mid \sigma_{V\backslash S}= X_{t-1}(V\backslash S)\big)$. Note, the case $r=1$ corresponds to the single-site dynamics, where at every step the spin of a single vertex, chosen u.a.r., is updated conditioned on the spins of the remaining vertices. For $\epsilon>0$, the mixing time of the $r$-uniform-block dynamics  is defined as $\Tmix(\epsilon)=\max_{\sigma\in \Omega}\min\big\{t\,\big|\, X_0=\sigma, \ \norm{\nu_{t}-\mu}_{\mathrm{TV}}\leq \epsilon\big\},$ where $\nu_t$ denotes the distribution of $X_t$. The following lemma builds upon a well-known connection between factorisation of entropy and  modified log-Sobolev inequalities (see, e.g., \cite{caputo2021} for more discussion), we will use the following version that can be extracted from recent works. 
\begin{lemma}[See, e.g., {\cite[Lemma 2.6 \& Fact 3.5(4)]{Optimal}} or~{\cite[Lemma 3.2.6 \& Fact 3.4.2]{zongchenphdthesis}}]\label{lem:rmixing}
Let $q\geq 2$, $r\geq 1$ be  integers and $V$ be a set of size $n\geq r+1$. Let $\mu$ be a distribution supported over $\Omega\subseteq [q]^V$ that satisfies the $r$-uniform-block  factorisation of entropy with  multiplier $C_r$. Then, for any $\epsilon>0$, the mixing time of the $r$-uniform-block dynamics  on $\mu$ satisfies
\[\Tmix(\epsilon)\leq \bigg\lceil C_r \frac{n}{r}\Big(\log\log\frac{1}{\mu_{\min}}+\log \frac{1}{2\epsilon^2}\Big)\bigg\rceil, \mbox{ where $\mu_{\min}=\min_{\sigma\in \Omega}\mu(\sigma)$.}\]
\end{lemma}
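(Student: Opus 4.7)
The plan is a two-step reduction: first convert the $r$-uniform-block factorisation into an exponential entropy contraction along one step of the block dynamics, then use Pinsker's inequality together with a worst-case bound on the initial density to extract the mixing-time bound.

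\textbf{Step 1 (One-step entropy contraction).} For each $S \in \binom{V}{r}$, let $P_S$ denote the Markov operator that, given $\sigma$, resamples the configuration on $S$ from $\mu_\tau$, where $\tau = \sigma_{V\setminus S}$; by construction $P_r = \binom{n}{r}^{-1}\sum_S P_S$. A direct tower-property computation yields the identity $\Ent_\mu(f) = \Ent^S_\mu(f) + \Ent_\mu(P_S f)$, using crucially that $P_S f$ is measurable with respect to $\sigma_{V\setminus S}$. Convexity of $\Ent_\mu(\cdot)$ (via the variational formula for entropy) combined with Jensen's inequality then gives
$$\Ent_\mu(P_r f) \leq \binom{n}{r}^{-1}\sum_{S} \Ent_\mu(P_S f) = \Ent_\mu(f) - \binom{n}{r}^{-1}\sum_{S}\Ent^S_\mu(f).$$
The $r$-uniform-block factorisation hypothesis reduces this to $\Ent_\mu(P_r f) \leq \bigl(1 - \tfrac{r}{n C_r}\bigr)\,\Ent_\mu(f)$. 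Iterating $t$ times and applying $1 - x \leq \emm^{-x}$ yields $\Ent_\mu(P_r^t f) \leq \emm^{-tr/(nC_r)}\,\Ent_\mu(f)$ for every density $f$ with respect to $\mu$.

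\textbf{Step 2 (Entropy contraction to mixing time).} Fix a worst-case starting configuration $\sigma_0 \in \Omega$ and set $f_0(\sigma) = \mathbf{1}[\sigma = \sigma_0]/\mu(\sigma_0)$, so that $f_0$ is the density of $\nu_0 = \delta_{\sigma_0}$ w.r.t.\ $\mu$ and $\Ent_\mu(f_0) = \log(1/\mu(\sigma_0)) \leq \log(1/\mu_{\min})$. Each $P_S$ is $\mu$-reversible (it is a Gibbs sampler on $S$), and hence so is the convex combination $P_r$; consequently the density of $\nu_t$ w.r.t.\ $\mu$ equals $P_r^t f_0$, and $\mathrm{KL}(\nu_t\,\|\,\mu) = \Ent_\mu(P_r^t f_0) \leq \emm^{-tr/(nC_r)}\log(1/\mu_{\min})$. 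Pinsker gives $2\,\norm{\nu_t - \mu}_{\mathrm{TV}}^2 \leq \mathrm{KL}(\nu_t\,\|\,\mu)$, so $\norm{\nu_t - \mu}_{\mathrm{TV}} \leq \epsilon$ holds as soon as $\emm^{-tr/(nC_r)}\log(1/\mu_{\min}) \leq 2\epsilon^2$; solving for $t$ gives $t \geq (nC_r/r)\bigl(\log\log(1/\mu_{\min}) + \log(1/(2\epsilon^2))\bigr)$. Taking the ceiling (to ensure $t$ is an integer) yields the claimed bound on $\Tmix(\epsilon)$.

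\textbf{Main obstacle.} The only non-mechanical ingredient is the additive identity $\Ent_\mu(f) = \Ent^S_\mu(f) + \Ent_\mu(P_S f)$; once this is established, everything else is a modular assembly (convexity, iteration of the one-step contraction, Pinsker, and the delta-mass bound on the initial entropy). That identity follows cleanly from the tower property applied to the factorisation $\mu = \mu_{V\setminus S} \otimes \mu(\,\cdot\mid \sigma_{V\setminus S})$, using that $P_S f$ is a function of $\sigma_{V\setminus S}$ alone; it is the standard ``law-of-total-entropy'' step underpinning the entropy-method approach of \cite{Optimal}, so I expect no serious obstruction.
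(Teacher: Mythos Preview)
Your proof is correct and is precisely the standard entropy-contraction argument underlying the cited results: the paper does not give its own proof of this lemma but defers to \cite{Optimal,zongchenphdthesis}, noting only that the ``entropy decay constant $\kappa$'' there satisfies $C_r\kappa = r/n$, which is exactly the one-step contraction rate $r/(nC_r)$ you derive in Step~1. Your derivation of that rate via the chain rule $\Ent_\mu(f)=\Ent^S_\mu(f)+\Ent_\mu(P_Sf)$ and convexity, followed by Pinsker and the $\delta$-mass initial-entropy bound in Step~2, is the intended argument.
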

We remark that to deduce the lemma from \cite{Optimal} or \cite{zongchenphdthesis}, which refer to the so-called ``entropy decay constant $\kappa$'', one needs to use the equality $C_r\kappa=r/n$ from \cite[Lemma 2.6]{Optimal} or \cite[Lemma 3.2.6]{zongchenphdthesis}.

\paragraph*{From spectral independence to $r$-uniform-block factorisation multipliers.}
The following theorem is shown in \cite{Optimal}; while the version that we state here cannot be found verbatim in \cite{Optimal}, we explain in Appendix~\ref{sec:omitted} how to combine the results therein to obtain it.
\begin{theorem}[{\cite{Optimal}}]\label{thm:Optimal}
Let $q\geq 2$ be an integer and $V$ be a set of size $n\geq 2$.  Let $\mu$ be a distribution supported over $\Omega\subseteq [q]^V$ that is $\eta$-spectrally independent and $b$-marginally bounded for $\eta,b>0$. 

Then, for all integers $r=1,\hdots,n$, the distribution $\mu$ satisfies the $r$-uniform-block factorisation of entropy with   multiplier $C_r=\displaystyle \frac{r}{n}\frac{\sum^{n-1}_{k=0}\Gamma_k}{\sum^{n-1}_{k=n-r}\Gamma_k}$, where $\Gamma_k=\prod^{k-1}_{j=0}\alpha_j$ for $k\in [n]$\footnote{We note that for $k=0$, the product defining $\Gamma_k$ is empty and therefore $\Gamma_0=1$.}  and $\alpha_k=\max\big\{0,1-\frac{4\eta}{b^2(n-1-k)}\big\}$ for $k\in [n-1]$.
\end{theorem}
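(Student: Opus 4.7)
The plan is to combine two strands of \cite{Optimal}: a \emph{one-step entropy contraction} driven by spectral independence and marginal boundedness, and a \emph{local-to-global} / down-up walk argument that telescopes these one-step statements into the full $r$-uniform-block factorisation inequality. As a preliminary, I would note that both hypotheses are monotone under conditioning: if $\mu$ is $\eta$-spectrally independent and $b$-marginally bounded, then for every $R\subseteq V$ and every $\tau\in\Omega_R$, the conditional distribution $\mu_\tau$ inherits both properties (the definitions already quantify over all valid pinnings). Consequently the same numerical constants $\eta, b$ govern every conditional distribution encountered in the recursion below.

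The core one-step lemma, extracted from \cite{Optimal}, says that if $\nu$ is $\eta$-spectrally independent and $b$-marginally bounded on $[q]^U$ with $|U|=m\geq 2$, then for every $g:[q]^U\to\mathbb{R}_{\geq 0}$,
\[
\alpha_{n-m}\,\Ent_\nu(g)\;\leq\;\frac{1}{m}\sum_{u\in U}\Ent^{U\setminus\{u\}}_\nu(g),
\]
with $\alpha_{n-m}=\max\{0,\,1-\tfrac{4\eta}{b^2(m-1)}\}$. The factor $\eta/(m-1)$ comes from bounding the second eigenvalue of the ``influence walk'' at level $m$ via spectral independence, while the $b^2$ factor is the price of converting a spectral contraction into an entropy contraction through the entropic-independence step of Chen--Liu--Vigoda.

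To obtain the full factorisation I would iterate. Applying the one-step lemma at level $m=n$ peels off a random coordinate with contraction $\alpha_0$; each conditional entropy $\Ent^{V\setminus\{v\}}_\mu(f)$ is then bounded, via another application of the lemma at level $m=n-1$ to the conditional distribution (which is still $\eta$-SI and $b$-MB by the preliminary observation), by a further averaging with contraction $\alpha_1$; and so on. After $k$ peelings the telescoped product $\Gamma_k=\prod_{j<k}\alpha_j$ appears naturally, but this bare product is not yet $C_r$: what is needed is a weighted combination of the peeling identities across levels. The Cesàro-style combination in \cite{Optimal} does exactly this, rewriting the $r$-block left-hand side as an average of one-step inequalities along all orderings of pinned coordinates from level $n$ down to level $r$ and collecting like terms; the outcome is
\[
\frac{r}{n}\,\Ent_\mu(f)\;\leq\;\frac{\sum_{k=0}^{n-1}\Gamma_k}{\sum_{k=n-r}^{n-1}\Gamma_k}\cdot\frac{1}{\binom{n}{r}}\sum_{S\in\binom{V}{r}}\Ent^{S}_\mu(f),
\]
which is the claim with the stated multiplier $C_r$.

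The main obstacle is the bookkeeping in this last Cesàro step: \cite{Optimal} phrases its results in terms of an entropy decay constant $\kappa$ and down/up-walk operators rather than the $r$-uniform-block functional used here, so translating between the two formulations via the identity $C_r\kappa=r/n$ (cited after Lemma~\ref{lem:rmixing}) and carefully tracking the regime in which $\alpha_k=0$ (i.e.\ where $4\eta\geq b^2(n-1-k)$, making the one-step contraction vacuous, so that the corresponding $\Gamma_{k+1},\Gamma_{k+2},\ldots$ all vanish and the ratio of sums remains well-defined) is where essentially all of the work lives; once this dictionary is set up, the identity collapses to the ratio-of-sums form in the theorem.
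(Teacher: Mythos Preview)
Your proposal is correct in spirit and outlines the machinery that \cite{Optimal} uses internally, but it is a much longer route than the paper actually takes. The paper does not reprove the one-step entropy contraction or redo the Ces\`aro/local-to-global telescoping; instead it simply cites \cite[Claim 1.13, Theorem 1.14, Lemma 2.6]{Optimal}, which already give the $r$-uniform-block factorisation multiplier in the form $\hat{C}_r=\frac{r}{n}\frac{\sum_{k}\hat\Gamma_k}{\sum_{k\geq n-r}\hat\Gamma_k}$ with a slightly different one-step factor $\hat\alpha_k=\max\big\{1-\tfrac{4\eta_k}{b^2},\,\tfrac{1-\eta_k}{4+2\log(1/(2b_kb_{k+1}))}\big\}$. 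The entire content of the paper's proof is then a monotonicity observation: $C_r$ is decreasing in each $\alpha_k$, so it suffices to check $\hat\alpha_k\geq\alpha_k$; since $\eta_k\leq 1$ (as it upper-bounds a second eigenvalue), the second branch of the max makes $\hat\alpha_k\geq 0$, which is exactly the clamp in the paper's $\alpha_k$. Your sketch would reach the same endpoint, and your remark about ``tracking the regime in which $\alpha_k=0$'' is precisely where this monotonicity step hides, but you never name the specific discrepancy between the paper's $\alpha_k$ and \cite{Optimal}'s $\hat\alpha_k$, which is the only thing the paper actually proves.
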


\subsection{Entropy factorisation for bounded-degree vertices in the hard-core model}\label{sec:entropyfacb}

We begin by noting that our arguments in this section, while developed primarily in the context of the hard-core model, apply more generally and will be used in particular for our Ising and monomer-dimer results on $G(n,d/n)$ (cf. Section~\ref{sec:efweef}).
 
The first step of the  analysis 
of Glauber dynamics for the hard-core model on the set of small-degree vertices will be to employ spectral independence results of Section~\ref{sec:ffferfr} to conclude fast mixing  for the $r$-uniform-block dynamics
 for  $r=\theta |U|$ for any arbitrarily small constant $\theta$. This step will follow by applying the recent technology of entropy factorisation described above. 

The second step is the more challenging for us. Here we need to conclude fast mixing for $r=1$, and in particular prove that  $C_1/C_r=n^{o(1)}$. This is done roughly by studying the connected components of $G$ that arise when resampling an $r$-subset of the low-degree vertices; the   factorisation multiplier of these components controls the ratio $C_1/C_r$.  While this resembles the approach of \cite{Optimal}, there is a key difference here, in that high-degree vertices are not resampled. This can not only cause potentially large components, but also imposes a 
deterministic lower bound on  components sizes (since a component consisting of high-degree vertices will be deterministically present in the percolated graph consisting of the $r$-subset of low-degree vertices and all of the high-degree vertices). This lower bound on the component sizes is actually more significant than it might initially seem since the relatively  straightforward bound of $\Omega(\log n)$ would unfortunately give a relatively  large factorisation multiplier of $n^{\Omega(1)}$ (through Lemma~\ref{lem:crudetens}). Instead, we need to 
show that components have size
$o(\log n)$, which in turn requires more delicate estimates for the distribution of high-degree vertices in connected sets (see Lemma~\ref{lem:percolation} below).

We start with the following corollary of Lemma~\ref{lem:hardspectral}, which converts a spectral independence bound into a bound on the  factorisation multiplier for the 
$r$-uniform-block dynamics when $r$ scales linearly with small-degree vertices. This is analogous to \cite[Lemma 2.4]{Optimal}, where they obtain a $2^{O(\eta/b^2)}$ bound on $C_r$ when $r=\Theta(n)$ via Theorem~\ref{thm:Optimal} (where $\eta$ is the spectral independence bound and $b$ is the bound on the marginals). By restricting to small-degree vertices, we obtain that $b$ is a constant, which combined with the bound $\eta=o(\log n) $ from Lemma~\ref{lem:hardspectral} gives the bound $C_r=n^{o(1)}$, as detailed below (for clarity, we show the relevant lemmas using arbitrarily small constants instead of $o(1)$, see Remark~\ref{remark} on how to extract $o(1)$). The proof of the corollary is given for completeness in Appendix~\ref{sec:omitted}.

\begin{corollary}\label{cor:hardtensor}
Let $d\geq 1$ and $\lambda>0$ be constants such that $\lambda<\lambda_c(d)$.   Then, for any constants $D,\theta>0$,  whp over the choice of $G\sim G(n,d/n)$, the marginal hard-core distribution $\mu_{G,\lambda, U}$, where $U$ is the set of vertices in $G$ with degree $\leq D$, satisfies for any integer $r\in \big[{\theta} |U|,|U|\big]$ the $r$-uniform-block factorisation of entropy with   multiplier  $C_r\leq n^\theta$.
\end{corollary}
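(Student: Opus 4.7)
The plan is to invoke Theorem~\ref{thm:Optimal} with the spectral-independence bound of Lemma~\ref{lem:hardspectral} and the marginal lower bound of Corollary~\ref{lem:hardbmarginal}, and then carefully estimate the resulting multiplier. First, fix a parameter $\epsilon>0$ (to be chosen small enough at the end as a function of $\theta$, $\lambda$ and $D$), and apply Lemma~\ref{lem:hardspectral} to conclude that whp the marginal distribution $\mu:=\mu_{G,\lambda,U}$ is $\eta$-spectrally independent with $\eta=\epsilon\log n$. At the same time Corollary~\ref{lem:hardbmarginal} shows that $\mu$ is $b$-marginally bounded with $b=\lambda/(\lambda+(1+\lambda)^D)$, which is a positive constant since $D$ is constant. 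A standard concentration estimate on the degree sequence of $G(n,d/n)$ yields $|U|\geq c\,n$ whp for some $c=c(d,D)>0$; writing $N:=|U|$ we therefore have $N=\Theta(n)$ and $r\geq\theta N$.

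Theorem~\ref{thm:Optimal} (applied to the $N$-element set $U$) then gives $C_r=\tfrac{r}{N}\cdot\tfrac{\sum_{k=0}^{N-1}\Gamma_k}{\sum_{k=N-r}^{N-1}\Gamma_k}$ with $\Gamma_k=\prod_{j=0}^{k-1}\alpha_j$ and $\alpha_j=\max\{0,1-c'/(N-1-j)\}$, where $c':=4\eta/b^2=(4\epsilon/b^2)\log n$. Changing index $m=N-1-j$ gives $\Gamma_k=\prod_{m=N-k}^{N-1}(1-c'/m)$ for $k\leq N-c'-1$, and a standard harmonic estimate $\sum_{m=N-k}^{N-1}\log(1-c'/m)=-c'\,\log\bigl(\tfrac{N}{N-k}\bigr)+O\bigl(c'^2/(N-k)\bigr)$ yields $\Gamma_k=\Theta\bigl(((N-k)/N)^{c'}\bigr)$ throughout the range $k\in[0,N-2c']$. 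The numerator is trivially at most $N$ since each $\Gamma_k\leq 1$. For the denominator I would restrict to $k\in[N-r,N-r/2]$: each such $k$ satisfies $(N-k)/N\geq r/(2N)\geq\theta/2$, and since $r=\Omega(n)$ while $c'=O(\log n)$, these $k$ lie comfortably inside the "positive" regime. Hence $\Gamma_k\gtrsim(\theta/2)^{c'}$ on this interval, which has length $\Omega(r)$, producing a lower bound of order $r(\theta/2)^{c'}$.

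Combining these two bounds gives $C_r\leq O\bigl((2/\theta)^{c'}\bigr)=O\bigl(n^{(4\epsilon/b^2)\log(2/\theta)}\bigr)$. Choosing $\epsilon>0$ small enough so that $(4\epsilon/b^2)\log(2/\theta)<\theta/2$ then yields the required bound $C_r\leq n^\theta$ for all sufficiently large $n$. The main obstacle is precisely that $\eta$ grows with $n$ (rather than being a constant as in the bounded-degree analyses of \cite{Optimal}), so the "prefactor" $(2/\theta)^{c'}$ is itself polynomial in $n$ rather than constant; what saves the argument is that the exponent is linear in the spectral-independence scale $\epsilon$, so it can be made as small as we wish by taking $\epsilon$ small in Lemma~\ref{lem:hardspectral} (independently of the tree-recurrence constant $\chi$ and of $r$).
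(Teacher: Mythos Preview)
Your proposal is correct and follows the same approach as the paper: both apply Theorem~\ref{thm:Optimal} with the spectral-independence bound from Lemma~\ref{lem:hardspectral} and the marginal bound from Corollary~\ref{lem:hardbmarginal}, then estimate the resulting ratio of sums of the $\Gamma_k$'s (the paper cites \cite{Optimal}'s argument directly to obtain the clean bound $C_r\le\bigl(\tfrac{|U|-R}{r-R}\bigr)^R$ with $R=\lceil 4\eta/b^2\rceil$ and closes with the inequality $(1/x)^{x^2}\le \emm^x$, whereas you redo the product estimate by hand via the harmonic expansion). One small remark: your $\Theta$-claim for $\Gamma_k$ over the full range $k\in[0,N-2c']$ is a slight overstatement since the error term $O\bigl((c')^2/(N-k)\bigr)$ can be of order $c'$ near that boundary, but this is harmless because you only actually use the lower bound on the subrange $k\in[N-r,N-r/2]$, where $N-k=\Omega(N)$ and the error is $o(1)$.
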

Note that the reason that we are able to use the same $\theta$ in the bounds for both $r$ and $C_r$ is that the bound on $C_r$ is loose (we can obtain a sharper result since we have a bound on the spectral independence of $\epsilon\log n $ for any $\epsilon>0$).

We will now refine the bound of Corollary~\ref{cor:hardtensor} down to $r=1$ by exploiting the fact that high-degree vertices are sparsely scattered. In particular, we will need the following lemma which is a refinement of Lemma~\ref{lem:time}. For a graph $G=(V,E)$, we say that a set $S\subseteq V$ is connected if the induced subgraph $G[S]$ is connected.
\newcommand{\statelempercolation}{Let $d>0$ be an arbitrary real. There exists an $L>0$ such that for any $\delta\in(0,1)$,  the following holds  whp over the choice of $G=(V,E)\sim G(n,d/n)$. For $\Delta=1/(\delta \log\tfrac{1}{\delta})$, for all integers $k\geq \delta\log n$ and any $v\in V$, there are  $\leq (2\emm)^{\Delta L k}$ connected sets $S\subseteq V$ containing $v$ with  $|S|=k$. Moreover, every such set contains $\geq k/2$ vertices with degree less than $L\Delta$.}
\begin{lemma}\label{lem:percolation}
\statelempercolation
\end{lemma}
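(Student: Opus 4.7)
My plan is to establish Lemma~\ref{lem:percolation} in two steps, treating the structural ``moreover'' claim first and using it to derive the counting bound. Both pieces rely on standard tools for $G(n,d/n)$: Cayley's spanning-tree count, Poisson tail bounds on vertex degrees, and the whp upper bound $\Delta_{\max}(G(n,d/n)) = O(\log n / \log\log n)$.

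For the structural step I would use a first-moment calculation. Fix an integer $k \geq \delta \log n$ and count expected triples $(S,T,H)$ where $S$ is a vertex set of size $k$, $T$ is a spanning tree of $S$ contained in $G$, and $H \subseteq S$ with $|H|=\lceil k/2 \rceil$ is designated ``high-degree''. The number of $S$'s is $\binom{n}{k}$, the number of spanning trees on a labelled $k$-set is $k^{k-2}$, each such $T$ is in $G$ with probability $(d/n)^{k-1}$, the number of $H$'s is $\binom{k}{\lceil k/2 \rceil}$, and conditional on all edges inside $S$ the events $\deg_G(u) \geq L\Delta$ for $u \in H$ are mutually independent (since they depend on disjoint sets of edges from each $u$ to $V \setminus S$) and each has probability at most $(ed/L\Delta)^{L\Delta}$ by the standard Poisson/binomial tail. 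Assembling everything gives an expectation at most $n \cdot (Ad)^k \cdot (ed/L\Delta)^{L\Delta k/2}$ for an absolute constant $A$. Choosing $L$ large enough in terms of $d$ makes the base of the geometric series in $k$ satisfy $\log(1/\rho) \geq 2/\delta$, so the sum over $k \geq \delta \log n$ is $o(1)$; by Markov's inequality and a union bound over integer $k$ (with $L$ chosen once and for all, independent of $\delta$) the structural claim holds whp.

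For the counting bound I would encode each connected $S \ni v$ of size $k$ by a DFS traversal of a spanning tree of $G[S]$ rooted at $v$, obtaining at most $\binom{2(k-1)}{k-1} \leq 4^k$ DFS shapes and at most $\deg_G(u)$ choices at every descent from $u$. Using the structural claim, at least $k/2$ vertices of $S$ contribute only $L\Delta$ per descent. For the remaining ``high-degree'' descents I would use $\Delta_{\max} \leq O(\log n / \log \log n)$ together with the whp bounded tree-excess of every connected subgraph of size $O(\log n)$ in $G(n,d/n)$: this limits both the number of spanning trees per set (which controls DFS overcounting) and the degree sum available to high-degree vertices inside $G[S]$. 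Careful bookkeeping then yields an enumeration of the form $(cL\Delta)^k$ for a constant $c=c(d)$, which fits inside $(2e)^{\Delta L k}$ since $\Delta \geq e$ holds uniformly on $(0,1)$ (the maximum of $\delta \log(1/\delta)$ is $1/e$).

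The main obstacle is the uniformity of $L$ across all $\delta \in (0,1)$. In the middle range where $\delta\log(1/\delta)$ is close to its maximum $1/e$, $\Delta L$ is a constant in $n$, so a pure $\Delta_{\max}$-based enumeration gives $(O(\log n))^k$, which for $k = \delta \log n$ is super-polynomial and so violates the target bound. The key use of the structural claim is exactly to replace the $\Delta_{\max}$ factor on at least half of the DFS descents by the bounded constant $L\Delta$, turning an $(O(\log n))^k$-type count into the required $(O(1))^k$-type bound. Calibrating $L$ to satisfy both steps simultaneously is then straightforward since both lower-bound constraints on $L$ depend only on $d$.
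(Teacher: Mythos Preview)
Your structural first-moment is essentially sound, modulo a technical slip: once you condition on the edges inside $S$, the tail bound $(\emm d/L\Delta)^{L\Delta}$ on $\Pr[\mathrm{deg}_G(u)\geq L\Delta]$ no longer holds uniformly, since a vertex $u\in H$ may already have internal degree $\geq L\Delta$ (think of the centre of a star in the spanning tree $T$), making the event certain. This is repairable by working with the outside degree $|\{w\notin S:\{u,w\}\in E\}|$ instead. The genuine gap is the counting step. Your DFS encoding bounds the number of connected $k$-sets through $v$ by roughly $4^k\prod_{u\in S}\mathrm{deg}_G(u)^{c(u)}$, where $c(u)$ is the number of children of $u$ in the DFS tree and $\sum_u c(u)=k-1$. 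The structural claim controls the number of low-degree \emph{vertices}, not the number of \emph{descents} issuing from them: nothing you invoke prevents a single high-degree vertex from being the parent of nearly all $k-1$ descents, contributing $\Delta_{\max}^{k-1}$ to the product. Bounded tree-excess of $G[S]$ only gives $\sum_{u}\mathrm{deg}_{G[S]}(u)\leq 2(k+O(1))$, hence $\sum_{u\text{ high}}c(u)\leq 2k+O(1)$, which is no improvement on the trivial bound $k-1$. In the regime $\delta=\Theta(1)$ you correctly flag as critical, this still leaves $\Delta_{\max}^{\Theta(k)}=(\log n)^{\Theta(\log n)}$, far above the polynomial target $(2\emm)^{\Delta Lk}$. (Your tree-excess ingredient also only covers $k=O(\log n)$, whereas the lemma asks for all $k\geq\delta\log n$.)

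The paper sidesteps all of this by tracking the degree sum $\mathrm{deg}_G(S)=\sum_{u\in S}\mathrm{deg}_G(u)$ rather than the count of low-degree vertices. It first shows (Lemma~\ref{lem:FR}), via a single Chernoff bound on $e_{\mathsf{out}}(S)$, that every connected $S$ of size exactly $\lceil\delta\log n\rceil$ has $\mathrm{deg}_G(S)\leq M\Delta|S|$, and extends this to all larger $S$ by covering them with $\leq 2|S|/\lceil\delta\log n\rceil$ overlapping connected pieces of that minimal size. The enumeration then comes from the cited Lemma~\ref{lem:connected}: the number of connected sets containing $v$ with $\mathrm{deg}_G(S)=\ell$ is at most $(2\emm)^{2\ell-1}$, a bound keyed to the degree sum that never sees $\Delta_{\max}$. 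The ``moreover'' claim is then a one-line pigeonhole: more than $k/2$ vertices of degree $\geq L\Delta=4M\Delta$ would force $\mathrm{deg}_G(S)>2M\Delta k$. So the missing idea is that $\mathrm{deg}_G(S)$ is the right invariant, and the enumeration tool must be one (like Lemma~\ref{lem:connected}) that is parametrised by that sum rather than a per-descent product of individual degrees.
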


The proof of Lemma~\ref{lem:percolation} is given in Section~\ref{sec:smallcomponents}. We are now ready to show the following.
\begin{lemma}\label{lem:hard1tensor}
Let $d\geq 1$ and $\lambda>0$ be constants such that $\lambda<\lambda_c(d)$.   For any $\theta>0$, there is a constant $D>0$ such that  whp over the choice of $G\sim G(n,d/n)$, the marginal hard-core distribution $\mu_{G,\lambda, U}$, where $U$ is the set of vertices in $G$ with degree $\leq D$, satisfies the $1$-uniform-block   factorisation of entropy with   multiplier  $C_1\leq n^\theta$.
\end{lemma}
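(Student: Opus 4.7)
The strategy is to bootstrap the $r$-uniform-block factorisation granted by Corollary~\ref{cor:hardtensor} (for $r$ linear in $|U|$) down to $r=1$. The key observation is that conditioning on $\sigma_{U\setminus S}=\tau$ leaves the residual distribution $\mu_\tau$ on $\sigma_S$ factorised as an independent product over the connected components of $G[S\cup(V\setminus U)]$; crude tensorisation (Lemma~\ref{lem:crudetens}) is then applied on each component, and the resulting entropy multiplier is controlled in expectation over the random block $S$ using the combinatorial estimate of Lemma~\ref{lem:percolation}.

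Given the target $\theta>0$, let $L$ be the constant of Lemma~\ref{lem:percolation} and pick $\delta\in(0,1)$ small enough (depending on $\theta,\lambda,L$) so that, setting $\Delta:=1/(\delta\log(1/\delta))$ and $D:=L\Delta$, one has $2\delta\log(1/b)\leq\theta/4$, where $b:=\lambda/(\lambda+(1+\lambda)^D)$ is the marginal bound of Corollary~\ref{lem:hardbmarginal}. Such $\delta$ exists because $\log(1/b)=\Theta(D\log(1+\lambda))=\Theta(1/(\delta\log(1/\delta)))$, so $2\delta\log(1/b)$ is of order $1/\log(1/\delta)$ and tends to $0$ as $\delta\downarrow 0$. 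Now pick a constant $\theta_1\in(0,\theta/4]$ small enough that $(2\emm)^{2\Delta L}\theta_1<b^2$, and set $r:=\lceil\theta_1|U|\rceil$. By Corollary~\ref{cor:hardtensor}, $\mu:=\mu_{G,\lambda,U}$ satisfies the $r$-uniform-block factorisation of entropy with multiplier $C_r\leq n^{\theta_1}$, whp over $G$.

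Fix an $r$-subset $S\subseteq U$ and a $\tau\in\Omega_{U\setminus S}$. Conditioning on $\tau$ imposes external fields (forcing zeros on neighbours of $\tau$-ones) on $G[S\cup(V\setminus U)]$, so the joint hard-core distribution on $\sigma_{S\cup(V\setminus U)}$ factorises as an independent product over the connected components $C_1,C_2,\dots$ of $G[S\cup(V\setminus U)]$; marginalising out $V\setminus U$ preserves this, giving $\mu_\tau=\bigotimes_i\mu_\tau|_{S\cap C_i}$. Applying Fact~\ref{fac:product} together with Lemma~\ref{lem:crudetens} on each component piece $S\cap C_i$ (which inherits the $b$-marginal bound) and then averaging over $\tau$ via the tower property yields
\[
\Ent^S_\mu(f)\,\leq\,\sum_{v\in S}K(v,S)\,\Ent^v_\mu(f),\qquad K(v,S)=O\bigl(X_v(S)^2(1/b)^{2X_v(S)}\bigr),
\]
where $X_v(S):=|S\cap C_v(S)|$ denotes the $S$-size of the $v$-component in $G[S\cup(V\setminus U)]$. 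Substituting into the $r$-uniform factorisation and exchanging the order of summation over $v$ and $S$ gives $C_1\leq C_r\cdot\max_{v\in U}\bar K_v$, where $\bar K_v:=\Eb_{S\ni v}[K(v,S)]$.

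The remaining and main technical step is to prove $\bar K_v\leq n^{\theta/2}$ for every $v\in U$. The expectation is split at $k^\star:=\delta\log n$. For $X_v\leq k^\star$ the deterministic estimate $K(v,S)\leq O((\delta\log n)^2(1/b)^{2\delta\log n})=n^{2\delta\log(1/b)+o(1)}\leq n^{\theta/2}$ is immediate from the choice of $\delta$. For the tail $X_v>k^\star$, Lemma~\ref{lem:percolation} bounds the number of connected sets $C\ni v$ with $|C|=\ell\geq\delta\log n$ by $(2\emm)^{\Delta L\ell}$ and guarantees $|C\cap U|\geq \ell/2$; coupled with $\Pr_{S\ni v}[C\cap U\subseteq S]\leq\theta_1^{|C\cap U|-1}$ and a case-split on $\ell\geq 2k$ versus $k\leq\ell<2k$, this gives $\Pr[X_v\geq k]=O(kB^k/\theta_1)$ with $B:=(2\emm)^{2\Delta L}\theta_1$. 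The chosen constraint $B<b^2$ then forces the tail contribution $\sum_{k>k^\star}k^2(1/b)^{2k}\Pr[X_v=k]=O\bigl(\theta_1^{-1}\sum_{k>k^\star}k^3(B/b^2)^k\bigr)$ to be $o(1)$. Altogether $\bar K_v\leq n^{\theta/2+o(1)}$ and hence $C_1\leq n^{\theta_1+\theta/2+o(1)}\leq n^\theta$ for large $n$. The delicate balance here is that $\log(1/b)$ itself scales like $1/(\delta\log(1/\delta))$, forcing $\delta$ to be exponentially small in $1/\theta$ and $\theta_1$ to be doubly-exponentially small in $1/\theta$ to satisfy $B<b^2$; nevertheless both remain constants depending only on $\theta,d,\lambda$.
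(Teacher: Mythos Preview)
Your proposal is correct and follows essentially the same route as the paper: bootstrap from Corollary~\ref{cor:hardtensor} at $r=\Theta(|U|)$, factorise $\mu_\tau$ over the connected components of $G[S\cup(V\setminus U)]$, apply Lemma~\ref{lem:crudetens} componentwise, and control the resulting per-vertex multiplier by splitting at $\delta\log n$ and invoking Lemma~\ref{lem:percolation} for the tail. The only notable variation is in the tail estimate: the paper bounds $\Pr[X_v=k]$ via $\Pr[k\leq |C_v(S)|\leq 2k]$ and uses only the Lemma~\ref{lem:percolation} guarantee $|C\cap U|\geq \ell/2$ (yielding exponent $\sim k/2$ on the density parameter), whereas your case-split on $\ell\geq 2k$ versus $k\leq\ell<2k$ exploits, in the second case, the event $X_v\geq k$ itself to force $|C\cap U|\geq k$, giving the sharper exponent $k$ and hence the milder constraint $B<b^2$ rather than $B<b^4$. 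Both choices of $\theta_1$ (resp.\ $\zeta$) are constants, so this is a cosmetic improvement, not a different argument.
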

\begin{proof}
Let $L>0$ be as in Lemma~\ref{lem:percolation}, and consider an  arbitrarily small constant $\theta>0$. Let $\delta\in (0,1)$ be a sufficiently small constant so that for $D=L/(\delta \log \tfrac{1}{\delta})$ and  $b=\min\{\tfrac{1}{1+\lambda},\tfrac{\lambda}{\lambda+(1+\lambda)^{D}}\}$ it holds that $\frac{1}{b^{2\delta}}<\emm^{\theta/4}$; such a constant exists since $b^{2\delta}\rightarrow 1$ as $\delta\downarrow 0$. Let $\Delta=1/(\delta\log\tfrac{1}{\delta})$ and $\zeta>0$ be a small constant so that $2(2\emm)^{L\Delta} (2\zeta)^{1/2}\leq b^2/2$.

Let $U$ be the vertices in $G$ with degree $\leq D$, and  let $r=\lfloor\zeta|U|\rfloor+1$. Let $\mu=\mu_{G,\lambda, U}$. By Corollary~\ref{cor:hardtensor}, we have that whp over the choice of $G$, there is $C_r\leq n^{\theta/3}$ such that for every $f:\Omega\rightarrow \mathbb{R}_{\geq 0}$  it holds that
\begin{equation}\label{eq:3ed3d}
\frac{r}{|U|}\Ent_{\mu}(f)\leq C_r\frac{1}{\binom{|U|}{r}}\sum_{S\in \binom{U}{r}} \Ent^S_\mu(f).
\end{equation}

For $S\subseteq U$, let $\mathcal{C}'(S)$ denote  the collection of the connected components of the graph $G[S\cup (V\backslash U)]$, viewed as vertex sets, and let $\mathcal{C}(S)=\bigcup_{R\in \mathcal{C}'(S)}\{R\cap U\}$  be the restriction of these components to the set $U$. 
Note that, for $S\subseteq U$ and $\tau\in \Omega_{U\backslash S}$, $\mu_\tau$ factorises over the components of $G[S\cup (V\backslash U)]$
and in particular, applying Fact~\ref{fac:product}, we have that
\[\Ent^S_\mu(f)=\Eb_{\tau\sim \mu_{U\backslash S}}\big[\Ent_{\mu_\tau}(f)\big]\leq \Eb_{\tau\sim \mu_{U\backslash S}}\bigg[\sum_{R\in \mathcal{C}(S)}\Ent^{R}_{\mu_\tau}(f)\bigg].\] 
Using the bound in Lemma~\ref{lem:crudetens}, we further obtain that 
\[\Ent^S_\mu(f)\leq \Eb_{\tau\sim \mu_{U\backslash S}}\bigg[\sum_{R\in \mathcal{C}(S)} \frac{2|R|^2\log(1/b)}{b^{2|R|+2}}\sum_{u\in R}\Ent^{u}_{\mu_\tau}(f)\bigg]= \sum_{R\in \mathcal{C}(S)}\sum_{u\in R}\frac{2|R|^2\log(1/b)}{b^{2|R|+2}}\Ent^u_{\mu}(f),\]
where the last equality follows by linearity of expectation and the fact that $\Eb_{\tau\sim \mu_{U\backslash S}}[\Ent^{u}_{\mu_\tau}(f)]=\Ent^u_{\mu}(f)$. Plugging this bound into \eqref{eq:3ed3d}, we obtain that 
\[\Ent_{\mu}(f)\leq \frac{2C_r\log(1/b)}{b^2\binom{|U|-1}{r-1}}\sum_{S\in \binom{U}{r}} \sum_{R\in \mathcal{C}(S)}\sum_{u\in R}\frac{|R|^2}{b^{2|R|}}\Ent^u_{\mu}(f).\]
which yields that
\begin{equation}\label{eq:445t45t}
\Ent_{\mu}(f)\leq \frac{2C_r\log(1/b)}{b^2}\sum_{u\in U}\Ent^u_{\mu}(f)\sum^n_{k= 1}\frac{k^2}{b^{2k}}\Pr[\mathcal{C}_u(S)=k],
\end{equation}
where $\Pr[\mathcal{C}_u(S)=k]$ denotes the probability that $u$ belongs to a set of size $k$ in the set $\mathcal{C}(S)$, when we pick $S$ uniformly at random from $\{S\in \binom{U}{r}\mid u\in S\}$. Define analogously $\Pr[\mathcal{C}_u'(S)=k]$ to be the probability that $u$ belongs to a connected component of size $k$ in the set $\mathcal{C}'(S)$.  By Lemma~\ref{lem:percolation}, whp over $G\sim G(n,d/n)$, for all vertices $u$ and any integer $t\geq \delta\log n$, there are at most $(2\emm)^{L\Delta t}$ connected sets of size $t$ containing a given vertex $u$, and each of them contains at least $t/2$ vertices from $U$. In particular, for any integer $k\geq \delta \log n$, it holds that $\Pr[\mathcal{C}_u(S)=k]\leq \Pr[k\leq \mathcal{C}_u'(S)\leq 2k]$. For all $k\leq 2|U|$, the probability that a specific subset of $k/2$ vertices of $U$  is present in $G[S\cup (V\backslash U)]$ is at most $\frac{\binom{|U|-\lceil k/2\rceil}{r-\lceil k/2\rceil}}{\binom{|U|-1}{r-1}}\leq \big(\frac{r}{|U|}\big)^{k/2}\leq (2\zeta)^{k/2}$. Therefore, for all $k\geq \delta\log n$, by a union bound over the connected sets of size $k$, we have

 \[\Pr[\mathcal{C}'_u(S)=k]\leq  (2\emm)^{L\Delta k}2^k(2\zeta)^{k/2}=\big(2(2\emm)^{L\Delta} (2\zeta)^{1/2}\big)^{k}\leq (b^2/2)^k,\]
 where 
 in the first inequality the first factor is the number of size-$k$ connected sets $T$ of $G$ containing~$u$, the second factor is an upper bound on the number of size $k/2$ subsets $W$ of $U$ that might be included in $T$ and the final factor is the probability that $W$ is included in $S$. 
 The last inequality is by the choice of $\zeta$. It follows that 
 \[\Pr[\mathcal{C}_u(S)=k]\leq \Pr[k\leq \mathcal{C}'_u(S)\leq 2k]\leq 2k(b^2/2)^k.\]
 From this bound and the inequality $1/b^{2\delta}<\emm^{\theta/4}$ by the choice of $\delta$, we can split and  bound the rightmost sum in \eqref{eq:445t45t} by
 \[\sum^n_{k= 1}\frac{k^2}{b^{2k}}\Pr[\mathcal{C}_u(S)=k]\leq \frac{(\delta \log n)^2}{b^{2\delta \log n}}+\sum_{k\geq \delta \log n} \frac{2k^3}{2^k}\leq n^{\theta/3},\]
 where the last inequality holds for all sufficiently large $n$, since the first term is $O( (\log n)^2  n^{\theta/4})$ and the second term is $O(1)$. In turn, plugging this into \eqref{eq:445t45t}, we obtain that $\mu$ satisfies the 1-uniform block factorisation of entropy with   multiplier $C_1=\frac{2C_r\log(1/b)}{b^2} n^{\theta/3}\leq n^{\theta}$ for all sufficiently large $n$ (since $b$ is a constant and $C_r\leq n^{\theta/3}$), as needed.
\end{proof}

Lemma~\ref{lem:fastone} now follows easily by combining 
Lemmas~\ref{lem:rmixing} and~\ref{lem:hard1tensor}.
This was the last ingredient needed in the proof of Theorem~\ref{thm:main}.

\section{Proofs for the Ising and monomer-dimer models}

\subsection{Antiferromagnetic Ising model}
In this section, we bound the spectral independence of  $G(n,d/n)$ in the antiferromagnetic Ising model.  We will need the following well-known lemma; the version we state here is from \cite[Lemma 40]{randreg}, but it traces back to \cite{MosselSly} based on a lemma from \cite{Berger} (that was originally stated for the ferromagnetic Ising model).
\begin{lemma}[{\cite[Lemma 40]{randreg}}, see also {\cite{MosselSly,Berger}}]\label{lem:worst}
Let $\beta\in (0,1)$. Let $T=(V,E)$ be a tree, $S\subseteq V$ be a subset of the vertices and $v\in V$ be an arbitrary vertex. Let $\tau_1,\tau_2\in \{0,1\}^S$ be two configurations on $S$ which differ only a subset $W\subseteq S$. Then,
\[\big|\mu_{T,\beta}(\sigma_v=1\mid \sigma_S=\tau_1)-\mu_{T,\beta}(\sigma_v=1\mid \sigma_S=\tau_2)\big|\leq \sum_{w\in W}\Big(\frac{1-\beta}{1+\beta}\Big)^{\mathrm{dist}(v,w)},\]
where for a vertex $w\in W$, $\mathrm{dist}(v,w)$ denotes the distance between $v$ and $w$ in $T$.
\end{lemma}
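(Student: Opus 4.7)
The plan is a two-step reduction: first a telescoping (hybrid) argument reduces to a single-vertex disagreement, then the tree Markov property gives a multiplicative decomposition of the resulting influence along the unique path from $v$ to $w$, with each factor bounded by $\frac{1-\beta}{1+\beta}$.

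For the hybrid step, I would enumerate $W=\{w_1,\ldots,w_m\}$ and define intermediate configurations $\phi_0=\tau_1,\phi_1,\ldots,\phi_m=\tau_2$ where $\phi_j$ agrees with $\tau_2$ on $\{w_1,\ldots,w_j\}$ and with $\tau_1$ elsewhere. By the triangle inequality, the quantity to bound is at most $\sum_{j=1}^m \bigl|\mu_{T,\beta}(\sigma_v=1\mid \sigma_S=\phi_{j-1})-\mu_{T,\beta}(\sigma_v=1\mid \sigma_S=\phi_j)\bigr|$, where consecutive terms differ only at the single vertex $w_j$. It thus suffices to prove, for $W=\{w\}$ and any boundary $\tau$ on $S\setminus\{w\}$, that $\bigl|\mu(\sigma_v=1\mid\sigma_w=1,\sigma_{S\setminus\{w\}}=\tau)-\mu(\sigma_v=1\mid\sigma_w=0,\sigma_{S\setminus\{w\}}=\tau)\bigr|\le\bigl(\tfrac{1-\beta}{1+\beta}\bigr)^{\mathrm{dist}(v,w)}$, and then sum over $j$.

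For the single-disagreement case, let $v=u_0,u_1,\ldots,u_\ell=w$ be the unique $v$-to-$w$ path in $T$, with $\ell=\mathrm{dist}(v,w)$, and condition throughout on $\sigma_{S\setminus\{w\}}=\tau$. The conditional distribution is still an Ising-type distribution on $T$, but the subtrees hanging off each path vertex contribute an effective external field at that vertex. Using the tree Markov property, I would establish by induction on $\ell$ the product formula
\[
\bigl|\mu(\sigma_v=1\mid\sigma_w=1)-\mu(\sigma_v=1\mid\sigma_w=0)\bigr|=\prod_{i=0}^{\ell-1}\bigl|\mu(\sigma_{u_i}=1\mid\sigma_{u_{i+1}}=1)-\mu(\sigma_{u_i}=1\mid\sigma_{u_{i+1}}=0)\bigr|,
\]
which follows by writing $\mu(\sigma_{u_0}=1\mid\sigma_{u_\ell}=s)=\sum_{t\in\{0,1\}}\mu(\sigma_{u_0}=1\mid\sigma_{u_1}=t)\mu(\sigma_{u_1}=t\mid\sigma_{u_\ell}=s)$ and exploiting that the difference over $s$ collapses because the two terms in the $t$-sum share the same coefficient structure.

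Finally, I would bound each local factor by $\frac{1-\beta}{1+\beta}$. Let $h$ denote the ratio $\mu(\sigma_{u_i}=1)/\mu(\sigma_{u_i}=0)$ produced by the subtree hanging off $u_i$ on the side \emph{not} containing $u_{i+1}$ (this absorbs all conditioning on $S\setminus\{w\}$ on that side). A direct calculation with the Ising edge weight $\beta$ gives
\[
\mu(\sigma_{u_i}=1\mid\sigma_{u_{i+1}}=0)=\frac{h}{h+\beta},\qquad \mu(\sigma_{u_i}=1\mid\sigma_{u_{i+1}}=1)=\frac{\beta h}{\beta h+1},
\]
so the factor equals $\frac{h(1-\beta^2)}{(h+\beta)(\beta h+1)}$; maximising over $h>0$ (the unique interior critical point is $h=1$) gives $\frac{1-\beta}{1+\beta}$. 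Multiplying the $\ell$ factors and then re-summing through the hybrid argument yields the claimed bound. The main obstacle is the bookkeeping of effective fields produced by conditioning on $S\setminus\{w\}$, but since the conditional distribution stays in the Ising class on $T$ with positive effective fields, the per-edge bound $\tfrac{1-\beta}{1+\beta}$ holds uniformly in those fields, keeping the argument clean.
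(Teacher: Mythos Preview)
Your proof is correct and follows the standard route in the literature (the Berger--Kenyon--Mossel--Peres / Mossel--Sly argument): a hybrid reduction to a single disagreement, the multiplicativity of influence along the unique tree path via the Markov property, and the per-edge bound $\tfrac{1-\beta}{1+\beta}$ obtained by optimising over the effective field. The paper does not give its own proof of this lemma; it is quoted as a known result from \cite{randreg,MosselSly,Berger}, so there is nothing to compare against beyond noting that your argument is exactly the classical one these references use.

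Two minor points you glossed over but which are trivially handled: (i) if some intermediate path vertex $u_i$ lies in $S\setminus\{w\}$, the Markov property forces the single-disagreement influence to be zero, so the bound holds vacuously; (ii) the effective field $h$ may be $0$ or $+\infty$ under hard conditioning, but your per-edge expression $\frac{h(1-\beta^2)}{(h+\beta)(\beta h+1)}$ vanishes at both endpoints, so the optimisation over $h\in(0,\infty)$ already covers these cases.
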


\begin{lemma}\label{lem:isingspectral}
Let $d\geq 1$ and $\beta>0$ be constants such that $\beta\in (\beta_c(d),1)$.   For any constant $\epsilon>0$, whp over the choice of $G\sim G(n,d/n)$, the antiferromagnetic Ising distribution $\mu_{G,\beta}$  is $(\epsilon\log n)$-spectrally independent.
\end{lemma}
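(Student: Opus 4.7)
The plan is to follow the template of the proof of Lemma~\ref{lem:hardspectral}, with the crucial simplification that the delicate tree-recursive $L^p$-analysis of Lemma~\ref{lem:spectraltrees} can be replaced by the clean $L^\infty$ decay estimate provided by Lemma~\ref{lem:worst}. Since $\beta>\beta_c(d)=(d-1)/(d+1)$ (equivalently, $(1-\beta)/(1+\beta)<1/d$), by continuity I can pick a constant $d''>d$ with $(1-\beta)/(1+\beta)\leq 1/d''$, i.e., $\beta>\beta_c(d'')$. By Lemma~\ref{lem:branch}, whp over $G\sim G(n,d/n)$ every vertex has $d''$-branching value at most $\epsilon'\log n$ for any prescribed constant $\epsilon'>0$; I fix $\epsilon'=\epsilon/4$ and restrict to such graphs~$G$.

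Let $\mu=\mu_{G,\beta}$. To bound $\lambda_1(\Psib_\tau)$ for arbitrary $S\subset V$ and $\tau\in\Omega_S$, I bound the maximum absolute row sum of $\Psib_\tau$. For any $(v,i)\in\tilde{V}_\tau$, the law of total probability (exactly as in the proof of Lemma~\ref{lem:hardspectral}) gives
\[
\sum_{(w,k)\in\tilde{V}_\tau}\bigl|\mu_\tau(\sigma_w=k\mid\sigma_v=i)-\mu_\tau(\sigma_w=k)\bigr|\;\leq\; 2\sum_{w\in V\setminus S}\bigl|\Ic^{\tau}_G(v\to w)\bigr|.
\]
Applying Lemma~\ref{lem:graphtotrees} to the self-avoiding walk tree $T=(V_T,E_T)$ rooted at $v$ yields a subset $Z\subseteq V_T\setminus\{v\}$ and a configuration $\phi\in\{0,1\}^Z$ with $\sum_{w\in V}|\Ic^{\tau}_G(v\to w)|\leq \sum_{u\in V_T}|\Ic^{\phi}_T(v\to u)|$.

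On the tree, each pairwise influence $|\Ic^{\phi}_T(v\to u)|$ is, by definition, the absolute difference of the marginals of $u$ under two boundary conditions that extend $\phi$ by opposite spins at the single vertex $v$. Applying Lemma~\ref{lem:worst} with the target vertex taken to be $u$ and with difference set $W=\{v\}$ gives $|\Ic^{\phi}_T(v\to u)|\leq \bigl((1-\beta)/(1+\beta)\bigr)^{\mathrm{dist}(v,u)}$. Grouping the SAW tree vertices by depth, and using the standard bijection between depth-$\ell$ vertices of $T$ and length-$\ell$ simple paths of $G$ starting at $v$, I obtain
\[
\sum_{u\in V_T}\bigl|\Ic^{\phi}_T(v\to u)\bigr|\;\leq\;\sum_{\ell\geq 0}N_{v,\ell}\Bigl(\tfrac{1-\beta}{1+\beta}\Bigr)^{\ell}\;\leq\;\sum_{\ell\geq 0}\frac{N_{v,\ell}}{(d'')^{\ell}}\;=\;S_v^{(d'')}\;\leq\;\epsilon'\log n,
\]
where the second inequality uses the choice of $d''$ and the last inequality is the branching-value property. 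Combining everything, each absolute row sum of $\Psib_\tau$ is at most $4\epsilon'\log n=\epsilon\log n$, which (by Gershgorin, or directly since the influence matrix has real eigenvalues) gives $\lambda_1(\Psib_\tau)\leq\epsilon\log n$.

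There is no serious technical obstacle: the only ingredient specific to the Ising case is the per-vertex $L^\infty$ contraction $(1-\beta)/(1+\beta)<1/d$ afforded by Lemma~\ref{lem:worst}, which aligns perfectly with the definition of the branching value. Unlike the hard-core model, no amortization via $L^p$-norms or H\"older's inequality is required, and marginal boundedness plays no role in the spectral bound itself. The only point requiring a sliver of care is the continuity argument producing $d''>d$ with $\beta>\beta_c(d'')$, which works precisely because $\beta$ lies strictly above the uniqueness threshold.
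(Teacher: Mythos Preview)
Your proposal is correct and follows essentially the same approach as the paper: choose $d''>d$ with $(1-\beta)/(1+\beta)<1/d''$, invoke Lemma~\ref{lem:branch} for the $d''$-branching value, pass to the self-avoiding walk tree via Lemma~\ref{lem:graphtotrees}, and apply the $L^\infty$ decay bound of Lemma~\ref{lem:worst} level by level to dominate the sum of influences by the branching value. The only cosmetic difference is that you carry an explicit factor of $2$ (from summing over the two spin values $k\in\{0,1\}$) and compensate by taking $\epsilon'=\epsilon/4$, whereas the paper absorbs this directly; either way the conclusion is the same.
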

\begin{proof}
Let $d'>d$ be a real number such that $\beta>\beta_c(d')$, i.e., $\tfrac{1-\beta}{1+\beta}<\tfrac{1}{d'}$; such $d'$ exists because of the continuity of the function $\beta_c(\cdot)$ in the interval $(0,1)$. By Lemma~\ref{lem:branch}, whp all of the vertices the graph $G=(V,E)\sim G(n,d/n)$ have $d'$-branching value less than $\epsilon \log n$. We will show that the result holds for all such graphs $G$.

Let for convenience $\mu=\mu_{G,\beta}$. Consider arbitrary $S\subset V$ and $\tau\in \Omega_S$. We will once again bound the largest eigenvalue of  the influence matrix $\Psib_\tau$ by bounding the absolute-value row sums of $\Psib_\tau$. In particular, analogously to the proof of Lemma~\ref{lem:hardspectral}, it suffices to show that for $v\in V\backslash S$ we have
\begin{equation}\label{eq:rvtg56g}
\sum_{w\in V}\big|\Ic^{\tau}_G(v\rightarrow w)\big|\leq \epsilon \log n,
\end{equation}
where $\Ic^{\tau}_G(v\rightarrow w)$ denotes the influence of $v$ on the vertices of $G$ conditioned on $\tau$. By Lemma~\ref{lem:graphtotrees}, for the self-avoiding walk tree $T=(V_T,E_T)$ from $v$,    there is a subset $W\subseteq V_T\backslash \{v\}$ and a configuration $\phi\in \{0,1\}^{W}$ such that
\begin{equation}\label{eq:445f45}
\sum_{w\in V}\big|\Ic^{\tau}_G(v\rightarrow w)\big|\leq \sum_{w\in V_T} \big|\Ic^{\phi}_{T}(v\rightarrow w)\big|,
\end{equation}
where $\Ic^{\phi}_{T}(v\rightarrow \cdot)$ denotes the influence of $v$ on the vertices of $T$ (in the  Ising distribution $\mu_{T,\lambda}$ conditioned on $\phi$). Since the $d'$-branching value of $v$ (and any other vertex of $G$) is bounded by $\epsilon \log n$, and applying Lemma~\ref{lem:worst}  to $T$, we have that 
\[\sum_{w\in V_T} \big|\Ic^{\phi}_{T}(v\rightarrow w)\big|\leq \sum_{\ell\geq 0} N_{v,\ell}\Big(\frac{1-\beta}{1+\beta}\Big)^{\ell}\leq \sum_{\ell\geq 0} N_{v,\ell}/(d')^{\ell}\leq \epsilon \log n,\]
where $N_{v,\ell}$ is the number of paths in $G$ with a total of $\ell+1$ vertices starting from $v$. Combining this with \eqref{eq:445f45} proves \eqref{eq:rvtg56g}, as wanted.
\end{proof}
The following lemma is a crude (and relatively standard) bound on the marginals for the antiferromagnetic Ising distribution, a proof can be found in, e.g., \cite[Lemma 26]{randreg}; there, the result is stated for bounded-degree graphs, but the proof of the marginal bound applies to any vertex whose degree is bounded.
\begin{lemma}[see, e.g., {\cite[Lemma 26]{randreg}}]
Let $\beta\in (0,1)$ and $D>0$ be real numbers. For a graph $G=(V,E)$, let $U$ be the set of vertices in $G$ with degree $\leq D$ and suppose that $|U|\geq 2$. Then, the distribution $\mu:=\mu_{G,\beta,U}$ is $b$-marginally bounded for $b=\tfrac{\beta^D}{1+\beta^D}$.
\end{lemma}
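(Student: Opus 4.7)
The plan is to reduce the marginal bound to the self-avoiding walk (SAW) tree via Lemma~\ref{lem:graphtotrees}, and then to run the standard single-vertex ratio recursion for the Ising model on that tree. Concretely, fix any $v\in U$, any $S\subseteq U\backslash\{v\}$, and any $\tau\in\Omega_S$. Lemma~\ref{lem:graphtotrees} (applied with $\mu=\mu_{G,\beta,U}$, which is the marginalisation of $\mu_{G,\beta}$ to $U$) produces a tree $T$ rooted at $v$ and a partial configuration $\phi$ on $T$ such that the marginal $\mu_\tau(\sigma_v=\cdot)$ equals the corresponding marginal $\nu_\phi(\sigma_v=\cdot)$ of the Ising distribution on $T$. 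Crucially, the SAW construction preserves the degree of the root, so the number of children of $v$ in $T$ is at most the degree of $v$ in $G$, which is at most $D$.

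Next I would analyse the ratio $R_v:=\nu_\phi(\sigma_v=1)/\nu_\phi(\sigma_v=0)$ using the usual Ising tree recursion. Write $v_1,\dots,v_k$ for the children of $v$ in $T$, where $k\le D$. Then $R_v=\prod_{i=1}^{k} g_i$, where each $g_i$ is either (i) $f_\beta(R_{v_i}):=(\beta R_{v_i}+1)/(R_{v_i}+\beta)$ when $v_i$ is unpinned (with $R_{v_i}$ defined analogously on the sub-SAW-tree), or (ii) $\beta$ or $1/\beta$ when $v_i$ is pinned by $\phi$ to $1$ or $0$ respectively. The key elementary fact is that for $\beta\in(0,1)$ the map $f_\beta$ sends $[0,\infty]$ into $[\beta,1/\beta]$ (since $\beta\le (\beta x+1)/(x+\beta)\le 1/\beta$ for all $x\ge0$). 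Hence every factor $g_i$ lies in $[\beta,1/\beta]$, regardless of whether the child is free or pinned.

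Multiplying over at most $D$ children yields $R_v\in[\beta^D,\beta^{-D}]$. The map $R\mapsto R/(1+R)$ is monotone increasing, so
\[
\nu_\phi(\sigma_v=1)=\frac{R_v}{1+R_v}\ge\frac{\beta^D}{1+\beta^D},\qquad \nu_\phi(\sigma_v=0)=\frac{1}{1+R_v}\ge\frac{1}{1+\beta^{-D}}=\frac{\beta^D}{1+\beta^D},
\]
which is exactly the claimed marginal lower bound $b=\beta^D/(1+\beta^D)$. Transferring back through Lemma~\ref{lem:graphtotrees} gives the bound for $\mu_\tau(\sigma_v=i)$, as required.

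There is no genuine obstacle here; the argument is a direct combination of the SAW-tree reduction and the one-dimensional Ising recursion. The only mild care needed is to handle pinned children correctly in the recursion, but since $\beta$ and $1/\beta$ are themselves the extreme values of $f_\beta$, pinning causes no difficulty. This is why the proof of \cite[Lemma 26]{randreg}, though stated there for bounded-degree graphs, only uses the degree bound at the single vertex $v$ whose marginal is being controlled, and hence applies verbatim in our setting where only the vertices of $U$ have bounded degree.
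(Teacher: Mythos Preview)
Your proposal is correct and follows essentially the same approach that the paper indicates: the paper does not give an explicit proof of this lemma but refers to \cite[Lemma 26]{randreg} and remarks that the argument there uses only the degree bound at the single vertex under consideration; your argument is precisely that standard argument, and it mirrors the paper's proof of the analogous hard-core result, Corollary~\ref{lem:hardbmarginal}, which also reduces to the SAW tree via Lemma~\ref{lem:graphtotrees} and then bounds the root marginal using the tree recursion. One small wording point: Lemma~\ref{lem:graphtotrees} should be applied to the full 2-spin distribution $\mu_{G,\beta}$ (not directly to the marginalised distribution $\mu_{G,\beta,U}$), using that the conditional marginal of $v$ under $\mu_{G,\beta,U}$ given $\tau$ on $S\subseteq U\setminus\{v\}$ coincides with the conditional marginal under $\mu_{G,\beta}$ given the same $\tau$; this is purely cosmetic and does not affect your argument.
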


\subsection{Monomer-dimer model}\label{sec:monomer}
In this section, we bound the spectral independence of  $G(n,d/n)$ in the monomer-dimer model. Instead of vertex-to-vertex influences that we considered for 2-spin systems, we need to consider instead edge-to-edge influences. Namely, for a graph $G=(V,E)$ and $\mu=\mu_{G,\gamma}$, fix some $\tau\in \{0,1\}^S$ for some $S\subset E$. Then, following \cite[Section 6]{Optimal},   for edges $e,f$ with $e\in E\backslash S$ and $0<\mu_\tau(\sigma_e=1)<1$, we define the influence of $e$ on $f$ (under $\mu_\tau$) as
\[\Ic^{\tau}_G(e\rightarrow f)=\mu_\tau(\sigma_f=0\mid \sigma_e=0)-\mu_\tau(\sigma_f=0\mid \sigma_e=1).\]
We will need the following technical lemma from \cite{ConnConnPaper}. The version below can be more easily derived from \cite[Lemma 15]{matchings} (where it is explained how to combine the results of \cite{ConnConnPaper}; in the notation below, the values of $a,\chi$ correspond to those of $p,q$ there, respectively).
\begin{lemma}[\cite{ConnConnPaper}]\label{lem:connconst2}
Let $d>1$ and $\gamma>0$. Let $\chi=(1+4\gamma \hat{d})^{1/2}$ where $\hat{d}=\max\{d,3/(4\gamma)\}$, and set $a=\tfrac{\chi}{\chi-1}$. Consider also the function  $\Phi(x)=\frac{1}{x(2-x)}$ for $x\in(0,1]$. Then, there is a constant $0<\kappa<1/d$ given by $\kappa=\tfrac{1}{\hat{d}}\big(1-\tfrac{2}{1+\chi}\big)^\chi$ such that the following holds for any integer $k\geq 1$.

Let $R_1,\hdots, R_k\in (0,1]$ be reals and $R=  \frac{1}{1+\gamma \sum^k_{j=1}R_j}$. Then $\big(\Phi(R)\big)^a\sum^k_{j=1}\Big(\frac{\gamma R^2}{\Phi(R_j)}\Big)^a\leq \kappa^{a/\chi}$.
\end{lemma}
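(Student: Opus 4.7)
My plan is to mimic the derivation of the hard-core analogue Lemma~\ref{lem:connconst} from \cite{ConnConnPaper}, which is carried out for matchings in \cite[Lemma~15]{matchings}. The approach is a potential-function contraction argument. First, I would recognise the quantity to be bounded as the $L^a$-norm (raised to the $a$-th power) of the Jacobian of the matching tree recurrence $R=1/(1+\gamma\sum_j R_j)$, viewed in coordinates transformed by a potential $\psi$ with $\psi' = \Phi$. Using $\partial R/\partial R_j = -\gamma R^2$ and the chain rule,
\[
\frac{\partial \psi(R)}{\partial \psi(R_j)}
\;=\;\frac{\Phi(R)}{\Phi(R_j)}\cdot\frac{\partial R}{\partial R_j}
\;=\;-\,\frac{\Phi(R)\,\gamma R^2}{\Phi(R_j)},
\]
so the quantity in the lemma equals $\sum_{j=1}^k \bigl|\partial\psi(R)/\partial\psi(R_j)\bigr|^a$.

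Substituting the explicit form $\Phi(x)=1/\sqrt{x(2-x)}$ and using the recurrence constraint $\gamma\sum_j R_j = (1-R)/R$, I would rewrite the sum as an explicit function of $R$ and $(R_j)_{j=1}^k$ and then optimise over the $R_j$ with $R$ held fixed. A symmetrisation (or Lagrange-multiplier) argument reduces the extremal configuration to a symmetric tuple $R_j = r$ for all $j$; after eliminating $r$ via the constraint, the problem becomes a one-variable inequality in $R$. Verifying this inequality is a calculus computation, and the specific choice $1/\chi = 1-1/(4\gamma d)$ (equivalently $a=4\gamma d$) is exactly the parameter that makes the resulting contraction constant $\kappa$ satisfy $\kappa<1/d$ uniformly over $k$ and admissible configurations.

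The principal obstacle is this final one-variable verification: the inequality must hold for all admissible $R$ with enough strict slack to yield the strict bound $\kappa<1/d$. The required calculus is present in \cite{ConnConnPaper} and \cite[Lemma~15]{matchings}, with the exponent $\chi$ engineered precisely so that the power of the contraction factor on the right-hand side lines up with the branching-value analysis; for us, the task is to extract the bound in the form stated. Once this is established, the lemma plugs into the matchings analogue of Lemma~\ref{lem:spectraltrees} via H\"{o}lder's inequality with conjugate exponents $\chi$ and $a$, exactly as in the hard-core case.
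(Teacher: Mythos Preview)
Your proposal is correct and matches the paper's treatment: the paper does not give an independent proof of Lemma~\ref{lem:connconst2} but simply cites it as a technical lemma from \cite{ConnConnPaper}, noting that the stated version can be read off from \cite[Lemma~15]{matchings}. Your sketch---rewriting the sum as the $L^a$-norm of the potential-transformed Jacobian of the matching recurrence, symmetrising, and reducing to a one-variable calculus verification with the specific choice $a=4\gamma d$---is precisely the derivation carried out in those references, so there is nothing to add.
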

With this in hand, and using the notion of branching values (cf. Definition~\ref{def:dbranchingvalue}), we are now able to bound the total sum of influences.
\begin{lemma}\label{lem:spectraltreesMonDim}
Let $d> 1$ and $\gamma>0$ be constants. Then, there is a constant  $\chi>1$ such that the following holds.

Let $T=(V,E)$ be a tree and $e$ be an edge of $T$,  whose endpoints $v_1$ and $v_2$ have $d$-branching values  $\leq \alpha$, and let $k=\max\{\mathrm{deg}(v_1),\mathrm{deg}(v_2)\}$. Then, for the monomer-dimer distribution on $T$ with parameter $\gamma$, it holds that
\[\sum_{f\in E}\big|\Ic_T(e\rightarrow f)\big|\leq W_k \alpha^{1/\chi},\]
where $W_k>0$ is a real depending only on $k$ (and the constants $d,\gamma$).
\end{lemma}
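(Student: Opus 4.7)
The plan is to mimic the proof of Lemma~\ref{lem:spectraltrees}, replacing vertex-to-vertex with edge-to-edge influences and using Lemma~\ref{lem:connconst2} in place of Lemma~\ref{lem:connconst}. Root $T$ at the edge $e=(v_1,v_2)$, so that every other edge $f$ has a deeper endpoint $u_f$ and a set of \emph{child edges} (those incident to $u_f$ other than $f$); on a tree the resulting edge-tree depth equals the line-graph distance from $e$. For every edge $f\neq e$ recursively define $\alpha_f=R_f=1$ at leaves, and $\alpha_f=1+\tfrac{1}{d}\sum_i\alpha_{f_i}$, $R_f=1/(1+\gamma\sum_i R_{f_i})$ otherwise; induction shows $R_f$ coincides with the subtree ratio $Z^{-}_{T_{u_f}}/Z_{T_{u_f}}$, so Lemma~\ref{lem:connconst2} applies at every internal edge. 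A direct partition-function computation yields the influence formulas
$$|\Ic_T(f\to f_i)|=\gamma R_f R_{f_i}\quad(f\neq e),\qquad |\Ic_T(e\to f)|=\gamma R_i R_f\text{ for }v_i\in f,$$
where $R_i=1/(1+\gamma\sum_{f\ni v_i,\,f\neq e}R_f)\in(0,1]$ is the standard monomer-dimer ratio at the vertex $v_i$ in the subtree $T^{(i)}=T\setminus e$ containing $v_i$.

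The heart of the proof is the inductive layered bound
$$\sum_{f\in L(h)}\Big(\tfrac{\alpha_f}{\alpha_e}\Big)^{1/\chi}\frac{|\Ic_T(e\to f)|}{R_f\,\Phi(R_f)}\;\leq\;M_k(d\kappa)^{h/\chi}\qquad(h\geq 1),$$
where $L(h)$ is the set of edges at edge-tree depth $h$, $M_k$ depends only on $\gamma,k$, and $\alpha_e:=1+\tfrac{1}{d}\sum_{f\text{ child of }e}\alpha_f$, which unfolds to $\alpha_e\leq 2\alpha$ by projecting edge-paths starting at $e$ to vertex-paths starting at $v_1$ or $v_2$. For the base case $h=1$, split the sum into edges incident to $v_1$ and those incident to $v_2$, apply H\"older's inequality with $1/\chi+1/a=1$, and invoke Lemma~\ref{lem:connconst2} separately with $R=R_i$ on each side; this yields $\leq d^{1/\chi}\kappa^{1/\chi}\big(\tfrac{1}{R_1\Phi(R_1)}+\tfrac{1}{R_2\Phi(R_2)}\big)$, which is $\leq M_k(d\kappa)^{1/\chi}$ because $R_i\geq 1/(1+\gamma(k-1))$. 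The inductive step from $h\geq 1$ to $h+1$ uses the telescoping $\Ic_T(e\to f_i)=\Ic_T(e\to f)\cdot\Ic_T(f\to f_i)$ (valid since the edge-tree path on a tree is unique), H\"older's inequality, and the algebraic identity $R_f\Phi(R_f)\cdot\gamma R_fR_{f_i}/(R_{f_i}\Phi(R_{f_i}))=\Phi(R_f)\cdot\gamma R_f^2/\Phi(R_{f_i})$; this makes the second H\"older factor collapse to $\kappa^{1/\chi}$ by Lemma~\ref{lem:connconst2}, while the first factor is bounded by $d^{1/\chi}$ via the branching recursion $\alpha_f\geq\tfrac{1}{d}\sum_i\alpha_{f_i}$.

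Using $\alpha_f\geq 1$, $R_f\Phi(R_f)\leq 1$ (both since $R_f\in(0,1]$), and $\alpha_e\leq 2\alpha$, the layered bound converts to $\sum_{f\in L(h)}|\Ic_T(e\to f)|\leq M_k(2\alpha)^{1/\chi}(d\kappa)^{h/\chi}$ for each $h\geq 1$; adding the $h=0$ term $|\Ic_T(e\to e)|=1\leq\alpha^{1/\chi}$ and summing the geometric series $\sum_{h\geq 1}(d\kappa)^{h/\chi}$ (convergent since $d\kappa<1$) yields $\sum_{f\in E}|\Ic_T(e\to f)|\leq W_k\alpha^{1/\chi}$ with $W_k=1+M_k\cdot 2^{1/\chi}/(1-(d\kappa)^{1/\chi})$. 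The main obstacle compared with Lemma~\ref{lem:spectraltrees} is the edge-rooted setup: because the root spans two subtrees, the $h=1$ layer must be bounded via two separate applications of Lemma~\ref{lem:connconst2} with different root ratios $R_1,R_2$, and $\alpha_e$ must be controlled through both vertex $d$-branching values $S_{v_1},S_{v_2}$. Identifying the correct normalisation $R_f\Phi(R_f)$ — which is what makes the inductive algebraic identity collapse onto Lemma~\ref{lem:connconst2} — is the key conceptual step; once it is in hand, everything else plugs in exactly as in the hard-core proof.
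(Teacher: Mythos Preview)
Your proposal is correct and follows essentially the same approach as the paper: both set up the layered bound $\sum_{f\in L(h)}(\alpha_f/\alpha_{\mathrm{root}})^{1/\chi}\,|\Ic(e\to f)|/(R_f\Phi(R_f))\le M_k(d\kappa)^{h/\chi}$ and prove it inductively via H\"older plus Lemma~\ref{lem:connconst2}, using the algebraic identity that collapses the inner factor to $\gamma R_f^2/\Phi(R_{f_i})$. The only organisational difference is that the paper first splits $T$ into the two subtrees $T_1,T_2$ (each retaining the edge $e$) and runs the induction on each from $h=0$ with a single root ratio $R_{v_i}$, whereas you keep $T$ whole, start the induction at $h=1$, and split only the base case into the $v_1$- and $v_2$-sides; both choices yield the same bound up to constants in $W_k$.
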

\begin{proof}
Let $\kappa\in (0,1/d)$ and $\chi>1$ be the constants from Lemma~\ref{lem:connconst2}, and  $\Phi(x)=\frac{1}{x(2-x)}$ be also as in Lemma~\ref{lem:connconst2}. Let $M_k=2k/\kappa^{1/\chi}$ and $W_k=\tfrac{M_k}{1-(d\kappa)^{1/\chi}}$.

 Let $\mu=\mu_{T,\gamma}$. For $i\in \{1,2\}$, let $\hat{T}_i$ be the subtree containing $v_i$ 
 obtained from~$T$ when we remove the edge $e$, and let $T_i$ be the tree obtained by adding the edge $e$ to $\hat{T}_i$. We will root $T_i$ at $v_{i}$, and set $\mu_i=\mu_{T_i,\gamma}$. We will show that
 \begin{equation}\label{eq:subtreei}\sum_{f\in E_{T_i}}|\Ic_{T_i}(e\rightarrow f)|\leq W_k \alpha^{1/\chi},
 \end{equation}
 where $\Ic_{T_i}(e\rightarrow \cdot)$ denotes the influence of $e$ on the edges of $T_i$ under $\mu_i$. Note that for any $f\in E_{T_i}$, we have that  $\Ic_{T}(e\rightarrow f)=\Ic_{T_i}(e\rightarrow f)$. Therefore, by adding  \eqref{eq:subtreei} for $i=1,2$, we obtain the statement of the lemma.
 
 We therefore focus on proving \eqref{eq:subtreei} for $i=1,2$. The argument is analogous to that used in Lemma~\ref{lem:spectraltrees} with suitable adaptations to account for the monomer-dimer model. We will work inductively on the depth of $T_i$.
 Analogously to that lemma, we first define for each vertex $u$ in $T_i$ the following values $\alpha_u$ and $R_u$. If $u$ is a leaf, set $\alpha_u=1$ and $R_u=1$; otherwise set $\alpha_u=1+\tfrac{1}{d}\sum^t_{j=1}\alpha_{u_j}$ and $R_u=\tfrac{1}{1+\gamma\sum^t_{j=1}R_{u_j}}$, where $u_1,\hdots, u_t$ are the children of $u$ in the rooted tree $T_i$. Note that we have that $\alpha_{v_i}\leq \alpha$, where $\alpha$ is the upper bound on the $d$-branching value of $v_i$ in the tree $T_i$. Moreover, if we denote by $T_i(u)$ the subtree of $T_i$ rooted at $u$, then a standard induction argument shows that 
\begin{equation}\label{eq:RuT}
R_u=\mu_{T_i(u),\gamma}(\mbox{$u$ is unmatched}).
\end{equation}
For an integer $h\geq 1$, let $L_i(h)$ be the set of edges at distance $h$ from the edge $e$ in the tree $T_i$, so that $L_i(1)$ consists of the edges incident to $v_i$ other than $e$, and so on. Moreover, for an edge $f$, let $v(f)$ be the endpoint of $f$ which is the farthest from $v_{3-i}$.  We will show that
\begin{equation}\label{eq:spectralind2}
\sum_{f\in L_i(h)}\Big(\frac{\alpha_{v(f)}}{\alpha_{v_i}}\Big)^{1/\chi}\frac{\big|\Ic_{T_i}(e\rightarrow f)\big|}{R_{v(f)}\Phi(R_{v(f)})}\leq M_k (d\kappa)^{h/\chi}.
\end{equation}
Note that $\alpha_v\geq 1$  and $R_v\Phi(R_v)\leq 1$ for all vertices $v$ in $T_i$,  and for the root $\alpha_{v_i}\leq \alpha$. So, \eqref{eq:spectralind2} yields $\sum_{f\in L_i(h)}\big|\Ic_{T_i}(e\rightarrow f)\big|\leq M_k\alpha^{1/\chi} (d\kappa)^{h/\chi}$ for all integer $h\geq 0$, and therefore summing over $h$, yields~\eqref{eq:subtreei} since $\kappa<1/d$. So it only remains to prove \eqref{eq:spectralind2}.

We will work inductively.  For the base case $h=1$, we have for every edge $f\in L_i(h)$ the trivial bounds $|\Ic_{T_i}(e\rightarrow f)\big|\leq 1$ and  $1/(R_{v(f)} \Phi(R_{v_f}))=2-R_{v(f)}\leq 2$. Moreover, by the definition of $\alpha_{v_i}$, we have $\alpha_{v_i}\geq \tfrac{1}{d}\alpha_{v(f)}$, so $\tfrac{\alpha_{v(f)}}{\alpha_{v_i}}\leq d$. Using these and  $M_k=2k/\kappa^{1/\chi}$, we obtain that \eqref{eq:spectralind2} holds for $h=1$.  For the induction step, consider $f\in L(h)$, and let $v=v(f)$. Suppose $v$  has $k_v\geq 0$ children in the tree $T_i$, joined by the  edges $f_j$ for $j\in [k_{v}]$. Then, for each $j\in [k_v]$, since $f$ is on the unique path joining $e$ to $f_j$ and the edges $f$, $f_j$ cannot simultaneously belong to a matching, it holds that (see \cite[Lemmas 6.11\& 6.12]{Optimal})
\begin{align*}\Ic_{T_i}(e\rightarrow f_j)&=\Ic_{T_i}(e\rightarrow f)\Ic_{T_i}(f\rightarrow f_j), \mbox{ and }\big| \Ic_{T_i}(f\rightarrow f_j)\big|=\mu_{T_i(v(f))}(\sigma_{f_j}=1)=\gamma R_{v(f_j)}R_{v(f)},
\end{align*}
where the last equality follows from \eqref{eq:RuT}. Therefore, we can write
\begin{align}
&\sum_{f\in L(h+1)}\Big(\frac{\alpha_{v(f)}}{\alpha_{v_i}}\Big)^{1/\chi}\frac{\big|\Ic_{T_i}(e\rightarrow f)\big|}{R_{v(f)}\Phi(R_{v(f)})}\notag\\
&=\sum_{f\in L(h)}\Big(\frac{\alpha_{v(f)}}{\alpha_{v_i}}\Big)^{1/\chi}\frac{|\Ic_{T_i}(e\rightarrow f)|}{R_{v(f)}\Phi(R_{v(f)})}\sum_{j\in [k_{v(f)}]}\Big(\frac{\alpha_{v(f_j)}}{\alpha_{v(f)}}\Big)^{1/\chi}R_{v(f)}\Phi(R_{v(f)})\frac{|\Ic_{T_i}(f\rightarrow f_j)|}{R_{v(f_j)}\Phi(R_{v(f_j)})},\notag\\
&\leq M_k (d\kappa)^{h/\chi}\max_{f\in L(h)}\bigg\{\sum_{j\in [k_{v(f)}]}\Big(\frac{\alpha_{v(f_j)}}{\alpha_{v(f)}}\Big)^{1/\chi}\Phi(R_{v(f)})\frac{\gamma R_{v(f)}^2}{\Phi(R_{v(f_j)})}\bigg\}.\label{eq:writeout2}
\end{align}
Consider an arbitrary $f\in L(h)$. 
Since $\tfrac{1}{\chi}+\tfrac{1}{a}=1$, by H\"{o}lder's inequality we have that
\begin{equation}\label{eq:induction2}
\sum_{j\in [k_{v(f)}]}\Big(\frac{\alpha_{v(f_j)}}{\alpha_{v(f)}}\Big)^{1/\chi}\Phi(R_{v(f)})\frac{\gamma R_{v(f)}^2}{\Phi(R_{v(f_j)})}\leq \bigg(\sum_{j\in [k_{v(f)}]}\frac{\alpha_{v(f_j)}}{\alpha_{v(f)}}\bigg)^{1/\chi}\bigg(\big(\Phi(R_{v(f)})\big)^a\sum_{i\in [k_v]}\Big(\frac{\gamma R_{v(f)}^2}{\Phi(R_{v(f_j)})}\Big)^a\bigg)^{1/a}.
\end{equation}
By Lemma~\ref{lem:connconst2} we have that $\bigg(\big(\Phi(R_{v(f)})\big)^a\sum_{i\in [k_v]}\Big(\frac{\gamma R_{v(f)}^2}{\Phi(R_{v(f_j)})}\Big)^a\bigg)^{1/a}\leq \kappa^{1/\chi}$. Moreover, by the definition of the $\alpha$ values, we have $\alpha_{v(f)}=1+\frac{1}{d}\sum_{j\in [k_{v(f)}]}\alpha_{v(f_j)}\geq \frac{1}{d}\sum_{j\in [k_{v(f)}]}\alpha_{v_{f(j)}}$. Plugging these bounds into \eqref{eq:induction2}, and then into \eqref{eq:writeout2} completes the inductive proof of \eqref{eq:spectralind2}.
\end{proof}
We can now conclude the following spectral independence property for the monomer-dimer distribution on $G(n,d/n)$.
\begin{lemma}\label{lem:monomerspectral}
Let $d\geq 1$ and $\gamma>0$. Then, for any constants $D,\epsilon>0$, whp over the choice of $G\sim G(n,d/n)$, the marginal monomer-dimer distribution $\mu_{G,\gamma, F}$, where $F$ is the set of edges in $G$ whose both endpoints have degree $\leq D$, is $(\epsilon\log n)$-spectrally independent.  
\end{lemma}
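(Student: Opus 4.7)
My plan is to follow exactly the template of the proof of Lemma~\ref{lem:hardspectral}, replacing the vertex-based tools for 2-spin systems with the edge-based analogues for the monomer-dimer model that are available from Lemma~\ref{lem:spectraltreesMonDim}.

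First, I pick $d'>d$ (fixed, arbitrary). Let $\chi>1$ and $\{W_k\}_{k\geq 1}$ be the constants produced by Lemma~\ref{lem:spectraltreesMonDim} with parameters $(d',\gamma)$, and set $W:=\max_{k\leq D} W_k$ (only these values of $k$ will matter since every edge $e\in F$ has both endpoints of degree $\leq D$). By Lemma~\ref{lem:branch}, whp every vertex of $G\sim G(n,d/n)$ has $d'$-branching value at most $\epsilon'\log n$, where $\epsilon'>0$ is a fixed constant chosen so that $W(\epsilon'\log n)^{1/\chi}\leq \epsilon\log n$ holds for all sufficiently large $n$; this is possible because $1/\chi<1$. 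I prove the lemma for every graph satisfying this property (for large enough $n$).

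Now fix any $S\subset F$, $\tau\in \Omega_S$, and $(e,i)\in \tilde F_\tau$. I want to bound the corresponding absolute-value row sum of $\Psib_\tau$ by $\epsilon\log n$. Since $\mu:=\mu_{G,\gamma,F}$ is the marginal of the monomer-dimer distribution $\mu_{G,\gamma}$ restricted to $F$, the same total-probability argument used in Lemma~\ref{lem:hardspectral} gives
\[\sum_{(f,k)\in \tilde F_\tau}\big|\mu_\tau(\sigma_f=k\mid \sigma_e=i)-\mu_\tau(\sigma_f=k)\big|\leq \sum_{f\in F}\big|\Ic_G^\tau(e\rightarrow f)\big|\leq \sum_{f\in E}\big|\Ic_G^\tau(e\rightarrow f)\big|.\]
The key remaining step is a graph-to-tree reduction for edge-to-edge influences in the monomer-dimer model, playing the role of Lemma~\ref{lem:graphtotrees}: there exist a tree $T$ containing the edge $e$ and a boundary configuration $\phi$ on a subset of edges of $T$ such that
\[\sum_{f\in E}\big|\Ic_G^\tau(e\rightarrow f)\big|\leq \sum_{f'\in E(T)}\big|\Ic_T^\phi(e\rightarrow f')\big|,\]
and such that the degrees of the endpoints of $e$ in $T$ are bounded by their degrees in $G$ (hence $\leq D$), while the $d'$-branching values of those endpoints in $T$ are bounded by their values in $G$ (hence $\leq \epsilon'\log n$). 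Such a reduction can be obtained either by passing to the line graph $L(G)$ and treating matchings as the hard-core model on $L(G)$ (so that Lemma~\ref{lem:graphtotrees} applies verbatim, with appropriate accounting of line-graph degrees and branching values), or by a direct tree-of-self-avoiding-walks construction for matchings of the type used in \cite{matchings}. Applying Lemma~\ref{lem:spectraltreesMonDim} to $T$ then yields $\sum_{f'\in E(T)}|\Ic_T^\phi(e\rightarrow f')|\leq W(\epsilon'\log n)^{1/\chi}\leq \epsilon\log n$, which proves the row-sum bound and hence $\lambda_1(\Psib_\tau)\leq \epsilon\log n$.

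The main obstacle is the graph-to-tree reduction for monomer-dimer edge influences, since Lemma~\ref{lem:graphtotrees} is stated only for 2-spin vertex systems. The cleanest route is probably the line-graph reduction, but one must verify that (i) the self-avoiding walk tree of $L(G)$ starting from vertex $e$ still maps back to a tree whose root's endpoints inherit degree and branching-value bounds from $G$, and (ii) the hard-core influences on $L(G)$ coincide with monomer-dimer edge influences on $G$. Alternatively, reading off such a reduction from the path-tree analysis in \cite{matchings, ConnConnPaper} is straightforward; in either case, the argument is essentially bookkeeping once the reduction is in place.
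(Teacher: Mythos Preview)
Your overall template is the same as the paper's (bound absolute row sums of $\Psib_\tau$, reduce to a tree, then invoke Lemma~\ref{lem:spectraltreesMonDim}), but you miss the one observation that makes the argument go through cleanly, and the workaround you sketch is not complete.

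The paper's key point is that in the monomer-dimer model, conditioning on $\sigma_S=\tau$ is \emph{equivalent} to the unconditioned monomer-dimer model on a subgraph of $G$: pinning an edge to~$0$ deletes it, pinning an edge to~$1$ deletes it together with all edges incident to its endpoints. Passing to a subgraph can only decrease degrees and $d'$-branching values, so one may assume $S=\emptyset$ with no loss. With $\tau$ trivial, the graph-to-tree step is just the unconditioned matchings analogue of Lemma~\ref{lem:graphtotrees}, which the paper quotes as \cite[Theorem~6.2]{Optimal}; then Lemma~\ref{lem:spectraltreesMonDim} applies directly (it is stated with no boundary condition, which is exactly what is needed here).

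By contrast, you keep $\tau$ throughout and therefore need a \emph{conditioned} edge-to-edge graph-to-tree reduction that yields a tree $T$ containing $e$ plus a boundary $\phi$, with degree and $d'$-branching bounds at the endpoints of $e$. You correctly flag this as the main obstacle but do not resolve it. The line-graph route is not quite bookkeeping: Lemma~\ref{lem:graphtotrees} applied to hard-core on $L(G)$ outputs the SAW tree of $L(G)$, which is a tree on the \emph{edges} of $G$, whereas Lemma~\ref{lem:spectraltreesMonDim} requires a tree on the \emph{vertices} of $G$ containing the edge $e$ with branching-value bounds on its endpoints; translating between the two needs an extra argument you have not supplied. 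Moreover, even after any such reduction, Lemma~\ref{lem:spectraltreesMonDim} has no $\phi$ in its statement, so you would still need the subgraph observation on the tree side. The fix is simply to make the subgraph observation at the outset, as the paper does.
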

\begin{proof}
Consider arbitrary $D,\epsilon>0$ and let $d'>d$. Let $\chi>1$ and $W=\max\{W_1,\hdots,W_D\}$ where $\chi$ and the $W_k$'s are as in Lemma~\ref{lem:spectraltrees} (corresponding to the constants $d',\gamma$). By Lemma~\ref{lem:branch}, whp all of the vertices the graph $G=(V,E)\sim G(n,d/n)$ have $d'$-branching value less than $\epsilon \log n$. We will show that the result holds for all such graphs $G$.

Let for convenience $\mu=\mu_{G,\gamma, F}$, where $F$ is the set of edges whose both endpoints have degree $\leq D$. Consider arbitrary $S\subset F$ and $\tau\in \Omega_S$. We will once again bound the largest eigenvalue of  the influence matrix $\Psib_\tau$ by bounding the absolute-value row sums of $\Psib_\tau$.  It suffices to consider the case that $S$ is empty (and $\tau$ is trivial) since conditioning on an arbitrary $\tau$ is equivalent to the monomer-dimer model on a subgraph of $G$. Analogously to the hard-core model (cf. proof of Lemma~\ref{lem:hardspectral}), to bound the largest eigenvalue, it suffices to show for arbitrary $e\in F$ that 
\begin{equation}\label{eq:rvtg56a4565}
\sum_{f\in F}\big|\Ic_G(e\rightarrow f)\big|\leq \epsilon \log n,
\end{equation}
where $\Ic_G(e\rightarrow \cdot)$ denotes the influence of $e$ on the edges of $G$ under the (full) distribution $\mu_{G,\gamma}$.  In \cite[Theorem 6.2]{touniqueness}, they showed the analogue of  Lemma~\ref{lem:graphtotrees} in the case of matchings. In particular, let $v$ be one of the endpoints of $e$ and consider the self-avoiding walk tree $T=(V_T,E_T)$  emanating from $v$,   then \cite[Theorem 6.2]{Optimal} asserts that 
\begin{equation*}
\sum_{f\in E}\big|\Ic_G(e\rightarrow f)\big|\leq \sum_{f\in E_T} \big|\Ic_{T}(e\rightarrow f)\big|,
\end{equation*}
where $\Ic_{T}(e\rightarrow \cdot)$ denotes the influence of $e$ to the edges of $T$, under $\mu_{T,\gamma}$. By Lemma~\ref{lem:spectraltreesMonDim} applied to the tree $T$, the right-hand side is bounded by $W(\epsilon \log n)^{1/\chi}$, which is less than $\epsilon \log n$ for all sufficiently large $n$. This gives \eqref{eq:rvtg56a4565}, as wanted. 
\end{proof}

\subsection{Proof of Theorems~\ref{thm:main1} and~\ref{thm:main2}}\label{sec:efweef}

\begin{proof}[Proof of Theorem~\ref{thm:main1}]
This is completely analogous to the proof of Theorem~\ref{thm:main} presented in Section~\ref{sec:outline}. Lemma~\ref{lem:time} is about structural properties of the random graph $G\sim G(n,d/n)$, so we can use it verbatim. Therefore, we only need to establish the analogue of Lemma~\ref{lem:fastone}, i.e., for any $\theta>0$ and all sufficiently large $D>0$, whp over $G\sim G(n,d/n)$, running Glauber dynamics on $U$, where $U$ is the set of vertices with degree $\leq D$, gives an $\epsilon$-sample from $\mu_{G,\beta,U}$ in time $n^{1+\theta}\log\tfrac{1}{\epsilon}$. Completely analogously to the hard-core model, using now the spectral independence bound of Lemma~\ref{lem:isingspectral}, we conclude that for any integer $r\in [\theta |U|,|U|]$ the distribution $\mu_{G,\beta,U}$ satisfies the $r$-uniform-block factorisation of entropy with   multiplier $C_r\leq n^{\theta/3}$ (i.e., the analogue of Corollary~\ref{cor:hardtensor} for the antiferromagnetic Ising model). From there, the same argument as in Lemma~\ref{lem:hard1tensor} yields  that $\mu_{G,\beta,U}$ satisfies the 1-uniform-block factorisation of entropy with    multiplier $C_1\leq n^\theta$.
\end{proof}

\begin{proof}[Proof of Theorem~\ref{thm:main2}]\label{sec:remainingproofs}
Again, this is very similar to the proof of Theorems~\ref{thm:main} and~\ref{thm:main1}, we only need to adapt the argument slightly to account for updating edges (instead of vertices).

Once again, we only need to establish the analogue of Lemma~\ref{lem:fastone}, i.e., for any $\theta>0$ and all sufficiently large $D>0$, whp over $G\sim G(n,d/n)$, running Glauber dynamics on $F$, where $F$ is the set of vertices whose both endpoints have degree $\leq D$, gives an $\epsilon$-sample from $\mu_{G,\gamma,F}$ in time $n^{1+\theta}\log\tfrac{1}{\epsilon}$. Using  the spectral independence bound of Lemma~\ref{lem:monomerspectral}, we obtain the analogue of Corollary~\ref{cor:hardtensor}, i.e.,   for any integer $r\in [\theta |F|,|F|]$ the distribution $\mu_{G,\beta,F}$ satisfies the $r$-uniform-block factorisation of entropy with multiplier $C_r\leq n^{\theta/3}$. The same argument as in Lemma~\ref{lem:hard1tensor} yields  that $\mu_{G,\beta,F}$ satisfies the 1-uniform-block factorisation of entropy with multiplier $C_1\leq n^\theta$.
\end{proof}

\section{Random Graph Properties used in our Proofs}\label{sec:randomproofs}
In this section, we prove the random graph properties that we have used in our proofs, i.e., Lemmas~\ref{lem:time},~\ref{lem:branch} and~\ref{lem:percolation}.

\subsection{Bounding the branching value}\label{sec:branch}
Let $d\geq 1$. In this section we prove Lemma~\ref{lem:branch} which bounds the $d$-branching values of the vertices in $G(n,d/n)$. The key ingredient is to bound the following closely related quantity. For a graph $G$ and a vertex $v$ of $G$, let $\hat{N}_{v,r}$ be the number of vertices at (graph) distance exactly $r$ from $v$. We call a vertex $v$ $\epsilon$-good if it holds that 
\[\hat{S}_v:=\sum_{r\geq 0} \hat{N}_{v,r}/((1+\epsilon)d)^r\leq \epsilon \log n.\]
Note that the main difference between $\hat{S}_v$ and the $d$-branching value $S_v$ (cf. Definition~\ref{def:dbranchingvalue}) is that the latter is defined with respect to the number of paths, and therefore we trivially have $\hat{S}_v\leq S_v$. For random graphs however we will also be able to show that $S_v\leq 2\hat{S}_v$. 

We start with showing that whp all vertices in $G(n,d/n)$ are $\epsilon$-good.
\begin{lemma}\label{lem:good}
For any constants $d\geq 1$ and $\epsilon>0$, whp over the choice of  $G\sim G(n,d/n)$, all vertices of $G$ are $\epsilon$-good.
\end{lemma}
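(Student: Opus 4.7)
The strategy is to dominate the moment generating function (MGF) of $\hat S_v$ by that of the weighted total progeny in a Galton--Watson branching process with $\mathrm{Poisson}(d)$ offspring, to show this MGF is finite at every $t>0$, and to conclude via an exponential Markov inequality combined with a union bound over $v\in V$.

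Set $d'=(1+\epsilon)d$, which is strictly greater than $1$ since $d\geq 1$, and write $Z_r=\hat N_{v,r}$. Exploring from $v$ by BFS, the next level satisfies $Z_{r+1}\sim\mathrm{Bin}\bigl(n-T_r,\,1-(1-d/n)^{Z_r}\bigr)$ conditional on the history, where $T_r$ is the number of vertices exposed by level $r$; the mean is at most $dZ_r$ (via $1-(1-x)^{z}\leq xz$). A backward induction on generations, using the convex-order bound $\mathrm{Bin}(m,q)\leq_{cx}\mathrm{Poi}(mq)$ and the observation that the continuation functions appearing in the induction are increasing and convex in the current generation size (they are exponentials in it), yields
$$
\Eb\bigl[e^{t\hat S_v}\bigr]\;\leq\;\phi(t)\;:=\;\Eb\bigl[e^{t\hat T}\bigr],\qquad \hat T:=\sum_{r\geq 0}W_r/d'^{\,r},
$$
where $(W_r)$ is a Galton--Watson process with $\mathrm{Poisson}(d)$ offspring and $W_0=1$. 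Decomposing $\hat T$ at the root as $\hat T=1+(1/d')\sum_{j=1}^{K}\hat T_j$, with $K\sim\mathrm{Poi}(d)$ and the $\hat T_j$ i.i.d.\ copies of $\hat T$, one obtains the functional equation
$$
\phi(t)\;=\;\exp\!\bigl(t+d(\phi(t/d')-1)\bigr).
$$
I claim $\phi(t)<\infty$ for every $t\geq 0$: if $t^{*}:=\inf\{t:\phi(t)=\infty\}$ were finite, then $t^{*}/d'<t^{*}$ (because $d'>1$) would force $\phi(t^{*}/d')<\infty$, and the functional equation would then give $\phi(t^{*})<\infty$, contradicting the definition of $t^{*}$.

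With $\phi$ finite everywhere, taking $t=2/\epsilon$ in the exponential Markov inequality gives $\Pr[\hat S_v\geq \epsilon\log n]\leq \phi(2/\epsilon)\,n^{-2}$, and a union bound over the $n$ vertices yields $\Pr[\exists\, v\colon \hat S_v>\epsilon\log n]\leq \phi(2/\epsilon)/n=o(1)$, proving the lemma. The main obstacle is the coupling step: because $\mathrm{Bin}(n,d/n)$ is \emph{not} stochastically dominated by $\mathrm{Poi}(d)$, one cannot couple $(Z_r)$ and $(W_r)$ pointwise. Instead, the MGF-domination has to be propagated through all generations using the convex order of per-generation offspring laws together with the monotone-convex structure of the branching-process continuation, which is what makes the Poisson GW process a legitimate MGF upper envelope for $\hat S_v$.
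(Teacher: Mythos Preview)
Your overall strategy coincides with the paper's: bound the MGF of $\hat S_v$ by that of a weighted total-progeny variable in a branching process, then use an exponential Markov inequality and a union bound. The paper dominates the BFS by a Galton--Watson tree with $\mathrm{Bin}(n-r,d/(n-r))$ offspring at depth~$r$ via a direct stochastic coupling, whereas you pass to a $\mathrm{Poi}(d)$ tree using the convex order $\mathrm{Bin}(m,q)\leq_{cx}\mathrm{Poi}(mq)$; either route leads to an MGF satisfying the recursion $\phi(t)=\exp\bigl(t+d(\phi(t/d')-1)\bigr)$.

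There is, however, a genuine gap in your finiteness argument for $\phi$. The step ``$t^{*}/d'<t^{*}$ because $d'>1$'' tacitly assumes $t^{*}>0$, and nothing you have written rules out $t^{*}=0$. A nonnegative, almost-surely finite random variable with finite mean can still have $E[e^{tX}]=\infty$ for every $t>0$, so neither $\phi(0)=1$ nor $E[\hat T]<\infty$ yields $t^{*}>0$; the functional equation cannot bootstrap finiteness from $t=0$ to any positive $t$, since the iterates $t/d'^{k}$ never actually reach $0$. (There is also a smaller slip: even granting $t^{*}>0$, deducing $\phi(t^{*})<\infty$ does not contradict $t^{*}=\inf\{t:\phi(t)=\infty\}$, as the infimum need not be attained; one should instead argue that $\phi(t)<\infty$ for some $t\in(t^{*},d't^{*})$.)

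The missing base case is exactly what the paper's proof supplies. It constructs an explicit finite majorant by choosing $c>1/(1-d/d')$ and $t_0>0$ so that $e^{ct/d'}\leq 1+\tfrac{c-1}{d}\,t$ on $[0,t_0]$; an induction through the recursion then gives $\phi_n(t)\leq e^{ct}$ on $[0,t_0]$ for every finite-depth truncation $\phi_n$, hence $\phi(t)\leq e^{ct}$ there and in particular $t^{*}\geq t_0>0$. Once this is in hand, your bootstrapping via the functional equation legitimately extends finiteness to all $t\geq 0$, and the rest of your argument goes through.
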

\begin{proof}
For integers $n\geq 1$, consider a (finite) tree rooted at $\rho$ created by a branching process where a node at depth $r$ has $\mathrm{Bin}(n-r,\tfrac{d}{n-r})$ children, and let $Y_{n,r}$ be the number of nodes at depth $r$. Set
\[
X_n = \sum_{r=0}^{n-1} \alpha^r\, Y_{n,r}\mbox{ where } \alpha:=\tfrac{1}{d(1+\epsilon)}<1/d\leq 1.
\]
For any fixed vertex $v$ of $G$, we can couple $S_v$ and $X_n$, so that $S_v\leq X_n$. Indeed, the BFS exploration process in $G$ starting from $v$ is stochastically dominated by the branching process above, since at depth $r$ of the BFS process the number of unexplored vertices is at most $n-r$ and the connection probability $\tfrac{d}{n}$ is less than $\tfrac{d}{n-r}$. In particular, there is a coupling so that $\hat{N}_{v,r}\leq Y_{n,r}$ for all integers $r\geq 0$, therefore yielding $\hat{S}_v\leq X_n$ as well. It follows that 
\begin{equation}\label{eq:domination}
\Pr[\hat{S}_v\geq \epsilon \log n]\leq \Pr[X_n\geq \epsilon \log n].
\end{equation}

To bound the probability $\Pr[X_n\geq \epsilon \log n]$ we will use moment generating functions, adapting an argument from \cite{MosselSly}. For integer $n\geq 1$ and real $t\geq 0$, let $g_n(t)=\Eb[\exp(tX_n)]$ and consider the functions $\{f_n(t)\}_{n\geq 1}$ defined for $t\geq 0$ by
\[f_1(t)=\exp(t),\quad  f_n(t)=\exp\big(t+d(f_{n-1}(\alpha t)-1)\big)  \mbox{ for } n\geq 2.\]
We will show by induction on $n$ that $g_n(t)\leq f_n(t)$ for all $t\geq 0$. Since $X_1=1$, the base case is trivial. For $n\geq 2$, observe that 
\[X_n=1+\alpha\big(X^{(1)}_{n-1}+\hdots X^{(K)}_{n-1}\big) \mbox{ where } K\sim \mathrm{Bin}(n,\tfrac{d}{n}),\]
and $X^{(1)}_{n-1},X^{(2)}_{n-1},\hdots$ are i.i.d. variables with distribution $X_{n-1}$. Therefore,
\begin{equation*}
\begin{aligned}
g_n(t)&= \sum_{k=0}^n \binom{n}{k}\Big(\frac{d}{n}\Big)^k\Big(1-\frac{d}{n}\Big)^k \Eb\big[\emm^{t (1+ \alpha (X_{n-1}^{(1)} + \cdots + X_{n-1}^{(k)})}\big]=\emm^{t}\Big(1+\frac{d}{n}(g_{n-1}(t\alpha)-1)\Big)^n\\
&\leq \emm^{t}\Big(1+\frac{d}{n}(f_{n-1}(t\alpha)-1)\Big)^n\leq 
\exp \big(t + d (f_{n-1}(\alpha t)-1)\big)=f_n(t),
\end{aligned}
\end{equation*}
completing the inductive proof that $g_n(t)\leq f_n(t)$.

We will next define a function $F(t)$ so that $f_n(t)\leq F(t)$ for all $n$ and $t$. To define $F$,  let $c$ be a constant such that $c > 1/(1-\alpha d)$ and note that $c>1$ since $\alpha d<1$. Let $t_0>0$ be such that  
\begin{equation}\label{eq:deft0}
\exp(c\alpha t_0) = 1 + \frac{c-1} d t_0,
\end{equation}
and note that such $t_0$ exists, since both sides at $t=0$ have value $1$, and their derivatives at $t=0$ are given by  $c\alpha$ and $(c-1)/d$, satisfying  
 $c\alpha<(c-1)/d$ from the choice of $c$. From convexity of $\exp(c\alpha t)$, we obtain that for all $t\in [0,t_0]$ it holds that
\begin{equation}\label{ea2}
\exp(c\alpha t) \leq 1 + \frac{c-1} d t, \mbox{ or  equivalently } \exp\big(t + d (\exp(c \alpha t) - 1)\big)\leq \exp(ct).
\end{equation}
Now, consider the function $F$ defined by setting $F(t)=\exp(ct)$ for $t\in [0,t_0]$; for $t>t_0$ we define $F$ inductively on intervals $(t_0 (1/\alpha)^{k},t_0 (1/\alpha)^{k+1}]$
by
\[F(t) = \exp\big(t + d (F(\alpha t) - 1)\big).\]
It is not hard to see that the function $F$ is increasing and continuous, satisfying the  inequality
\begin{equation}\label{e1}
\exp(t + d (F(\alpha t) - 1))\leq F(t).
\end{equation} 
Using \eqref{ea2}, the validity of \eqref{e1} and monotonicity are trivial; to check continuity, note that $F$ can only be discontinuous at the points $t_k=t_0(1/\alpha^k)$; let $k\geq 0$ be the first such integer. By construction,  $F$ is continuous at $t_0$ (see \eqref{eq:deft0}), and hence it must be the case that $k\geq 1$. But then
\[\lim_{t\downarrow t_k }F(t)=\lim_{t\downarrow t_k}\exp\big(t + d (F(\alpha t) - 1)\big)=\exp\big(t + d (F(t_{k-1}) - 1)\big)=F(t_k),\]
contradicting the choice of $k$. Now we also prove, by induction, that $f_n(t) \leq F(t)$. We have $f_1(t) = \exp(t) \leq F(t)$, since $c > 1/(1-\alpha d) > 1$. Now suppose $f_n(t)\leq F(t)$. Then, using~\eqref{e1}, we have
\[
f_{n+1}(t) = \exp(t-d + d f_n(\alpha t))\leq \exp(t-d + d F(\alpha t))\leq F(t),\]
as needed.

Using Markov's inequality and the fact that $g_n(t)\leq f_n(t)\leq F(t)$ for arbitrary $t\geq 0$ we have 
\begin{equation}\label{eq:Markov}
\Pr(X_n\geq \epsilon \log n) = \Pr\big(\exp(tX_n)\geq\exp(t\epsilon \log n)\big)\leq \frac{F(t)}{\exp(t\epsilon \log n)}\leq \exp\big(-h(\epsilon \log n)\big),
\end{equation}
where  the function $h$ is defined for $A\geq 0$ by
\[h(A) := \sup_{t\geq 0}\, \{t A - \log F(t)\}.\]
Let $M=1+5/\epsilon$. For any $A\geq F(M)$,  we have 
\[
\frac{h(A)}{A} = \sup_{t\geq 0} \Big\{t - \frac{\log F(t)}{A}\Big\}\geq M-1= 5/\epsilon,\]
where the inequality follows by considering $t=M$ and observing that $\log F(M)\leq F(M)\leq A$.

Since $F(M)$ is a constant (depending only on $\epsilon$, $d$), we have that for all sufficiently large $n$ it holds that $\epsilon \log n\geq F(M)$, and therefore $h(\epsilon \log n)\geq 5\log n$. From \eqref{eq:domination} and \eqref{eq:Markov}, we therefore obtain that $\Pr[S_v\geq \epsilon \log n]\leq 1/n^{5}$, and Lemma~\ref{lem:good} follows by taking a union bound over the $n$ vertices of~$G$.
\end{proof}
To obtain the desired bound on the branching values from Lemma~\ref{lem:good}, we will need to relate $N_{v,r}$, the number of paths with $r+1$ vertices from $v$, with $\hat{N}_{v,r}$, the number of nodes at distance $r$ from $v$. This will follow by the following tree-like property of $G(n,d/n)$. For a graph $G$, a vertex $v$ of $G$ and real $r>0$, we use $B(v,r)$ to be the set of vertices in $G$ at distance $\leq r$ from $v$.
\begin{lemma}[{\cite[Lemma 7]{MosselSly}}]\label{lem:treelike}
Let $d\geq 1$. The following holds whp over $G=(V,E)\sim G(n,d/n)$ and $R=(\log \log n)^2$. For all $v\in V$, $|B(v,R)|\leq d^R \log n$ and the tree-excess of the induced graph $G[B(v,R)]$ is at most~1.
\end{lemma}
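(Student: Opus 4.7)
The plan is to establish the two claims---bounded ball size and near-tree structure---separately, and then union-bound over vertices.

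For the size bound $|B(v,R)| \leq d^R \log n$, I would reuse the exponential-moment machinery already set up in the proof of Lemma~\ref{lem:good}. BFS from $v$ is stochastically dominated by a Galton--Watson process with $\mathrm{Bin}(n, d/n)$ offspring (mean $d$), truncated at depth $R$. Let $Z$ denote the total progeny of that truncated process; it satisfies $\Eb[Z] \leq \sum_{r=0}^{R} d^r = O(d^R)$, and essentially the same MGF induction as in Lemma~\ref{lem:good} (with the truncation making the bound easier, not harder) shows that $\Eb[\exp(tZ)]$ is bounded by a constant for some fixed $t > 0$. Applying Markov to $\exp(tZ)$ gives $\Pr[|B(v,R)| > d^R \log n] \leq \exp(-t(d^R\log n - O(d^R))) = n^{-\omega(1)}$, since $d^R \log n$ is $\omega$ of its mean (using $R = (\log\log n)^2 = o(\log n)$). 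A union bound over $v \in V$ completes this step.

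For the tree-excess bound, the key combinatorial observation is that a connected graph has tree-excess $\geq 2$ if and only if it contains a \emph{theta subgraph}, i.e., two vertices joined by three internally vertex-disjoint paths. If $G[B(v,R)]$ has tree-excess $\geq 2$, then it contains a theta subgraph in which every vertex is within distance $R$ of $v$; in particular each of the three paths has length at most $2R$, so the theta has at most $6R+O(1)$ vertices in total. I would then bound the expected number of such thetas in $G(n,d/n)$ containing $v$ by a first-moment calculation: a theta on $k$ vertices has exactly $k+1$ edges, so the expected count is at most
\[
\sum_{k \leq 6R + O(1)} k^{O(1)}\, n^{k-1}\,(d/n)^{k+1} \;=\; \sum_{k} k^{O(1)}\, d^{k+1}/n^{2}.
\]
With $R = (\log\log n)^2$, this sum is $n^{-2 + o(1)}$, so a union bound over $v \in V$ gives probability $o(1)$ that any $R$-ball carries a theta, which yields tree-excess at most $1$ everywhere.

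The main technical point I expect to require care is making the theta count clean: one must enumerate over the two \emph{poles} and the three intermediate path-lengths summing to $\leq 6R$, and the combinatorial overhead is a polynomial in $k$ absorbed into the $k^{O(1)}$ factor. A secondary subtlety is that conditioning on $B(v,R)=B$ is \emph{not} independent of the edges inside $B$ (the boundary condition forbids edges from $B$ to its complement at certain depths), which is why I bypass conditioning on the ball altogether and instead count theta subgraphs containing $v$ directly in $G(n,d/n)$. A slightly different route would be to argue via BFS exploration step-by-step: when the $m$-th vertex is added to $B(v,R)$, the number of back-edges to already-explored vertices is $\mathrm{Bin}(m-1, d/n)$, so the total excess is dominated by $\mathrm{Bin}(\binom{m}{2}, d/n)$ with $m \leq d^R \log n$, whose probability of exceeding $1$ is $O((d^R \log n)^4 \cdot (d/n)^2) = n^{-2+o(1)}$; this is a slightly cruder but more self-contained alternative.
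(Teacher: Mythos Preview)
The paper does not prove this lemma; it is quoted from \cite[Lemma~7]{MosselSly} and used as a black box, so there is no in-paper argument to compare against. Your alternative BFS back-edge route for the tree-excess bound is sound and is the argument to keep: conditioned on the BFS tree (which alone determines $B(v,R)$, since back-edges do not affect which vertices get discovered), the remaining internal edges are still independent $\mathrm{Bern}(d/n)$, and on $\{|B(v,R)|\leq m\}$ there are at most $\binom{m}{2}$ of them, giving $\Pr[\text{excess}\geq 2,\ |B(v,R)|\leq m]\leq \binom{\binom{m}{2}}{2}(d/n)^2=n^{-2+o(1)}$ for $m=d^R\log n$. Your two \emph{primary} arguments, however, each contain a genuine error.

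For the ball-size bound, the assertion that $\Eb[\exp(tZ)]$ is bounded by a constant for some fixed $t>0$ is false whenever $d\geq 1$: since $\Eb[Z]\geq d^R\to\infty$, Jensen gives $\Eb[\exp(tZ)]\geq\exp(t\,d^R)\to\infty$. The MGF recursion in Lemma~\ref{lem:good} works precisely because of the geometric damping $\alpha^r$ with $\alpha d<1$; with $\alpha=1$ the inner argument no longer shrinks and the iteration $f_n(t)=\exp\bigl(t+d(f_{n-1}(t)-1)\bigr)$ diverges (indeed $f_n(t)-1\geq t+d(f_{n-1}(t)-1)$ already forces $f_n(t)\to\infty$). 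The truncation at depth $R$ does not help, because the mean itself is of order $d^R$. A correct argument needs a different tail bound---e.g.\ a level-by-level Chernoff bound on the $\hat N_{v,r}$, or an MGF with $t$ scaled down with $R$ together with a careful control of the recursion---not a reuse of Lemma~\ref{lem:good}.

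For the theta approach, the claimed equivalence ``tree-excess $\geq 2$ iff contains a theta'' is false: two vertex-disjoint triangles joined by a path has tree-excess $2$ and no theta. The minimal connected witnesses for tree-excess $\geq 2$ are theta, figure-eight (two cycles sharing exactly one vertex), and dumbbell (two vertex-disjoint cycles joined by a path); all three have $k+1$ edges on $k$ vertices, so your first-moment bound survives once you enumerate all of them. Two further slips: the witness inside $G[B(v,R)]$ need not contain $v$, so ``thetas containing $v$'' undercounts (instead, count all small bicyclic subgraphs anywhere in $G$; the expected number is already $n^{-1+o(1)}$, no per-vertex union bound needed); and ``each path has length $\leq 2R$'' does not follow merely from all vertices lying in $B(v,R)$, since a path inside $B(v,R)$ can be much longer than $2R$. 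The $O(R)$ size bound on a witness does hold, but via a different reason: take the two excess edges over a BFS tree from $v$ and their fundamental cycles, each of length $\leq 2R+1$.
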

We can now prove Lemma~\ref{lem:branch}, which we restate here for convenience.
\begin{lembranch}
\statelembranch
\end{lembranch}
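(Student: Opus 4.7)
The plan is to use Lemma~\ref{lem:good} to control the distance-based quantity $\hat{S}_v$ uniformly over vertices, and then to leverage the local tree-like structure from Lemma~\ref{lem:treelike} to convert this into a bound on the path-based quantity $S_v$.

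First, I would fix an auxiliary $\epsilon'>0$ with $(1+\epsilon')d\leq d'$ and $\epsilon'$ small enough that the constants arising below are absorbed into the target $\epsilon$. Applying Lemma~\ref{lem:good} with parameters $(d,\epsilon')$ gives whp that every vertex $v$ satisfies $\hat{S}_v\leq \epsilon'\log n$, and a fortiori $\sum_r \hat{N}_{v,r}/(d')^r\leq \epsilon'\log n$. Applying Lemma~\ref{lem:treelike} gives whp that every induced ball $G[B(v,R)]$ with $R=(\log\log n)^2$ has tree-excess at most~$1$ and size $\leq d^R\log n$. I condition on both events and split $S_v=\sum_{\ell\leq R}N_{v,\ell}/(d')^\ell+\sum_{\ell>R}N_{v,\ell}/(d')^\ell$ into head and tail.

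For the head, any simple path from $v$ of length $\ell\leq R$ lies inside $B(v,R)$, so I can work in $G[B(v,R)]$. Let $T$ be the BFS tree of $G[B(v,R)]$ rooted at $v$; depths in $T$ coincide with $G$-distances from $v$, so simple paths lying in $T$ contribute exactly $\hat{N}_{v,\ell}$ to $N_{v,\ell}$. When the tree-excess equals one, any simple path using the extra edge $e=\{a,b\}$ decomposes uniquely as a tree-path $v\to a$, the edge $e$, and a tree-path $b\to u$; re-indexing the weighted sum by the $T$-distance $k$ from $b$ and using $\hat{N}_{b,k}^T\leq |B(b,k)|\leq \sum_{j\leq k}\hat{N}_{b,j}$ bounds the additional contribution by $O(\sum_r \hat{N}_{b,r}/(d')^r)=O(\epsilon'\log n)$ via Lemma~\ref{lem:good} applied to $b$. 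Thus the head is at most $C_1\epsilon'\log n$ for a constant $C_1=C_1(d,d')$.

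For the tail, the expectation is $\Eb[\sum_{\ell>R}N_{v,\ell}/(d')^\ell]=O((d/d')^R)=\exp(-\Omega((\log\log n)^2))$, which is not polynomially small in $n$, so a Markov-and-union bound over the $n$ vertices is insufficient. Instead I would prove a uniform-in-$n$ exponential-moment bound $\Eb[\exp(tS_v)]\leq F(t)$ by stochastically dominating the tree of self-avoiding walks from $v$ by a Galton--Watson tree with $\mathrm{Bin}(n-1,d/n)$ offspring per node, and then applying the MGF/recursion argument from the proof of Lemma~\ref{lem:good} in the sub-critical regime $d/d'<1$. A Markov bound with $t$ a large constant multiple of $1/\epsilon$ combined with a union bound over the $n$ vertices then gives $\sup_v S_v\leq \epsilon\log n$ whp, after choosing $\epsilon'\leq \epsilon/(2C_1)$. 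The main technical obstacle is this Galton--Watson coupling: distinct nodes of the self-avoiding-walk tree sharing a common endpoint in $G$ have coupled offspring counts inherited from the same vertex-degree, so pointwise stochastic domination by i.i.d.\ $\mathrm{Bin}$-offspring is not immediate. A lazy edge-revelation scheme, using a fresh $\mathrm{Bin}(n-1,d/n)$ draw as an upper bound at each exploration step, circumvents this by over-counting offspring while keeping the dominating branching process sub-critical, so that the MGF analysis from Lemma~\ref{lem:good} applies verbatim.
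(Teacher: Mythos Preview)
Your head argument (paths of length $\leq R$) is essentially the paper's and is fine. The gap is in the tail. The self-avoiding-walk tree from $v$ is \emph{not} stochastically dominated by a Galton--Watson tree with $\mathrm{Bin}(n-1,d/n)$ offspring, and no ``lazy edge-revelation'' repairs this: once a vertex $w$ of $G$ is revealed to have degree $k$, \emph{every} node of the SAW tree whose endpoint is $w$ has roughly $k$ children, all determined by the same realised degree. A fresh $\mathrm{Bin}(n-1,d/n)$ draw has mean $\approx d$ and cannot serve as an upper bound on $k$ when $k$ happens to be large; your ``over-counting'' goes the wrong way. Indeed the SAW tree has strictly heavier tails than the GW tree precisely because high-degree vertices are reused multiplicatively across many branches, so the MGF recursion from Lemma~\ref{lem:good} (which is for the BFS exploration, where each edge is revealed at most once) does not transfer. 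Note also that your MGF bound, if it worked, would control all of $S_v$ directly and would make the head/tail split redundant, so the structure of your argument is internally inconsistent.

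The paper handles the tail \emph{deterministically} on the intersection of the high-probability events from Lemmas~\ref{lem:good} and~\ref{lem:treelike}. It partitions $S_v$ into blocks $s_i=\sum_{r=iR}^{(i+1)R}N_{v,r}/(d')^r$, uses the path-decomposition inequality $N_{v,r}\leq N_{v,r-R}\cdot\max_{w} N_{w,R+1}$, and bounds the second factor via $N_{w,R+1}\leq 2|B(w,R+1)|\leq 2d^{R+2}\log n$ (tree-excess $\leq 1$). This yields $s_i\leq s_{i-1}\cdot O\big(R\,(d/d')^{R}\log n\big)\leq s_{i-1}/2$ for large $n$, since $(d/d')^R$ with $R=(\log\log n)^2$ beats the polylog factor. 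Summing the geometric series gives $S_v\leq 2s_0\leq \epsilon\log n$, with no further probabilistic input needed beyond the two lemmas you already invoked.
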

\begin{proof}
It suffices to prove the result for arbitrarily small constant $\epsilon>0$ and $d'=(1+\epsilon) d$.  Let $R=\lfloor (\log \log n)^2\rfloor-1$ and $\epsilon'=\frac{d'-1}{4d'}\epsilon$. We have that whp $G=(V,E)\sim G(n,d/n)$ satisfies the conclusions of Lemmas~\ref{lem:good} and~\ref{lem:treelike}.

For arbitrary $v\in V$, we can bound the $d'$-branching value of $v$ from above by $\sum_{i\geq 0}s_i$, where for an integer $i\geq 0$, we set $s_i:=\sum^{ (i+1)R}_{r= iR}N_{v,r}/(d')^r$. From Lemma~\ref{lem:treelike}, the tree-excess of $G[B(v,R+1)]$ is at most 1, and hence for integers $r\leq R+1$ we have the bound\footnote{This can be proved by considering the BFS tree from $v$. An ``excess'' edge can only connect vertices either at the same or consecutive  levels of the BFS tree. In either case,  every path of length $r$ that uses that edge can be mapped injectively to a vertex at distance $\leq r$ from $v$.} 
\begin{equation}\label{eq:Nvr}
N_{v,r}\leq 2\sum^r_{r'=0}\hat{N}_{v,r'}.
\end{equation}
It follows that $s_0\leq 2\sum^{R}_{r'=0}\frac{\hat{N}_{v,r'}}{(d')^{r'}}\sum^{ R}_{r=r'}\frac{1}{(d')^{r-r'}}\leq \frac{\epsilon}{2} \log n$, 
where in the last inequality we used the fact from Lemma~\ref{lem:good} that all vertices in $G$ are $\epsilon'$-good, and in particular that $\sum^{R}_{r'=0}\frac{\hat{N}_{v,r'}}{(d')^{r'}}\leq \epsilon' \log n$.  

To bound $s_i$ for an integer $i\geq 1$, note that for integer $i R\leq r\leq (i+1)R$, we can decompose a path with $r$ vertices starting from $v$ into two paths with $r-R$ and $R+1$ vertices, so we have 
\[N_{v,r}\leq N_{v,r-R}\max_{w\in B(v,r-R)} N_{w,R+1}\leq 2 d^{R+1} N_{v,r-R}\log n ,\]
where the last inequality follows from applying \eqref{eq:Nvr} for $r=R+1$ and noting that  $\sum^{R+1}_{r'=0}\hat{N}_{v,r'}=|B(v,R+1)|\leq d^{R+1}\log n$ from Lemma~\ref{lem:treelike}. Therefore
\begin{equation}\label{eq:erf4f5345}
s_i=\sum^{(i+1) R}_{r=iR}\frac{N_{v,r}}{(d')^r}\leq s_{i-1}\sum^{(i+1)R}_{r=iR}\frac{2d^{R+1}\log n}{(d')^{R}}\leq s_{i-1}\frac{2(R+1)d^{R+1}\log n}{(d')^{R}}\leq s_{i-1}/2,
\end{equation}
where the last inequality is true for all sufficiently large $n$ since $d'>d$ and $R=\omega(\log \log n)$.

Using \eqref{eq:erf4f5345} and summing over $i\geq 0$, we have that the $d'$-branching value of $v$ is bounded by $\sum_{i\geq 0} s_i\leq 2s_0\leq \epsilon \log n$. Since $v$ was an arbitrary vertex of $G$, this finishes the proof.  
\end{proof}
\subsection{Bounding the tree-excess of small connected sets} 
We start with the following lemma which shows that the tree-excess of logarithmically-sized connected sets in $G(n,d/n)$ is bounded by an absolute constant (a similar result in a slightly different setting was shown in \cite{JamesRANDOM}).
\begin{lemma}\label{lem:next}
Let $d,M>0$ be arbitrary reals. There exists a positive integer $\ell$ such that the following holds whp over the choice of $G=G(n,d/n)$. There is no connected set $S$ of vertices such that $|S|\leq M\log n$ and the tree excess is more than $\ell$. 
\end{lemma}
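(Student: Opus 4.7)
My approach is a standard first-moment argument. A connected set $S \subseteq V$ with $|S|=k$ and tree-excess at least $\ell+1$ must contain (i) some spanning tree on $S$, contributing $k-1$ edges, together with (ii) a choice of at least $\ell+1$ additional ``excess'' edges inside $S$. By taking a union bound over the choices of vertex set, spanning tree, and excess edges, the expected number of such bad sets of size exactly $k$ is at most
\[
\binom{n}{k}\, k^{k-2}\, \binom{\binom{k}{2}}{\ell+1}\, \Big(\frac{d}{n}\Big)^{k-1+\ell+1},
\]
using Cayley's formula for the $k^{k-2}$ spanning trees on $k$ labelled vertices and the fact that, given which $k+\ell$ edges are required, each appears independently in $G(n,d/n)$ with probability $d/n$.

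Using the crude estimates $\binom{n}{k}\leq n^k/k!$, $k!\geq (k/\mathrm{e})^k$, and $\binom{\binom{k}{2}}{\ell+1}\leq k^{2(\ell+1)}$, this bound simplifies (up to absolute constants) to
\[
\frac{(\mathrm{e} d)^k\, k^{2\ell}\, d^{\ell}}{n^{\ell}}.
\]
Summing over $k\in\{\ell+2,\dots,\lceil M\log n\rceil\}$ (any bad set must have at least $\ell+2$ vertices) and using the monotonicity of the summand in $k$, the total expected number of bad sets is at most
\[
\lceil M\log n\rceil \cdot \frac{(\mathrm{e} d)^{\lceil M\log n\rceil}\,(M\log n)^{2\ell}\,d^{\ell}}{n^{\ell}} \;=\; \frac{n^{M\log(\mathrm{e} d)}\cdot \mathrm{polylog}(n)}{n^{\ell}},
\]
which tends to $0$ as $n\to\infty$ provided $\ell > M\log(\mathrm{e} d)$ (and we simply take $\ell=1$ when $\mathrm{e} d\leq 1$). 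Choosing $\ell=\lceil M\log(\mathrm{e} d)\rceil+1$, Markov's inequality then gives that whp no such bad connected set exists, which is exactly what the lemma asserts.

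The main step is really just the counting identity combined with Cayley's formula; nothing here is subtle. The only place one has to be a bit careful is in balancing the exponent of $n$: the factor $n^k$ from $\binom{n}{k}$ is cancelled by $n^{-(k-1)}$ from the spanning-tree edges, leaving a surplus of $n^{-\ell}$ that must dominate the $(\mathrm{e} d)^k\leq (\mathrm{e} d)^{M\log n}=n^{M\log(\mathrm{e} d)}$ factor coming from the growth of the spanning-tree count. This is what fixes $\ell$ as a constant depending on $d$ and $M$.
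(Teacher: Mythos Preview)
Your proof is correct and follows essentially the same approach as the paper: a first-moment bound counting vertex sets via $\binom{n}{k}$, spanning trees via Cayley's formula, and excess edges via a crude $k^{2\ell}$ bound, then balancing the $n^{k}$ from vertex choices against the $n^{-(k-1+\ell)}$ from edge probabilities. Your choice $\ell=\lceil M\log(\mathrm{e}d)\rceil+1$ is in fact more explicit about the dependence on $d$ than the paper's own write-up.
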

\begin{proof}
It suffices to prove the lemma for all sufficiently large $d$ since the property is increasing under edge inclusion; in particular we will assume that $d\geq \emm$. Let $\ell$ be an integer bigger than $2M+5$.

For a positive integer $k\leq \left\lfloor M\log n\right\rfloor$, we calculate the expected number of  sets $S$ as in the statement of the lemma with $|S|=k$. There are  $\binom{n}{k}$ choices for the set $S$, $k^{k-2}$ labelled trees on $S$, and at most $\binom{k^2}{\ell}$ choices for $\ell$ additional edges. The probability that all these $k+\ell-1$ edges are in~$G$
is $(d/n)^{k-1+\ell}$. Therefore, by a union bound, we can upper bound the expected number of such sets $S$ with $|S|=k$ by 
\[\binom{n}{k} k^{k-2}  \binom{k^2}{\ell} \left(\frac{d}{n}\right)^{k+\ell-1}
\leq \frac{\emm^{k+\ell} d^{k+\ell-1}k^{2\ell}}{n^{\ell-1} \ell^\ell} \leq
\frac{1}{n^2},\]
where the last inequality holds for all sufficiently large $n$ using that $\ell>2M+5$. By summing over the $O(\log n)$ values for the integer $k$, we obtain that the expected number of such sets $S$ with $|S|\leq M\log n$ is $o(1)$, and therefore the result follows by Markov's inequality.
\end{proof}
 \subsection{Bounding the number of high-degree vertices in connected components}
\label{sec:smallcomponents}

In this section, we prove Lemma~\ref{lem:percolation}, that bounds the number of connected sets in $G(n,d/n)$ as well as the fraction of high-degree vertices. 
(Lemma~\ref{lem:percolation} shows the existence of a quantity~$L$ which depends on a quantity $M$ from Lemma~\ref{lem:FR}: we take 
$\delta$ to be a sufficiently small constant,
$\Delta = 1/(\delta \log \tfrac1\delta)$,  
$M=\max\{10d,50(1+\log d)\}$, and $L = 4M$. We state these quantities upfront to help with the readability of the expressions in this section.)

For a graph $G=(V,E)$ and a set $S\subseteq V$, we let $\mathrm{deg}_G(S)$ denote the sum of the degrees of
the vertices in~$S$. To control the number of connected sets containing a given vertex, we will use the following bound that holds for general graphs $G$.
\begin{lemma}[{\cite[Lemma 6]{JamesRANDOM}}]\label{lem:connected}
Let $G = (V, E)$ be a graph, $v\in V$, and $\ell\geq 1$ be an integer. The number of
connected sets $S\subseteq V$ such that $v\in V$ and $\mathrm{deg}_G(S) = \ell$ is at most $(2\emm)^{2\ell-1}$.
\end{lemma}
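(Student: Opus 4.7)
The plan is to prove the bound by producing an explicit injective encoding of each eligible $S$ as a short string over a small alphabet, and then count strings. Fix once and for all a canonical ordering of the edges incident to every vertex of $G$ (e.g., induced by a global ordering of vertex labels). Given a connected $S \ni v$ with $\mathrm{deg}_G(S)=\ell$, I will run a DFS starting at $v$ that explores $G[S]$, producing a string over the three-letter alphabet $\{\mathsf{D},\mathsf{N},\mathsf{R}\}$ as follows. When a vertex $u$ is first visited, process its incident edges $\{u,w\}$ in the canonical order: if $w$ has already been visited, emit no symbol; otherwise emit $\mathsf{D}$ if $w\in S$ (and recurse into $w$) and emit $\mathsf{N}$ if $w\notin S$. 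After all edges at $u$ have been processed, emit $\mathsf{R}$ and return.

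Next I will check that this encoding is injective. A decoder who knows $G$ (and its canonical edge-orderings) starts at $v$, maintains the set of already-visited vertices, and simulates the DFS by reading symbols: $\mathsf{D}$ means the next-unvisited canonical neighbor belongs to $S$ and should be recursed into, $\mathsf{N}$ means that neighbor should be skipped, and $\mathsf{R}$ means backtrack. Because ``already visited'' is computable by the decoder, no symbol is needed for back-edges inside $G[S]$, and the decoder reconstructs $S$ unambiguously.

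Now I will count: the number of $\mathsf{D}$-symbols equals $|S|-1$ (one per tree edge used by the DFS), the number of $\mathsf{N}$-symbols equals the size $|\partial S|$ of the edge-boundary of $S$, and the number of $\mathsf{R}$-symbols equals $|S|$. Using $|S|-1\leq |E(G[S])|$ together with $\mathrm{deg}_G(S)=2|E(G[S])|+|\partial S|=\ell$, the total string length is at most $2|S|-1+|\partial S|\leq \ell+1$. Hence the number of possible encodings, and thus the number of eligible $S$, is at most $3^{\ell+1}$. A short calculation shows $3^{\ell+1}\leq (2\mathrm{e})^{2\ell-1}$ for all $\ell\geq 2$; the boundary case $\ell=1$ is handled directly, since $\mathrm{deg}_G(S)=1$ with $v\in S$ and $G[S]$ connected forces $S=\{v\}$ with $v$ of degree $1$, giving at most one such set, trivially $\leq 2\mathrm{e}$.

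The step I expect to require the most care is verifying that the encoding/decoding pair is well-defined: one must confirm that at the moment the decoder reads $\mathsf{D}$ or $\mathsf{N}$, the ``next unvisited canonical neighbor'' is uniquely determined, and that suppressing symbols for back-edges (neighbors that happen to already be visited) does not cause ambiguity. This follows because whether a canonical neighbor $w$ of $u$ has been visited at a given point depends only on the prefix of the string read so far and the known graph $G$, not on $S$, so encoder and decoder make matching choices at every step.
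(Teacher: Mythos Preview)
The paper does not prove this lemma; it is quoted verbatim from \cite{JamesRANDOM} and used as a black box. So there is no ``paper's own proof'' to compare against. That said, your DFS-encoding argument is correct and is in fact a standard way to establish bounds of this type. The counts are right: tree edges of the DFS give $|S|-1$ symbols $\mathsf{D}$, boundary edges give $|\partial S|$ symbols $\mathsf{N}$, and each vertex gives one $\mathsf{R}$, for total length $2|S|-1+|\partial S|\le \ell+1$ using $|S|-1\le |E(G[S])|$ and $2|E(G[S])|+|\partial S|=\ell$. The injectivity argument is sound because the decoder's ``visited'' set at every step is determined by the prefix already processed, independent of $S$, so encoder and decoder stay synchronised on which neighbours are silently skipped.

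One small point worth making explicit: the encoding length can be strictly less than $\ell+1$ (whenever $G[S]$ has cycles), so to conclude ``at most $3^{\ell+1}$ encodings'' rather than $\sum_{k\le \ell+1}3^k$ you are implicitly using that the code is prefix-free (the DFS terminates unambiguously when the final $\mathsf{R}$ returns from the root) together with Kraft's inequality. Either way the final numerical comparison with $(2\mathrm{e})^{2\ell-1}$ goes through with room to spare; in fact your argument gives the sharper bound $3^{\ell+1}$.
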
 We will use the following version of Chernoff bounds to upper bound the total degree of connected sets with $\Omega(\log n)$ size. 
\begin{lemma}\label{lem:chernoff}
Suppose $X_1, \hdots, X_n$ are i.i.d. Bernoulli random variables. Let $X=X_1+\cdots+X_n$ denote their sum and set $\mu=\Eb[X]$.  Then, for any $t\geq 5$, $\Pr(X>t\mu)\leq \emm^{-\tfrac{1}{2}t\mu\log t}$.
\end{lemma}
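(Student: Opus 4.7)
The plan is to apply the standard multiplicative Chernoff bound via the moment-generating-function method and then verify that the resulting exponent is dominated by $-\tfrac{1}{2}t\mu\log t$ in the regime $t \geq 5$.

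First I would use the standard MGF argument. For any $s > 0$, Markov's inequality gives $\Pr(X > t\mu) \leq \mathbb{E}[e^{sX}]/e^{st\mu}$. Since the $X_i$ are i.i.d.\ Bernoulli with mean $p := \mu/n$, we have $\mathbb{E}[e^{sX_i}] = 1 + p(e^s - 1) \leq \exp\!\bigl(p(e^s-1)\bigr)$, so $\mathbb{E}[e^{sX}] \leq \exp\!\bigl(\mu(e^s-1)\bigr)$. Optimising $s = \log t$ (valid since $t \geq 5 > 1$) yields the classical bound
\begin{equation*}
\Pr(X > t\mu) \leq \exp\!\bigl(\mu(t - 1 - t\log t)\bigr).
\end{equation*}

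Next I would reduce the claim to a one-variable inequality. The bound above is at most $\exp(-\tfrac{1}{2}t\mu\log t)$ precisely when
\begin{equation*}
t - 1 - t\log t \;\leq\; -\tfrac{1}{2}\, t\log t,
\qquad\text{i.e.,}\qquad
g(t) := \tfrac{1}{2}\,t\log t - (t-1) \;\geq\; 0.
\end{equation*}
So the whole statement reduces to showing $g(t) \geq 0$ for all $t \geq 5$.

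Finally I would verify $g(t) \geq 0$ for $t \geq 5$. Direct computation gives $g(5) = \tfrac{5}{2}\log 5 - 4 \approx 2.5 \cdot 1.6094 - 4 \approx 0.0236 > 0$. Differentiating, $g'(t) = \tfrac{1}{2}\log t + \tfrac{1}{2} - 1 = \tfrac{1}{2}(\log t - 1)$, which is positive for $t > e$, and in particular for $t \geq 5$. Hence $g$ is increasing on $[5,\infty)$, so $g(t) \geq g(5) > 0$ throughout, completing the proof.

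The only mildly delicate point is that the constant $5$ in the hypothesis is essentially tight for the claimed form of the bound: the inequality $g(t) \geq 0$ fails for $t$ just below $5$ (e.g.\ $g(4) = 2\log 4 - 3 \approx -0.23 < 0$), so the numerical check at the boundary $t = 5$ is what the argument really turns on. Everything else is routine.
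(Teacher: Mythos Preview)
Your proof is correct and follows essentially the same route as the paper: both reduce to the standard multiplicative Chernoff bound $\Pr(X>t\mu)\leq \exp\bigl(\mu(t-1-t\log t)\bigr)$ and then check that $\tfrac{1}{2}t\log t \geq t-1$ for $t\geq 5$. The only differences are cosmetic---you derive the Chernoff bound from the MGF rather than citing it, and you verify the one-variable inequality explicitly (via $g(5)>0$ and $g'(t)>0$ for $t>e$), whereas the paper simply asserts it.
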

\begin{proof}
It is well-known (see, e.g., \cite[Theorem 2.1]{Jansonbook}) that for any $\delta>0$, it holds that
\[\Pr(X>(1+\delta)\mu)\leq \Big(\frac{\emm^\delta}{(1+\delta)^{(1+\delta)}}\Big)^{\mu},\]
For $\delta\geq 4$, we have that $\frac{(1+\delta)^{(1+\delta)}}{\emm^\delta}\geq (1+\delta)^{(1+\delta)/2}$, and the result follows.
\end{proof}
\begin{lemma}\label{lem:FR}
Let $d>0$ be an arbitrary real. There is an $M>0$ such that for any $\delta\in (0,1)$, whp over
the choice of $G=G(n,d/n)$, every connected set $S$ of
vertices with $|S|=\lceil\delta \log n\rceil $ satisfies
$\mathrm{deg}_G(S) \leq M\Delta|S|$, where $\Delta=1/(\delta\log\tfrac{1}{\delta})$.
\end{lemma}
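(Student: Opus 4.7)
The plan is a first-moment computation based on counting pairs $(S,\tau)$ with $|S| = k := \lceil \delta \log n \rceil$ and $\tau$ a labelled tree on $S$. Since every connected subset of size $k$ in $G$ contains at least one such $\tau \subseteq G$ (namely a spanning tree of $G[S]$), the number of ``bad'' connected sets is bounded above by the count of pairs with $\tau \subseteq G$ and $\mathrm{deg}_G(S) > M\Delta k$. By Cayley's formula there are at most $\binom{n}{k} k^{k-2}$ candidate pairs, each with $\Pr[\tau \subseteq G] = (d/n)^{k-1}$; this collapses the combinatorial factor from the prohibitive $\binom{n}{k}$ down to roughly $(\emm d)^k \cdot n$.

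For a fixed pair $(S,\tau)$ I would condition on $\tau \subseteq G$ and bound the conditional probability that $\mathrm{deg}_G(S) > M\Delta k$. Writing $\mathrm{deg}_G(S) \leq 2Y$ where $Y$ is the number of edges incident to $S$, and decomposing $Y = (k-1) + Y'$ given $\tau \subseteq G$, the quantity $Y'$ is a sum of at most $nk$ independent Bernoullis of parameter $d/n$ with mean at most $kd$. Applying Lemma~\ref{lem:chernoff} with $t = M\Delta/(4d)$ (which is $\geq 5$ provided $M \geq 20d/\emm$, since $\delta\log(1/\delta) \leq 1/\emm$ forces $\Delta \geq \emm$ on $(0,1)$), and using that $M\Delta k/2 - (k-1) \geq M\Delta k/4$ when $M\Delta \geq 4$, I obtain
\[
\Pr[\mathrm{deg}_G(S) > M\Delta k \mid \tau \subseteq G] \;\leq\; \exp\bigl(-c\, M\Delta k\bigr), \qquad c := \tfrac{1}{8}\log\bigl(M\Delta/(4d)\bigr).
\]

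Combining the ingredients, the expected number of bad pairs is at most $n \cdot (\emm d)^{k} \cdot \exp(-c M\Delta k)$; using $k \geq \delta \log n$ and the identity $\delta \Delta = 1/\log(1/\delta)$ rewrites this bound as $n^{1 - \delta(cM\Delta - \log(\emm d))}$. Markov then reduces the task to exhibiting a single constant $M$, depending only on $d$, for which
\[
cM/\log(1/\delta) \;>\; 1 + \delta\log(\emm d)
\]
holds uniformly in $\delta \in (0,1)$.

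The uniform-in-$\delta$ condition is the main obstacle, because $\log(1/\delta)$ is unbounded as $\delta \to 0$. The saving is that $c$ itself grows with $\log(1/\delta)$: substituting $u = \log(1/\delta)$ gives $\log(M\Delta/(4d)) = \log(M/(4d)) + u - \log u$, and elementary calculus shows that $u \mapsto [\log(M/(4d)) + u - \log u]/u$ is minimised at $u^\star = \emm M/(4d)$ with value $1 - 4d/(\emm M)$. Thus $cM/\log(1/\delta) \geq M(1 - 4d/(\emm M))/8$ for every admissible $u > 0$, and picking $M$ larger than $\max\bigl(20d/\emm,\; 8(1+\log(\emm d)) + 4d/\emm\bigr)$ (a constant depending only on $d$) delivers the required uniform bound, completing the argument via Markov's inequality.
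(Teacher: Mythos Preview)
Your argument is correct and follows essentially the same first-moment strategy as the paper: enumerate pairs (set, spanning tree) via Cayley's formula to collapse $\binom{n}{k}$ to roughly $n(\emm d)^k$, apply the Chernoff bound of Lemma~\ref{lem:chernoff} to the edges incident to $S$, and then exploit the identity $\delta\Delta = 1/\log(1/\delta)$ to obtain uniformity in $\delta$. Two minor technical differences are worth recording. First, the paper splits $\deg_G(S)=2e_{\mathsf{in}}(S)+e_{\mathsf{out}}(S)$, invokes Lemma~\ref{lem:next} to control $e_{\mathsf{in}}(S)$, and applies Chernoff only to $e_{\mathsf{out}}(S)$ (which is independent of the event that $S$ is connected); you instead condition on the spanning tree $\tau\subseteq G$ and apply Chernoff to all remaining incident edges at once, which is slightly more self-contained since it avoids the appeal to Lemma~\ref{lem:next}. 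Second, the paper handles the uniform-in-$\delta$ step via the one-line estimate $\log\Delta \geq \tfrac{1}{2}\log(1/\delta)$, whereas you carry out an explicit optimisation in $u=\log(1/\delta)$; both give the same conclusion. One small caveat: for very small $d$ your displayed choice of $M$ can be non-positive (since $\log(\emm d)\to -\infty$); this is cleanly fixed, as in the paper, by first assuming $d\geq 1$ via monotonicity of the event under edge inclusion.
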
 
\begin{proof}
The proof is close to an argument of  Fountoulakis and Reed \cite[Lemma 2.4]{FR}, the only difference is that we have to account for the smaller size of $S$. We may assume for convenience that $d\geq 1$ since the graph property is increasing with respect to adding edges in the graph $G$. Let $M=\max\{10d,50(1+\log d)\}$, and $\delta\in (0,1)$ be an arbitrarily small constant, we will in fact assume henceforth that $\delta<1/\emm^2$. Let $\Delta=1/(\delta \log(1/\delta))$ and note that $\Delta>2$. For convenience, let $k=\lceil \delta \log n\rceil$.

For a  set $S$, let $e_{\mathsf{in}}(S)$ be the number of edges whose both endpoints belong to $S$, and $e_{\mathsf{out}}(S)$ be the number of edges with exactly one endpoint in $S$. By Lemma~\ref{lem:next}, whp we conclude the crude bound $e_{\mathsf{in}}(S)\leq 2|S|= 2k$ for all connected sets $S$ with size $k$ (since for such sets $S$ the tree excess of $G[S]$ is bounded by an absolute constant and $|S|=k=\Omega(\log n)$).   We also have that $\Eb[e_{\mathsf{out}}(S)]\leq d k$, so by Lemma~\ref{lem:chernoff},
\[\Pr\big(e_{\mathsf{out}}(S)\geq \tfrac{1}{2} k M\Delta\big)\leq \emm^{ -\tfrac{1}{4} k M\Delta\log \tfrac{\Delta M}{4d}}\]
Since $M\geq 4d, \delta<1/\emm^2$ and $\Delta=1/(\delta \log \tfrac{1}{\delta})$, we have that $\log \tfrac{ \Delta M}{4d}\geq \log \Delta\geq \tfrac{1}{2}\log(1/\delta)$. Therefore, since  $k\geq \delta \log n$ and  $M\geq 50(1+\log d)$, we have that
\[\tfrac{1}{4} k M\Delta\log \tfrac{ M\Delta}{4d}\geq 3(1+\log d)\log n.\]
Using that the number of labelled trees on a vertex set of size $k$ is $k^{k-2}$, we have that the expected number of trees  with size $k$ in $G(n,d/n)$  is $\binom{n}{k}k^{k-2}(d/n)^{k}\leq n(\emm d)^{k}$, and hence there are in expectation at most $n(\emm d)^k$ connected sets $S$ with size $k$. Note also that the event $e_{\mathsf{out}}(S)\geq \tfrac{1}{2} k M\Delta$ is independent of the event that $S$ is connected. Therefore, by linearity of expectation we obtain that the number of   connected sets $S$ with $|S|=k$ and $e_{\mathsf{out}}(S)\geq \tfrac{1}{2} k M\Delta$ is at most
$n(\emm d)^k\emm^{-3(1+\log d)\log n}\leq n (\emm d)^{\log n}\emm^{-3(1+\log d)\log n}\leq 1/n$,
and hence whp there are no such sets by Markov's inequality.
This finishes the proof of the lemma since any 
other connected set $S$ of size $k$ satisfies 
$\deg_G(S) \leq 2 e_{\mathsf{in}}(S) + e_{\mathsf{out}}(S) \leq 4 k + \tfrac12 k M \Delta \leq M \Delta k$.
\end{proof}
\begin{lempercolation}
\statelempercolation
\end{lempercolation}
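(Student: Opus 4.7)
The plan is to reduce Lemma~\ref{lem:percolation} to the following \emph{degree-bound claim}: there exists a constant $M=M(d)$ such that, whp over $G\sim G(n,d/n)$, every connected set $S\subseteq V$ with $|S|=k\geq \delta\log n$ satisfies $\deg_G(S)\leq M\Delta k$. Given this claim, set $L=2M+1$. Part~(a) follows from Lemma~\ref{lem:connected}: the number of connected sets $S\ni v$ with $\deg_G(S)\leq M\Delta k$ is at most $\sum_{\ell=0}^{M\Delta k}(2\emm)^{2\ell-1}\leq (2\emm)^{2M\Delta k+2}\leq (2\emm)^{L\Delta k}$, since $\Delta k\geq \emm$ for $k\geq 1$. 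Part~(b) is a pigeonhole: if at least $k/2$ vertices of $S$ had degree $\geq L\Delta$, then $\deg_G(S)\geq (k/2)L\Delta=M\Delta k+\Delta k/2>M\Delta k$, contradicting the claim; hence at least $k/2$ vertices of $S$ have degree $<L\Delta$.

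For the degree-bound claim itself, I extend the proof of Lemma~\ref{lem:FR} from the single value $|S|=\lceil\delta\log n\rceil$ to all $|S|\geq \delta\log n$. Decompose $\deg_G(S)=2e_{\mathrm{in}}(S)+e_{\mathrm{out}}(S)$ and note that $e_{\mathrm{out}}(S)$ is independent of the edges inside $S$, hence of whether $G[S]$ is connected. For the $e_{\mathrm{out}}$-contribution, Chernoff (Lemma~\ref{lem:chernoff}) with $t=M\Delta/(2d)$ yields $\Pr(e_{\mathrm{out}}(S)>M\Delta k/2)\leq \exp(-C_\Delta k)$ where $C_\Delta=\tfrac{1}{4}M\Delta\log(M\Delta/(2d))$. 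Multiplying by $\Pr(G[S]\text{ connected})\leq k^{k-2}(d/n)^{k-1}$ (Cayley's bound), summing over $|S|=k$ and then over $k\geq \delta\log n$, the bookkeeping is identical to that in the proof of Lemma~\ref{lem:FR}: the identity $\delta\Delta=1/\log(1/\delta)$ ensures $\delta C_\Delta$ is bounded below by a constant depending only on $M,d$, so the total is $o(1)$ for $M$ large.

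For the $e_{\mathrm{in}}$-contribution I split by the size of $k$. For $\delta\log n\leq k\leq K\log n$ (with $K$ a sufficiently large constant), Lemma~\ref{lem:next} bounds the tree excess of any connected $S$ of size $\leq K\log n$ by a constant, giving $e_{\mathrm{in}}(S)\leq 2k\leq M\Delta k/4$ for $M\Delta\geq 8$. For $K\log n<k\leq cn$ (with $c>0$ small), I enumerate a spanning tree of $G[S]$ together with $r\geq M\Delta k/4-(k-1)$ excess edges inside $S$, bounding the expected number of bad sets by $\binom{n}{k}k^{k-2}(d/n)^{k-1}\binom{\binom{k}{2}}{r}(d/n)^r$; estimating $\binom{\binom{k}{2}}{r}\leq (ek^2/(2r))^r$ and using $4ekd/(M\Delta n)\leq 1/2$ in this range shows the sum decays exponentially in $k$ for $M$ large, and summing over $k$ gives $o(1)$. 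Finally, for $k=\Omega(n)$, the global whp bound $2|E(G)|\leq (2d+o(1))n=O(k)$ yields $\deg_G(S)\leq M\Delta k$ directly for $M$ large enough.

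The main technical obstacle is the intermediate regime $K\log n<k=o(n)$ in the $e_{\mathrm{in}}$-analysis, where Lemma~\ref{lem:next} no longer bounds the tree excess by a constant and a bare Chernoff bound on $e_{\mathrm{in}}(S)$ cannot survive the $\binom{n}{k}$ union bound uniformly in~$k$. The resolution is to exploit the connectedness of~$S$ via the spanning-tree-plus-excess-edges enumeration: the factor $k^{k-2}(d/n)^{k-1}$ contributed by the spanning tree is precisely what is needed to compensate the $\binom{n}{k}$ factor, and combined with the rapid decay from enumerating the $r$ excess edges makes the first-moment argument work uniformly in $k$ for $M$ sufficiently large (depending only on $d$).
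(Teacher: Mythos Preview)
Your proposal is correct, and the reduction to the degree-bound claim together with the deduction of parts~(a) and~(b) via Lemma~\ref{lem:connected} and pigeonhole are exactly as in the paper. The difference lies in how the degree-bound claim is extended from the single size $k_0=\lceil\delta\log n\rceil$ (which is all that Lemma~\ref{lem:FR} provides) to all $k\geq\delta\log n$.

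The paper avoids your three-regime first-moment argument entirely by a short combinatorial trick: given any connected $S$ with $|S|=k\geq k_0$, take a spanning tree of $G[S]$, double its edges to obtain an Eulerian tour, and traverse the tour to extract at most $2k/k_0$ connected subsets each of size exactly $k_0$ that together cover $S$. Applying Lemma~\ref{lem:FR} to each piece gives $\deg_G(S)\leq (2k/k_0)\cdot M\Delta k_0=2M\Delta k$, and then one sets $L=4M$. This reduces the whole extension to a single paragraph with no further probabilistic estimates. Your approach, by contrast, reruns the first-moment machinery across three ranges of $k$; the bookkeeping (choice of $M$, then $K$, then $c$, with the check that the constraints close non-circularly) is sound and the constants depend only on~$d$, but it is substantially longer. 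What you gain is that the argument is entirely self-contained from first-moment principles and does not rely on the tour-decomposition observation; what the paper gains is brevity and a single application of the union bound at size~$k_0$.
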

\begin{proof}
Let $M$ be as in Lemma~\ref{lem:FR}, and let $L=4M$. Consider arbitrary $\delta\in(0,1)$.  By Lemma~\ref{lem:FR}, we have whp that 
every connected set  $S$  of vertices in $G$ with $|S|= r:=\lceil\delta\log n\rceil$ has in total at most $M\Delta|S|$ edges incident to it, where $\Delta=1/(\delta\log\tfrac{1}{\delta})$. We claim that for all connected sets $S$ with $|S|\geq r$, it holds that $\mathrm{deg}_G(S)\leq 2M\Delta|S|$. Indeed,  we can decompose any connected set $S$ into at most $t\leq 2|S|/r$ connected sets (not necessarily disjoint) $S_1,\hdots, S_t$, each of size $k$.\footnote{One way to do this is to consider a spanning tree of $S$, double its edges, and obtain an Eulerian tour of the resulting graph. The desired decomposition of $S$ can then be obtained by traversing the tour and extracting connected sets with $k$ vertices.} From this, it follows that \[\mathrm{deg}_G(S)\leq \sum^t_{i=1}\mathrm{deg}_G(S_i)\leq t M\Delta r\leq 2M\Delta|S|,\]
as claimed. The bound on the number connected sets $S$ with $|S|=k\geq \delta\log n$ containing $v$ follows by applying Lemma~\ref{lem:connected}, by  aggregating over the possible values of $\deg_G(S)$, which can be at most $2M\Delta k$. Moreover, the number of vertices in such a set $S$ with degree $\geq L\Delta$ has to be at most $k/2$ (otherwise, $\deg_G(S)> L\Delta (k/2)=2M\Delta k$). This finishes the proof.
\end{proof}

\begin{lemtime}
\statelemtime
\end{lemtime}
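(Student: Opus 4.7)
The plan is to derive Lemma~\ref{lem:time} from the previously established Lemmas~\ref{lem:next} and~\ref{lem:FR}. The key observation is that any connected component $C$ of $G[V\setminus U]$ is a connected subset of $G$ consisting entirely of vertices whose $G$-degree exceeds $D$, so $\deg_G(C) > D|C|$. On the other hand, Lemma~\ref{lem:FR} provides a universal constant $M_0$ (depending only on $d$) upper-bounding the degree-sum of every connected set of size $\lceil \delta\log n\rceil$ by $M_0\Delta|S|$, where $\Delta = 1/(\delta\log(1/\delta))$. Choosing $D$ comfortably larger than $2M_0\Delta$ will force $|C| < \delta\log n$.

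Concretely, I fix $\delta = 1/\emm$ (so that $\Delta = \emm$), take $M_0$ to be the constant supplied by Lemma~\ref{lem:FR}, and set $D := \lceil 2M_0 \Delta\rceil + 1$. By Lemma~\ref{lem:FR}, whp over $G\sim G(n,d/n)$ every connected set $S$ of $G$ with $|S|=\lceil \delta\log n\rceil$ satisfies $\deg_G(S) \leq M_0\Delta|S|$. Reusing the Eulerian-tour decomposition from the proof of Lemma~\ref{lem:percolation}, which splits any connected set $S$ with $|S|\geq r$ into at most $2|S|/r$ connected pieces of size $r$ (applied with $r = \lceil \delta\log n\rceil$), this upgrades to $\deg_G(S) \leq 2M_0\Delta|S|$ for every connected $S$ with $|S| \geq \delta\log n$. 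Now suppose for contradiction that some component $C$ of $G[V\setminus U]$ has $|C|\geq \delta\log n$. Since $G[V\setminus U]$ is a subgraph of $G$, the set $C$ is also connected in $G$; since every vertex of $C$ has $G$-degree greater than $D$, we obtain $\deg_G(C) > D|C| \geq (2M_0\Delta + 1)|C|$, contradicting the preceding bound. Therefore whp every component of $G[V\setminus U]$ has size strictly less than $\delta\log n = O(\log n)$.

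For the tree-excess bound, every component of $G[V\setminus U]$ is in particular a connected subset of $G$ of size at most $\delta\log n$, so I invoke Lemma~\ref{lem:next} with its constant $M$-parameter set to $\delta$: this yields a constant $\ell>0$ such that whp no connected subset of $G$ of size at most $\delta\log n$ has tree-excess exceeding $\ell$. Combined with the first step, every component of $G[V\setminus U]$ has tree-excess at most $\ell$, which finishes the argument. The only delicate point is arranging the constants so that $D > 2M_0\Delta$, enabling the contradiction; once this is done the result reduces to a direct invocation of the prior random-graph lemmas.
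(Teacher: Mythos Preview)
Your proof is correct and follows essentially the same approach as the paper's. The paper invokes Lemma~\ref{lem:percolation} (with $\delta=1/\emm$, so $\Delta=\emm$) to conclude that every connected set of size $\geq \tfrac{1}{\emm}\log n$ contains at least half its vertices of degree $<L\Delta$, which immediately rules out large components in $G[V\backslash U]$; you instead bypass the packaged statement of Lemma~\ref{lem:percolation} and reproduce exactly the degree-sum contradiction from its proof, applying Lemma~\ref{lem:FR} plus the Eulerian-tour decomposition directly. Both arguments then finish with Lemma~\ref{lem:next} in the same way.
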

\begin{proof}
 Let $D=\emm L$, where $L>0$ is the constant from Lemma~\ref{lem:percolation}. Let $\ell$ be the integer of Lemma~\ref{lem:next} corresponding to $M=1/\emm$.
 
 By Lemma~\ref{lem:percolation} applied to $\delta=1/\emm$, whp over the choice of $G$,  every connected set $S$ with size $k\geq \tfrac{1}{\emm}\log n$ has at least $k/2$ vertices with degree $\leq D=\emm L$. Therefore, with $U$ being the set of vertices with degree $\leq D$, we obtain that the components of $G[V\backslash U]$ have size at most $\tfrac{1}{\emm}\log n$. Moreover, by Lemma~\ref{lem:next}, whp over the choice of $G$, all these components have tree excess $\leq \ell$.
\end{proof}

\subsection{Verifying the random graph properties efficiently}\label{sec:verify}
Here, we briefly discuss how to verify in time $n^{1+o(1)}$ that a random graph $G\sim G(n,d/n)$ satisfies the high-probability properties of Lemmas~\ref{lem:time},~\ref{lem:branch} and~\ref{lem:percolation}. The property in Lemma~\ref{lem:time} is immediate since we only need to do an exploration of the graph $G[V\backslash U]$. To ensure that the property of Lemma~\ref{lem:branch} is satisfied, and following its proof, we only need to check the property in Lemma~\ref{lem:treelike}, and that for $R=\lfloor (\log \log n)^2\rfloor-1$ and all $v\in V$ it holds that  $\sum^{R}_{r=0}N_{v,r}/d^r=o(\log n)$. Both of these properties can be checked via enumeration in time $n^{1+o(1)}$. Similarly for the property in Lemma~\ref{lem:percolation}, whose proof used only the property in Lemma~\ref{lem:FR} (which is verifiable in time $n^{1+o(1)}$).

\begin{remark}\label{remark}
We claimed in the Introduction that
the family of $O(n^{1+\theta})$ algorithms from Theorem~\ref{thm:main} can be turned into an $n^{1+o(1)}$ algorithm. We now explain how to do this for the interested reader.

As noted after the statement of Theorem~\ref{thm:main},
there is a function $f_{d,\lambda,\theta}:{\mathbb{Z}}\rightarrow {\mathbb{R}}$ such that $\lim_{n\rightarrow\infty} f_{d,\lambda,\theta}(n) = 0$ and the ``whp'' 
bound in Theorems~\ref{thm:main}, \ref{thm:main1}, and \ref{thm:main2}
means with probability $\geq 1 - f_{d,\lambda,\theta}(n)$; the function equals 1 for small $n$ (making the conclusion trivial for such $n$). 

To understand the function $f_{d,\lambda,\theta}(n)$, 
we need to look at the whp bounds in the lemmas that we use.
The whp bound functions are $\geq 1-1/n$ for sufficiently large $n$ where
sufficiently large is (ultimately) a simple function of $d,\lambda, \theta$ (for example, Lemma~\ref{lem:time},~\ref{lem:percolation}, \ref{lem:next}, and~\ref{lem:FR}). Lemma~\ref{lem:good} (and as a consequence Lemma~\ref{lem:branch}) 
has a more wild dependence on the parameter  $\varepsilon$: the function $F(t)$ inside the proof can be bounded from above by a tower of exponentials of depth $log_d(t)$. The whp bound is then $\geq 1-1/n$ assuming 
$n \geq \exp ( F(1+5/\varepsilon) / \varepsilon )$; this whp bound propagates to Lemma~\ref{lem:hard1tensor}, then 
to Lemma~\ref{lem:fastone} and finally to Theorems~\ref{thm:main}, \ref{thm:main1}, and \ref{thm:main2}.

For fixed $d$ and $\lambda$, the dependence of $\varepsilon$ on $\theta$ (used in Lemma~\ref{lem:good})  can be extracted from Corollary \ref{cor:hardtensor} and Lemma~\ref{lem:hard1tensor}, where $\epsilon$ (controlling the leading constant in the branching value) needs to be small enough to compensate the leading constant from the spectral-independence bound $\eta$; in particular, $\epsilon$ scales roughly as $1/\emm^{\emm^{C/\theta}}$, where $C$ is a constant depending only on $d$ and $\lambda$. It follows that we can set $f_{d,\lambda,\theta}(n)=1/n$ when $n \geq N_{d,\lambda}(\theta)$ and 1 otherwise, where $N_{d,\lambda}(\theta)$ is a computable function satisfying $n\geq N_{d,\lambda}(1/k)$ for some function $k=k(n)=\omega(1)$ (for example, it suffices to take $k(n)=\lfloor \hat{C}\log^{(3)}\log^*\log^{(2)} n)\rfloor$ where $\hat{C}$ is a constant depending only on $d$ and $\lambda$).

The $n^{1+o(1)}$ algorithm then proceeds as follows. Given the input size~$n$,
it first computes  $k=k(n)$ as above in $O(\log n)$ time (say). Then, by the definition of $k(n)$, we have $n\geq N_{d,\lambda}(1/k)$ for all sufficiently large $n$. The new algorithm runs the algorithm of Theorem~\ref{thm:main} with $\theta=1/k$
in time $n^{1+1/k} \log \tfrac{1}{\epsilon} = 
n^{1+o(1)} \log \tfrac{1}{\epsilon}$, which  succeeds with probability $\geq 1-1/n=1-o(1)$ over the choice of $G(n,d/n)$. 
\end{remark}
 

\bibliography{references}

\newpage

\appendix

\section{Other omitted proofs}\label{sec:omitted}

\begin{proof}[Proof of Theorem~\ref{thm:Optimal}]
The theorem follows by combining \cite[Claim 1.13, Theorem 1.14, Lemma 2.6]{Optimal}, which bound the factorisation multiplier of the more general order-$(r,s)$ down-up walk; the result here is the special case $s=n$. The only difference  is that in \cite{Optimal} they state the $r$-uniform-block multiplier $\hat{C}_r=\frac{r}{n}\frac{\sum^{n-1}_{k=0}\hat{\Gamma}_k}{\sum^{n-1}_{k=n-r}\hat{\Gamma}_k}$, with $\hat{\Gamma}_k=\prod^{k-1}_{j=0}\hat{\alpha}_j$ where 
\[\hat{\alpha}_k= \max\Big\{1-\frac{4\eta_k}{b^2},\frac{1-\eta_k}{4+2\log(\frac{1}{2b_kb_{k+1}})}\Big\},\]
with $\eta_k=\tfrac{\eta}{n-1-k}$ for $k\in [n-1]$, $b_k=\tfrac{b}{n-k}$ for $k\in [n]$. The only thing we need to note, which is also implicit in \cite[Proof of Lemma~2.4]{Optimal}, is that $\hat{C}_r\leq C_r$. In particular, for each $k\in [n]$,  $C_r$ is decreasing with respect to $\alpha_k$ (both numerator and denominator are multi-linear functions and the value of $C_r$ increases as $\alpha_k\downarrow 0$), so we only need to check that  $\hat{\alpha}_k\geq \alpha_k$. For this, we need to further note that $\eta_k$ is an upper bound to the so-called local spectral expansion which is defined as the second largest eigenvalue of the transition matrix of an appropriate random walk (on a suitable simplicial complex corresponding to $\mu$; the details are not important for our purposes). This implies that $\eta_k\leq 1$, and hence $\hat{\alpha}_k\geq 0$ since $\hat{\alpha}_k\geq \frac{1-\eta_k}{4+2\log(\frac{1}{2b_kb_{k+1}})}$, proving that $\hat{\alpha}_k\geq \alpha_k$.
\end{proof}

\begin{proof}[Proof of Corollary~\ref{cor:hardtensor}]
The proof is analogous to \cite[Proof of Lemma 2.4]{Optimal}, we highlight the main
differences for completeness (since the context is somewhat different). Let
$D,\theta>0$ be arbitrary constants; we may assume that $D\geq 1$ (otherwise $U$ is
empty) and $\theta\in (0,1]$ (otherwise there are no relevant $r$). Note that since
$U$ is the set of vertices with degree $\leq D$, we have from
Corollary~\ref{lem:hardbmarginal} that the marginal distribution $\mu_{G,\lambda, U}$
is $b$-marginally bounded for $b=\min\{\frac{1}{1+\lambda},\frac{\lambda}{\lambda+(1+\lambda)^D}\}$. Let $u=|U|$,
by a standard balls-and-bins bound we have that whp $u\geq  n^{1/2}$.  By
Lemma~\ref{lem:hardspectral}, we have that whp $\mu:=\mu_{G,\lambda, U}$ is
$\eta$-spectrally independent for $\eta=\tfrac{b^2{\theta}^2}{10}\log n$.   Let
$R=\left\lceil 4\eta/b^2\right\rceil$ and note that for all sufficiently large $n$,
$\theta u \geq \theta^2 \log n\geq 4R$.

Consider an arbitrary integer $r\in [\theta u, u]$ as in the lemma statement so that $u\geq r\geq 2R$. From Theorem~\ref{thm:Optimal}, we therefore have that  $\mu$ satisfies the $r$-uniform-block  factorisation of entropy with   multiplier $C_r=\displaystyle \frac{r}{u} \frac{\sum^{u-1}_{k=0}\Gamma_k}{\sum^{u-1}_{k=u-r}\Gamma_k}\leq \frac{\sum^{u-1}_{k=0}\Gamma_k}{\sum^{u-1}_{k=u-r}\Gamma_k}$, with $\Gamma_k=\prod^{k-1}_{j=0}\alpha_j$ for $k\in [u]$ and $\alpha_k=\max\big\{0,1-\frac{4\eta}{b^2(u-k-1)}\big\}$ for $k\in [u-1]$. Then, we can apply the exact same reasoning as in \cite{Optimal} to obtain that 
\[C_r\leq \Big(\frac{u-R}{r-R}\Big)^{R}\leq \Big(\frac{2u}{r}\Big)^{R}\leq \Big(\frac{2}{\theta }\Big)^{R}.\]
Note that for $x\in (0,1)$ we have the inequality $(1/x)^{x^2}\leq \emm^x$, which for $x=\theta/2$ and using that $R\leq (\tfrac{\theta}{2})^2\log n$ gives that $C_r\leq n^\theta$, as claimed.
\end{proof}

\end{document}